\newtheorem{theorem}{Theorem}
\newtheorem{lemma}[theorem]{Lemma}
\newtheorem{proposition}[theorem]{Proposition}
\newtheorem{fact}[theorem]{Fact}
\newtheorem{corollary}[theorem]{Corollary}
\newtheorem{remark}{Remark}
\theoremstyle{definition}
\theoremstyle{remark}
\newtheorem{claim}{Claim}
\newtheorem{case}{Case}
\newcommand{\tj}{\textsc{Token Jumping Reconfiguration}\xspace}
\newcommand{\ts}{\textsc{Token Sliding Reconfiguration}\xspace}
\newcommand{\MR}{\textsc{Matching Reconfiguration}\xspace}
\newcommand{\MDIST}{\textsc{Matching Distance}\xspace}
\newcommand{\EMDIST}{\textsc{Exact Matching Distance}\xspace}
\newcommand{\EMDIAM}{\textsc{Exact Matching Diameter}\xspace}
\newcommand{\DST}{\textsc{Directed Steiner Tree}\xspace}
\newcommand{\setcover}{\textsc{Set Cover}\xspace}
\newcommand{\sets}{\ensuremath{\mathcal{S}}\xspace}
\newcommand{\opt}{\ensuremath{\operatorname{OPT}}\xspace}
\newcommand{\symdiff}{\ensuremath{\mathop{\bigtriangleup}}}
\newcommand*{\nat}{\ensuremath{\mathbb{N}}} % natural numbers
\newcommand{\matchgraph}[1][k]{\ensuremath{\mathcal{M}_{#1}}}
\newcommand{\dist}{\ensuremath{\operatorname{dist}}}
\newcommand{\diam}{\ensuremath{\operatorname{diam}}}
\newcommand{\ini}{s}
\newcommand{\tar}{t}
\newcommand{\yes}{\textsc{Yes}\xspace}
\newcommand{\no}{\textsc{No}\xspace}
\title{Shortest Reconfiguration of Matchings}
\author[1]{Nicolas Bousquet}
\author[2]{Tatsuhiko Hatanaka}
\author[2]{Takehiro Ito}
\author[3]{Moritz M\"uhlenthaler}
\affil[1]{CNRS, Laboratoire G-SCOP, Grenoble-INP, Univ. Grenoble-Alpes, Grenoble, France}
\affil[2]{Graduate School of Information Sciences, Tohoku University, Japan}
\affil[3]{Fakult\"at f\"ur Mathematik, TU Dortmund University, Germany}
\begin{document}

\maketitle

\begin{abstract}
  Imagine that unlabelled tokens are placed on the edges of a graph, such that
  no two tokens are placed on incident edges. A token can jump to another
  edge if the edges having tokens
  remain independent. We study the problem of determining the \emph{distance}
  between two token configurations (resp., the corresponding matchings), which
  is given by the length of a shortest transformation. 
  We give a polynomial-time
  algorithm for the case that at least one of the two configurations
  is not inclusion-wise maximal and show that otherwise, the problem admits no
  polynomial-time sublogarithmic-factor approximation unless $\P =
  \NP$. Furthermore, we show that the distance of two configurations in
  bipartite graphs is fixed-parameter tractable parameterized by the size $d$
  of the symmetric difference of the source and target configurations, and
  obtain a $d^\varepsilon$-factor approximation algorithm  for every
  $\varepsilon > 0$ if additionally the configurations correspond to maximum
  matchings. Our two main technical tools are the Edmonds-Gallai decomposition
  and a close relation to the \textsc{Directed Steiner Tree}
  problem. 
  Using the former, we also characterize those graphs whose
  corresponding configuration graphs are connected. 
  Finally, we show that deciding if the distance between two configurations is
  equal to a given number $\ell$ is complete for the class $\D^\P$, and
  deciding if the diameter of the graph of configurations is equal to $\ell$ is
  $\D^\P$-hard.
\end{abstract}
\clearpage

\section{Introduction}

A reconfiguration problem asks for the existence of a step-by-step
transformation between two given configurations, where in each step we apply
some simple modification to the current configuration. The set of
configurations may for instance be the set of
$k$-colorings~\cite{BonamyB13,Feghali0P15} or independent
sets~\cite{Ito:11,KaminskiMM12,LM:18} of a graph, or the set of satisfying
assignments of Boolean formulas~\cite{Gopalan09}. A suitable modification may
for example alter the color of a single vertex, or the truth value of a
variable in a satisfying assignment.  For a survey on reconfiguration problems,
the reader is referred to~\cite{vHeuvel13} or~\cite{Nishimura17}.

Recently, there has been considerable interest in the complexity of finding
\emph{shortest} transformations between configurations. Examples include finding
a shortest transformation between triangulations of planar point
sets~\cite{Pilz:14} and simple polygons~\cite{Aichholzer:15}, configurations of
the Rubik's cube~\cite{DER:17}, and satisfying assignments of Boolean
formulas~\cite{Mouawad:17}. For all of these problems, except the last one, we can 
decide efficiently if a transformation between two given configurations exists.
However, deciding if there is a transformation of at most a given length is
\NP-complete. In particular, the flip distance of triangulations of planar
point sets is known to be \APX-hard~\cite{Pilz:14} and, on the positive side,
fixed-parameter tractable (FPT) in the length of the
transformation~\cite{Li:17}. Our reference problem is the task of computing the
length of a shortest transformation between matchings of a graph. We show that
even in a very restricted setting the problem admits no $o(\log n)$-factor
approximation  unless $\P = \NP$ and we give polynomial-time algorithms in some
special cases. Furthermore, we show that the problem is FPT in the size of the
symmetric difference of the two given configurations, which implies that it is
also FPT in the length of the transformation.

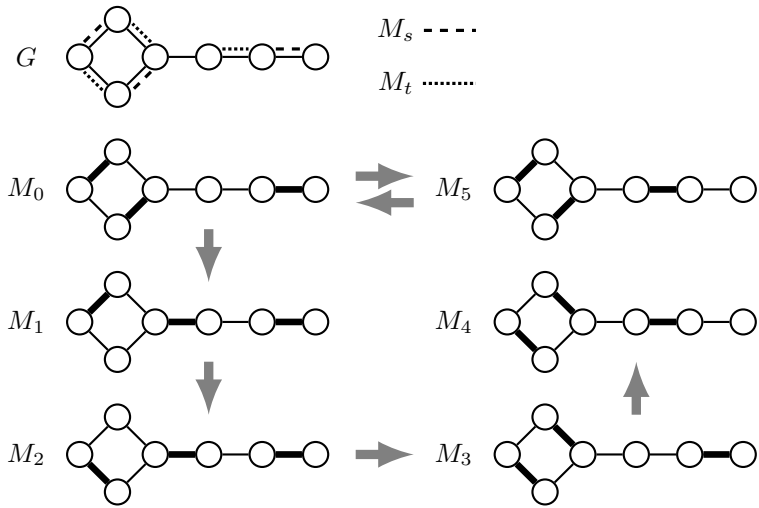
\begin{figure}[t]
	\centering
	\begin{tikzpicture}[vertex/.style={shape=circle,thick,draw,node distance=2em},copyv/.style={shape=circle,thick,draw,node distance=2em},medge/.style={very thick,dashed},nedge/.style={very thick,densely dotted},token/.style={line width=2.5pt},arrow/.style={-latex,line width=3.5pt,color=gray}]
	\node[vertex]			(c1)	{};
	\node[vertex,below left of=c1]	(c2)	{};
	\node[vertex,above left of=c2]	(c3)	{};
	\node[vertex,above right of=c3]	(c4)	{};
	
	\node[vertex,right of=c1] (p1) {};
	\node[vertex,right of=p1] (p2) {};
	\node[vertex,right of=p2] (p3) {};
	
	\node (G) at ($(c3)+(-2em,0)$) {$G$};
	\node (M) at ($(p3)+(3em,1em)$) {$M_\ini$};
	\node (N) at ($(p3)+(3em,-1em)$) {$M_\tar$};
	\draw[medge] (M) -- ++(3em,0);
	\draw[nedge] (N) -- ++(3em,0);
	
	% draw edges
	\foreach \s in {1,...,4} {
		\pgfmathparse{int(Mod(\s,4)+1)}
		\edef\nxt{\pgfmathresult}
		\draw[thick] (c\s) -- (c\nxt);
	}
	\draw[thick] (c1) -- (p1) -- (p2) -- (p3);
	
	% draw matching edges
	\draw[medge] ([xshift=2pt,yshift=-2pt]c1.south west) -- ([xshift=2pt,yshift=-2pt]c2.north east);
	\draw[medge] ([xshift=-2pt,yshift=2pt]c3.north east) -- ([xshift=-2pt,yshift=2pt]c4.south west);
	\draw[nedge] ([xshift=2pt,yshift=2pt]c1.north west)  -- ([xshift=2pt,yshift=2pt]c4.south east);
	\draw[nedge] ([xshift=-2pt,yshift=-2pt]c2.north west) -- ([xshift=-2pt,yshift=-2pt]c3.south east);
	\draw[medge] ([yshift=3pt]p2.east) -- ([yshift=3pt]p3.west);
	\draw[nedge] ([yshift=3pt]p1.east) -- ([yshift=3pt]p2.west);
	
	% reconfiguration sequence
	\foreach \t in {0,1,2}{
		\node[copyv]	at ($(c1)+(0,-5em)+5*(0,-\t em)$) (c\t1) {};
		\node[copyv,below left of=c\t1]	(c\t2)	{};
		\node[copyv,above left of=c\t2]	(c\t3)	{};
		\node[copyv,above right of=c\t3]	(c\t4)	{};
		
		\node[copyv,right of=c\t1] (p\t1) {};
		\node[copyv,right of=p\t1] (p\t2) {};
		\node[copyv,right of=p\t2] (p\t3) {};
		
		\node (M\t) at ($(c\t3)+(-2em,0)$) {$M_{\t}$};
		
		\foreach \s in {1,...,4} {
			\pgfmathparse{int(Mod(\s,4)+1)}
			\edef\nxt{\pgfmathresult}
			\draw[thick] (c\t\s) -- (c\t\nxt);
		}
		\draw[thick] (c\t1) -- (p\t1) -- (p\t2) -- (p\t3);
	}
	% reconfiguration sequence
	\foreach \t in {3,4,5}{
		\node[copyv]	at ($(p23)+(0,-15em)+5*(2em,\t em)$) (c\t1) {};
		\node[copyv,below left of=c\t1]	(c\t2)	{};
		\node[copyv,above left of=c\t2]	(c\t3)	{};
		\node[copyv,above right of=c\t3]	(c\t4)	{};
		
		\node[copyv,right of=c\t1] (p\t1) {};
		\node[copyv,right of=p\t1] (p\t2) {};
		\node[copyv,right of=p\t2] (p\t3) {};
		
		\node (M\t) at ($(c\t3)+(-2em,0)$) {$M_{\t}$};
		
		\foreach \s in {1,...,4} {
			\pgfmathparse{int(Mod(\s,4)+1)}
			\edef\nxt{\pgfmathresult}
			\draw[thick] (c\t\s) -- (c\t\nxt);
		}
		\draw[thick] (c\t1) -- (p\t1) -- (p\t2) -- (p\t3);
	}

	% M_0
	\draw[token] (c01) -- (c02);
	\draw[token] (c03) -- (c04);
	\draw[token] (p02) -- (p03);
	% M_1
	\draw[token] (c11) -- (p11);
	\draw[token] (c13) -- (c14);
	\draw[token] (p12) -- (p13);
	% M_2
	\draw[token] (c21) -- (p21);
	\draw[token] (c23) -- (c22);
	\draw[token] (p22) -- (p23);
	% M_3
	\draw[token] (c31) -- (c34);
	\draw[token] (c33) -- (c32);
	\draw[token] (p32) -- (p33);
	% M_4
	\draw[token] (c41) -- (c44);
	\draw[token] (c43) -- (c42);
	\draw[token] (p42) -- (p41);
	% M_5
	\draw[token] (c51) -- (c52);
	\draw[token] (c53) -- (c54);
	\draw[token] (p52) -- (p51);
	
	% sequence
	\draw[arrow] ($(p01)+(-90:1.5em)$) -- ($(p11)+(90:1.5em)$);
	\draw[arrow] ($(p11)+(-90:1.5em)$) -- ($(p21)+(90:1.5em)$);
	\draw[arrow] ($(p23)+(0:1.5em)$) -- ($(M3)+(180:1.5em)$);
	\draw[arrow] ($(p31)+(90:1.5em)$) -- ($(p41)+(-90:1.5em)$);
	\draw[arrow] ($(p03)+(0:1.5em)+(0,5pt)$) -- ($(M5)+(180:1.5em)+(0,5pt)$);
	\draw[arrow] ($(M5)+(180:1.5em)+(0,-5pt)$) -- ($(p03)+(0:1.5em)+(0,-5pt)$);
	
	\end{tikzpicture}
	\caption{A reconfiguration sequence $M_\ini = M_0, M_1, \ldots, M_4 = M_\tar$ of matchings in a graph $G$.}
	\vspace{-1em}
	\label{fig:example}
\end{figure}

\paragraph*{Reconfiguration of matchings.}

Deciding if there is a transformation between two matchings of a graph is known
as an early example of a reconfiguration problem that admits a non-trivial 
polynomial-time algorithm~\cite{Ito:11}. Recall that a \emph{matching} $M$ of
a graph is a set
of pairwise independent edges. (\figurename~\ref{fig:example} shows the six
different matchings of the graph $G$.)
We may consider a matching as a placement of (unlabeled) \emph{tokens} on independent edges: 
Then, the \emph{Token Jumping} (\emph{TJ}) operation provides an
adjacency relation on the set of matchings of a graph, all having the same 
cardinality\footnote{There is another well-studied operation,
called Token Sliding (TS), for reconfiguration of subgraphs having the same
cardinality. In this paper, we employ TJ as the default operation. However,
some of our results apply also to TS, because TJ and TS are equivalent for
maximum-cardinality matchings.}: 
Two matchings $M$ and $M'$ of a graph
$G$ are \emph{adjacent} (under TJ) if one can be obtained from the other by
relocating a single token, that is, if $|M \setminus M'| = 1$ and $|M'\setminus
M|=1$.
We say that a sequence $M_0,M_1,\ldots,M_\ell$ of matchings of $G$ is a
\emph{reconfiguration sequence} of \emph{length} $\ell$ from $M$ to $M'$, if
$M_0 = M$, $M_\ell = M'$, and $M_{i-1}$ and $M_i$ are adjacent for each $i$, $1
\leq i \leq \ell$.  (See the sequence $M_0, M_1, \ldots, M_4$ in
\figurename~\ref{fig:example} as an example.) The following question is often
referred to as the \emph{reachability variant} of the matching reconfiguration
problem:
\begin{quote}
  \MR \\
  \textbf{Input:} Graph $G$ and two matchings $M_\ini,M_\tar$ of $G$.\\
  \textbf{Question:} Is there a reconfiguration sequence from $M_\ini$ to $M_\tar$?
\end{quote}

For \yes-instances of the above problem the polynomial-time algorithm given
in~\cite{Ito:11} gives a bound of $O(n^2)$ on the length of a transformation.
The \emph{distance} between two matchings is the length of a shortest
transformation between them (under TJ).  If there is no transformation between
two matchings, we regard their distance as infinity. In this paper we study
the complexity of the following optimization problem related to matching
reconfiguration, which is also referred to as the \emph{shortest variant}. 
\begin{quote}
  \MDIST \\
  \textbf{Input:} Graph $G$ and two matchings $M_\ini,M_\tar$ of $G$.\\
  \textbf{Task:} Compute the distance between $M_\ini$ and $M_\tar$.
\end{quote}

We also study two related \emph{exact} problems.  The first is the exact
version of \MDIST, which takes as input also the supposed distance $\ell$ of
the given matchings.
\begin{quote}
  \EMDIST \\
  \textbf{Input:} Graph $G$, matchings $M_\ini,M_\tar$ of $G$, and number $\ell \in \nat$.\\
  \textbf{Question:} Is $\ell$ equal to the distance between $M_\ini$ and $M_\tar$?
\end{quote}
The second decides the maximum distance (\emph{diameter}) of any two matchings of a given cardinality $k$ in a graph. 
\begin{quote}
  \EMDIAM \\
  \textbf{Input:} Graph $G$ and numbers $k, \ell \in \nat$.\\
  \textbf{Question:} Is $\ell$ equal to the maximum distance between any two matchings of cardinality $k$ of $G$?
\end{quote}

\paragraph*{Related results.}

Despite recent intensive studies on reconfiguration problems (see, e.g., a
survey~\cite{Nishimura17}), most of known algorithmic (positive) results are
obtained for reachability variants.  However, they sometimes give answers to
their shortest variants: if the algorithm constructs an actual reconfiguration
sequence which, at any step, transforms an edge of the initial matching into an
edge of the target one, then the sequence is indeed a shortest one.

Generally speaking, finding shortest transformations is much more difficult if
we need a \emph{detour}, which touches an element that is not in the symmetric
difference of the source and target configurations.
For such a detour-required case, only a few polynomial-time algorithms are
known for shortest variants, e.g., satisfying assignments of a certain Boolean
formulas by Mouawad et al.~\cite{MouawadNPR17}, and independent sets under the
TS operation for caterpillars by Yamada and Uehara~\cite{YamadaU16}.  Note that
\MR belongs to the detour-required case; recall the example in
\figurename~\ref{fig:example}, where we need to use the edge in $E(G) \setminus
(M_\ini \cup M_\tar)$ in any reconfiguration sequence.
 
The reconfiguration of matchings is a special case of the reconfiguration of
independent sets of a graph.  To see this, recall that matchings of a graph
correspond to independent sets of its line graph. 
Therefore, by a result of Kami\'nski et al.~\cite{KaminskiMM12}, we can solve
\MDIST in polynomial time if the line graph of a given graph is even-hole free.
Note that in this case no detour is required.

\if0
\tj{} is \PSPACE-complete even on planar graphs with maximum degree
$3$~\cite{HearnD05}, for perfect graphs~\cite{KaminskiMM12}, and for graphs of
bounded bandwidth~\cite{Wrochna14}. On the positive side, Bonsma et
al.~\cite{BonsmaKW14} proved that it can be decided in polynomial time in
claw-free graphs. Kami\'nski et al.~\cite{KaminskiMM12} gave a linear-time
algorithm on even-hole-free graphs. Bonamy and Bousquet proved that deciding if
any two independent sets of a graph are connected by TJ can be decided in
polynomial time for cographs~\cite{BonamyB14a}.

The reconfiguration of matchings is a special case of the reconfiguration of
independent sets of a graph.  To see this, recall that matchings of a graph
correspond to independent sets of its line graph. 
The operation TJ and TS can be
defined in an analogous fashion for independent sets. Here, the tokens are
placed on the vertices of a graph, and no two tokens may be placed on adjacent
vertices. The TJ (resp., TS) operation moves a token to a different vertex
(resp., slides a token along an edge).  The problem \textsc{Token Jumping}
(resp., \textsc{Token Sliding}) \textsc{Reconfiguration} asks, whether two
given independent sets of a graph are connected by TJ (resp., TS) operation.

For the \ts{} problem, Kami\'nski et al. gave a linear-time algorithm to decide
this problem for cographs (which are characterized as $P_4$-free
graphs)~\cite{KaminskiMM12}. Bonsma et al.~\cite{BonsmaKW14} showed that we can
decide in polynomial time if two independent sets are in the same connected
component for claw-free graphs. Demaine et al.~\cite{DemaineDFHIOOUY14}
described a quadratic algorithm deciding if two independent sets lie in the
same connected component for trees. Yamada and Uehara showed
in~\cite{YamadaU16} that a polynomial transformation exists in proper interval
graphs. Bonamy and Bousquet extended it to interval graphs~\cite{BonamyB17} and
proved that deciding the connectivity of the reconfiguration graph is \coNP-hard for
split graphs. Many fixed parameter algorithms have been obtained recently, for instance on
planar graphs~\cite{ItoKO2014}, nowhere dense classes~\cite{LokshtanovMPRS15}
or $K_{\ell,\ell}$-free graphs~\cite{BousquetMP17}. For further results 
the reader is referred to the recent and very complete
survey of Nishimura~\cite{Nishimura17}.
\fi

\paragraph*{Our results.}
Although the reconfiguration of independent sets is one of the most
well-studied reconfiguration problems~(see, e.g., a survey~\cite{Nishimura17}),
to the best of our knowledge, the shortest variant of independent sets under
the TJ operation is known to be solvable only for even-hole-free graphs, as
mentioned above.  Thus, in this paper, we start a systematic study of the
complexity of finding shortest reconfiguration sequences between matchings, and
more generally, between independent sets of a graph.   

Our first result is the following classification of the complexity of the
problem \MDIST. It follows immediately from
Theorem~\ref{thm:inapprox:msr}, Corollary~\ref{cor:inapprox:nonmaximum}, and
Lemma~\ref{lemma:ptime:msr}.

\begin{restatable}{theorem}{thm:main}
  \MDIST can be solved in polynomial time if at least
  one of the two matchings is not inclusion-wise maximal. Furthermore, \MDIST
  restricted to instances where both matchings are maximal admits no
  polynomial-time $o(\log n)$-factor approximation algorithm, unless $\P =
  \NP$. 
  \label{thm:main}
\end{restatable}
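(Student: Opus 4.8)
The statement is a conjunction of two results that are asserted to follow from three earlier-stated results, so strategically the proof is just a careful assembly. Let me think about what each piece must deliver and how the glue works.

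For the **polynomial-time half** (at least one matching not inclusion-wise maximal), the plan is to invoke Lemma~\ref{lemma:ptime:msr} directly. The intuition I would expect behind that lemma: if, say, $M_\ini$ is not maximal, then there is a free edge $e$ (both endpoints unmatched) that can host an extra token. The key insight is that a non-maximal matching gives "room to maneuver" — one can add a token, shuffle tokens around freely at the cost of a bounded overhead, and never be forced into an expensive detour of the kind that makes the maximal case hard. Concretely I would expect the argument to reduce to the observation that when there is slack, a shortest reconfiguration essentially just resolves the symmetric difference $M_\ini \symdiff M_\tar$ component by component (alternating paths and cycles), with a small additive correction for parking tokens in the free space, and each such local move can be scheduled greedily. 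Since this is the detour-free (or near-detour-free) regime, the distance has a clean formula computable in polynomial time, matching the earlier remark that even-hole-free line graphs are easy. I would simply cite Lemma~\ref{lemma:ptime:msr} and not reprove it.

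For the **inapproximability half**, the plan is to chain Theorem~\ref{thm:inapprox:msr} with Corollary~\ref{cor:inapprox:nonmaximum}. I expect Theorem~\ref{thm:inapprox:msr} to establish the $o(\log n)$-inapproximability of \MSR (the shortest-reconfiguration variant) via an approximation-preserving reduction from \setcover, whose $(1-o(1))\ln n$ hardness under $\P \ne \NP$ is classical (Dinur–Steurer). The \setcover instance would be encoded so that the sets correspond to "gadget moves" and a reconfiguration sequence of length $\ell$ corresponds to a set cover of size roughly $\ell$, with the detour structure forcing the token to "visit" each chosen set — this is exactly the kind of construction the paper's macros (\texttt{\textbackslash Vgad}, \texttt{\textbackslash Egad}, \texttt{\textbackslash place}) are set up for. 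Then Corollary~\ref{cor:inapprox:nonmaximum} presumably lifts this to instances where both matchings are maximal, either by padding the gadget so that every matching involved is forced to be maximal, or by observing the reduction already produces maximal matchings. So the proof of Theorem~\ref{thm:main} is: "The first sentence is Lemma~\ref{lemma:ptime:msr}. For the second, Theorem~\ref{thm:inapprox:msr} gives the $o(\log n)$ hardness for \MSR, and Corollary~\ref{cor:inapprox:nonmaximum} transfers it to the maximal-matchings restriction; together these yield the claim." That is essentially a two-line proof.

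The **main obstacle** is not in Theorem~\ref{thm:main} itself — which is bookkeeping — but in making sure the two halves are genuinely complementary, i.e., that "not inclusion-wise maximal" and "both maximal" partition the instance space and the hardness instances really do have both $M_\ini$ and $M_\tar$ maximal. If Corollary~\ref{cor:inapprox:nonmaximum} only guarantees maximality of one side, there would be a gap. I would therefore, in writing the proof, explicitly point to the fact that the hardness construction produces two maximal matchings (so the dichotomy is tight) and note that the polynomial case covers everything else. Were I proving this from scratch rather than assembling cited pieces, the genuinely hard work would all be inside Theorem~\ref{thm:inapprox:msr}: designing a gadget in which shortest reconfiguration length is sandwiched tightly against set-cover size, and in particular proving the lower bound on the reconfiguration length (that no clever short detour exists), which is typically where such reductions are delicate.
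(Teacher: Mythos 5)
Your assembly is exactly what the paper does: Theorem~\ref{thm:main} is stated to follow immediately from Lemma~\ref{lemma:ptime:msr}, Theorem~\ref{thm:inapprox:msr}, and Corollary~\ref{cor:inapprox:nonmaximum}, and your check that the hardness construction produces two maximum (hence maximal) matchings is precisely the point that makes the dichotomy tight. The only small correction is the role of Corollary~\ref{cor:inapprox:nonmaximum}: it does not lift hardness \emph{to} the maximal case (Theorem~\ref{thm:inapprox:msr} already yields maximum, hence maximal, matchings) but rather extends it to maximal matchings that are \emph{not} maximum.
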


\noindent 
The hardness part of Theorem~\ref{thm:main} holds even for bipartite graphs of
maximum degree three. Note that it implies approximation hardness for shortest
transformations between $b$-matchings of a graph and for shortest
transformations between independent sets on any graph class containing line
graphs.

On the positive side, we show that determining the distance of maximum
matchings of \emph{bipartite} graphs is FPT in the size $d$ of the symmetric difference of
the input matchings. In our algorithm we consider two cases: either a shortest
reconfiguration sequence contains a non-inclusion-wise maximal matching or
not. Extending the positive side of Theorem~\ref{thm:main}, we give a
polynomial-time algorithm for the former case. To deal with the latter case, we
proceed in two stages. We first generate (many) instances of \MDIST, such that
the two input matchings are \emph{maximum}. This allows us to make some
additionial assumptions based on the Edmonds-Gallai
decomposition~\cite[Ch.~24.4b]{Schrijver:CO}. We then further reduce this variant
of \MDIST to \DST with at most $d/2$ terminals and show that optimal Steiner
trees are in correspondence with shortest reconfiguration sequences. Optimal
Steiner trees can be computed in FPT time for a constant number of terminals
according to the algorithms~\cite{Bjorklund:07,DW:71}. By properly combining
the reconfiguration sequences obtained from optimal Steiner trees we obtain the
following result.

\begin{restatable}{theorem}{fptresult}
  \MDIST in bipartite graphs can be solved in time $2^d \cdot n^{O(1)}$, where
  $d$ is the size of the symmetric difference of two given matchings.
  \label{thm:fpt} 
\end{restatable}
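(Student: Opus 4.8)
The plan is to follow the strategy already outlined in the introduction: split into the case where some shortest reconfiguration sequence passes through a non-inclusion-wise maximal matching, and the case where it does not. For the first case, I would argue that if \emph{any} reconfiguration sequence between $M_\ini$ and $M_\tar$ uses a non-maximal matching, then a shortest such constrained sequence can be found in polynomial time using the algorithm from the positive part of Theorem~\ref{thm:main}: guess (there are only polynomially many choices) an edge $e$ that is ``missing'' at some intermediate step and the position in the sequence where the intermediate matching omits a neighbourhood of $e$, reduce to two instances with a non-maximal endpoint, and concatenate. Since we are trying to compute the true distance, we take the minimum over this polynomial-time computable value and the value produced by the second case.

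For the second case — every step of a shortest sequence is an inclusion-wise maximal matching — the first reduction is to reduce to instances in which both $M_\ini$ and $M_\tar$ are \emph{maximum} matchings. The idea here is that $M_\ini$ and $M_\tar$ agree on $M_\ini \cap M_\tar$, and the symmetric difference $M_\ini \symdiff M_\tar$ has size $d$; the union of alternating components has at most $d$ vertices touched, so by enumerating over the (bounded-size) set of edges incident to the symmetric difference we can fix which edges outside $M_\ini \cup M_\tar$ are used as ``detour'' edges in the reconfiguration, and after deleting the rest of the graph (formally, restricting to the subgraph spanned by the relevant vertices, of which there are $O(d)$ many since intermediate matchings are all maximal of the same size and differ from $M_\ini$ only near the symmetric difference) we may assume both endpoints are maximum matchings of the reduced graph. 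This is where I expect to invoke the Edmonds--Gallai decomposition~\cite[Ch.~24.4b]{Schrijver:CO}: for maximum matchings the structure of all maximum matchings is controlled by the decomposition into $D(G)$, $A(G)$, $C(G)$, which lets us identify exactly which vertices can be exposed and which edges can carry a token at any point, pinning down the ``search space'' to something of size polynomial in $n$ with at most $d/2$ degrees of freedom.

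The core of the argument is then the reduction to \DST. Given the maximum-matching instance, I would build a directed graph whose nodes encode the possible ``states'' of each token-pair along the symmetric difference together with the detour edges, a designated root, and at most $d/2$ terminals corresponding to the $d/2$ token moves that must be realised to turn $M_\ini$ into $M_\tar$; arc weights encode the cost (number of TJ steps) of effecting a partial transformation. The claim to prove is a bijection-up-to-cost between optimal Steiner trees out of the root reaching all terminals and shortest reconfiguration sequences respecting the case-two constraints: a Steiner tree yields a feasible schedule of token moves (order the arcs by a topological-type traversal, each arc contributing its weight in TJ moves), and conversely any shortest sequence, when its moves are grouped by which symmetric-difference token they advance, projects to a Steiner tree of no greater weight. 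Since \DST{} with $t$ terminals is solvable in time $3^t \cdot n^{O(1)}$ (Dreyfus--Wagner~\cite{DW:71}, or $2^t \cdot n^{O(1)}$ via Bj\"orklund et al.~\cite{Bjorklund:07}), and here $t \le d/2$, each such instance is solved in time $2^{d/2} \cdot n^{O(1)}$. Finally, combining the reconfiguration sequences obtained from the optimal Steiner trees across all $2^{O(d)}$ guesses and taking the overall minimum gives total running time $2^d \cdot n^{O(1)}$.

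The main obstacle I anticipate is establishing the exact correspondence between Steiner trees and reconfiguration sequences in the second case: one must show both that the Edmonds--Gallai structure genuinely restricts shortest sequences to a ``tree-like'' pattern of token movements (so that no cheaper non-tree schedule exists) and that the arc weights can be defined so that a Steiner tree's weight equals the number of TJ steps of the induced sequence, accounting for shared prefixes of token moves correctly. I also expect some care is needed in the reduction to maximum matchings to ensure the number of guesses stays $2^{O(d)}$ rather than $n^{O(d)}$, which is precisely where the bound on the number of vertices touched by maximal intermediate matchings (via the symmetric difference being small) does the work.
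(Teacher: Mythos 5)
Your overall architecture (case split on whether a shortest sequence visits a non-maximal matching; reduction to maximum matchings plus Edmonds--Gallai; reduction to \DST with at most $d/2$ terminals) matches the paper's, but the pivotal step of your second case contains a genuine error. You claim that when every intermediate matching is inclusion-wise maximal, the reconfiguration ``differs from $M_\ini$ only near the symmetric difference,'' so that one may restrict to a subgraph on $O(d)$ vertices and enumerate the detour edges in time $2^{O(d)}$. This is false, and the paper's own hardness construction (Section~\ref{sec:inapprox:construction}) is a counterexample: there the symmetric difference consists only of the $4$-cycles $C_u$, yet every transformation must reconfigure alternating paths running through the gadgets $P_u$, $P_S$ and the long paths $Q_S$ of length $L$, i.e.\ through $\Omega(n)$ vertices far from $M_\ini \symdiff M_\tar$. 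The exposed vertices needed to flip a cycle of the symmetric difference can be arbitrarily far away, so neither the vertex set touched by the sequence nor the set of candidate detour edges is bounded by any function of $d$; if your kernelization claim were true, the distance itself would be bounded in $d$ and the whole $o(\log n)$-inapproximability of Theorem~\ref{thm:inapprox:msr} could not hold. The paper instead keeps the entire graph: it fixes a bipartition $(U,W)$, observes that deleting the $M_s$-free vertices of one side makes $M_s$ maximum in the remaining graph, and the only thing guessed is, for each cycle of $M_s \symdiff M_t$, \emph{which side of the bipartition} supplies the free vertex used to reconfigure it --- at most $2^{d/4}$ choices. The \DST instance is then built on the full graph (vertices of $U$ reachable by even alternating paths from free vertices, arcs being length-two alternating paths with weights $0$, $1$, $2$), and the Steiner-tree/sequence correspondence is established via a DFS traversal of an optimal tree, using Edmonds--Gallai only to argue that every $M_s$-edge whose endpoint must move is reachable in this digraph.

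A secondary, smaller gap: your treatment of the first case (``guess an edge $e$ and the position in the sequence where it becomes available'') does not yield an algorithm as stated. The paper's Lemma~\ref{lemma:fpt:non-iwm} instead computes a minimum-cost augmenting path in an auxiliary digraph whose arc costs measure the net effect of each exchange on $|M \symdiff M_t|$, so that the cost of reaching a non-maximal matching plus the residual distance from Lemma~\ref{lemma:ptime:msr} is minimized globally; without such a cost measure, a cheapest way to \emph{reach} a non-maximal matching need not minimize the \emph{total} length of the transformation.
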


This result raises hopes for possible generalizations, e.g., an FPT algorithm
for finding a shortest transformation between independent sets of claw-free
graphs.  The reduction from \MDIST restricted to maximum matchings in bipartite
graphs to \DST is approximation-preserving, which implies the following.

\begin{corollary}
  \MDIST restricted to maximum matchings in bipartite graphs admits a
  polynomial-time $d^{\varepsilon}$-factor approximation algorithm for every
  $\varepsilon > 0$, where $d$ is the size of the symmetric difference of two
  given matchings.
  \label{cor:approximation}
\end{corollary}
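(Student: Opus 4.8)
The plan is to invoke the reduction from \MDIST restricted to maximum matchings in bipartite graphs to \DST that is developed (per the discussion preceding Theorem~\ref{thm:fpt}) in the proof of that theorem, and to observe that it preserves approximation ratios up to the stated loss. Concretely, given an instance of \MDIST with maximum matchings $M_\ini, M_\tar$ in a bipartite graph $G$ whose symmetric difference has size $d$, the reduction produces (polynomially many) instances of \DST, each with at most $d/2$ terminals, such that the optimal Steiner tree in (the best of) these instances has cost equal to $\dist(M_\ini, M_\tar)$; moreover, from any Steiner tree of cost $c$ one can reconstruct in polynomial time a reconfiguration sequence of length $c$, and conversely. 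Thus an $\alpha$-approximation for \DST yields, applied to each produced instance and taking the shortest resulting sequence, an $\alpha$-approximation for \MDIST on maximum matchings in bipartite graphs.

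It then remains to plug in a suitable approximation algorithm for \DST. I would use the classical result that \DST admits a polynomial-time $O(|T|^{\varepsilon})$-factor approximation for every fixed $\varepsilon > 0$, where $T$ is the terminal set (Charikar et al.; the quasi-polynomial $O(\log^2 |T|)$-approximation would even give more, but polynomial time suffices here). Since the number of terminals in each produced \DST instance is at most $d/2 \le d$, this gives a polynomial-time $O(d^{\varepsilon})$-approximation, and replacing $\varepsilon$ by a slightly smaller constant absorbs the hidden constant factor, yielding a genuine $d^{\varepsilon}$-factor approximation for every $\varepsilon > 0$. The running time is polynomial because there are only polynomially many produced instances and each is solved in polynomial time.

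The only real content to verify is the \emph{approximation-preserving} claim, i.e.\ that the correspondence between Steiner trees and reconfiguration sequences is cost-exact in both directions and not merely an equivalence of feasibility; this is precisely what is asserted in the lead-up to Theorem~\ref{thm:fpt} ("show that optimal Steiner trees are in correspondence with shortest reconfiguration sequences" and "the reduction $\ldots$ is approximation-preserving"), so I would simply cite the relevant lemma from the proof of Theorem~\ref{thm:fpt}. The main obstacle, such as it is, is bookkeeping: one must make sure that taking the minimum over the polynomially many produced \DST instances is legitimate — that the instance whose optimum equals $\dist(M_\ini, M_\tar)$ is among them and that suboptimal instances, when we only have approximate Steiner trees, cannot mislead us into outputting something shorter than a valid reconfiguration sequence (they cannot, since every Steiner tree of every produced instance maps back to an \emph{actual} reconfiguration sequence of the same length). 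No new ideas beyond Theorem~\ref{thm:fpt} and an off-the-shelf \DST approximation are needed.
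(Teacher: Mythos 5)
Your proposal is correct and takes essentially the same route as the paper: the paper derives Corollary~\ref{cor:approximation} precisely by substituting the $O(|T|^{\varepsilon})$-approximation of Charikar et al.\ for the exact \DST algorithm in the proof of Lemma~\ref{lemma:fpt:max}, relying on the same cost-exact two-way correspondence between Steiner trees and reconfiguration sequences (Lemmas~\ref{lemma:dst:sequence} and~\ref{lemma:dst:connectivity}) that you cite. The only cosmetic difference is that for maximum matchings the reduction yields a single \DST instance with at most $d/2$ terminals, so no minimization over polynomially many instances is actually needed.
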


We complement Theorem~\ref{thm:main} by showing that
there is a polynomial-time algorithm that decides if the maximum distance
between any two matchings of a graph is \emph{finite}. However, 
we also show that the problems \EMDIST and \EMDIAM are both hard for
the class $\D^\P$, which contains \NP\xspace and \coNP.

\begin{restatable}{theorem}{dpcompleteness}
  The problem \EMDIST is complete for $\D^\P$ and \EMDIAM is $\D^\P$-hard.
  \label{thm:dpc}
\end{restatable}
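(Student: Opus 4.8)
We treat the three claims separately; below, $\mathcal{S}$ always denotes a set system and $\OPThs(\mathcal{S})$ its minimum hitting set size.

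\textbf{Membership of \EMDIST in $\D^\P$.} I would write the problem as the intersection of a language in \NP and a language in \coNP. Let $L_{\le}$ be the set of tuples $(G, M_\ini, M_\tar, \ell)$ with $\dist(M_\ini, M_\tar) \le \ell$. By the $O(n^2)$ bound on the length of a transformation between reconfigurable matchings~\cite{Ito:11}, the distance, when finite, is at most $c\,n^2$ for an absolute constant $c$; hence $(G,M_\ini,M_\tar,\ell)\in L_{\le}$ if and only if there is a reconfiguration sequence of length at most $\min\{\ell, c\,n^2\}$, which is a polynomial-size certificate verifiable in polynomial time. Thus $L_{\le}$ lies in \NP regardless of the encoding of $\ell$. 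Since $\dist(M_\ini,M_\tar)=\ell$ holds if and only if $(G,M_\ini,M_\tar,\ell)\in L_{\le}$ and $(G,M_\ini,M_\tar,\ell-1)\notin L_{\le}$ (the latter vacuously true for $\ell=0$), \EMDIST is the intersection of an \NP language and a \coNP language, hence in $\D^\P$.

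\textbf{$\D^\P$-hardness of \EMDIST.} I would reduce from \textsc{Exact Minimum Hitting Set} --- given $\mathcal{S}$ and $q\in\nat$, decide whether $\OPThs(\mathcal{S})=q$ --- which is $\D^\P$-complete: membership follows as above, and $\D^\P$-hardness is inherited from the exact version of \vertexcover{} (a classical $\D^\P$-complete problem) via the trivial embedding of \vertexcover{} into minimum hitting set. The plan is to reuse the reduction behind the hardness part of Theorem~\ref{thm:main}, which turns $\mathcal{S}$ into a bipartite graph $G$ of maximum degree three and two maximal matchings $M_\ini, M_\tar$, and to strengthen its analysis to an \emph{exact} distance formula $\dist(M_\ini,M_\tar) = B(\mathcal{S}) + 2\,\OPThs(\mathcal{S})$, where $B(\mathcal{S}) = |M_\ini \symdiff M_\tar|$ counts the relocations forced just to exchange the two matchings and each element of an optimal hitting set forces exactly two additional ``detour'' relocations; $B(\mathcal{S})$ is computable in polynomial time. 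Setting $\ell := B(\mathcal{S}) + 2q$ then yields $\dist(M_\ini,M_\tar) = \ell$ if and only if $\OPThs(\mathcal{S}) = q$, which is the desired reduction. Since the original construction already gives $\dist \le B + 2\,\OPThs$ together with $\dist > B + 2k$ whenever $\OPThs > k$, the only missing ingredient is to rule out intermediate values, which I expect to follow by checking that every reconfiguration sequence in the construction relocates a number of edges congruent to $B(\mathcal{S})$ modulo two.

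\textbf{$\D^\P$-hardness of \EMDIAM.} I would use the same reduction and the same parameter $\ell := B(\mathcal{S}) + 2q$, but now with $k := |M_\ini|$ as the cardinality parameter, and argue that $(M_\ini,M_\tar)$ is a diametral pair among the matchings of cardinality $k$ of $G$, so that the quantity queried by \EMDIAM equals $\dist(M_\ini, M_\tar) = B(\mathcal{S}) + 2\,\OPThs(\mathcal{S})$. To guarantee extremality, I would attach rigidifying gadgets (pendant paths and edges forced into every matching of cardinality $k$) that pin down all tokens outside a controlled region, so that any two matchings of cardinality $k$ differ only inside that region and no such pair is farther apart than $(M_\ini,M_\tar)$; combined with the exact distance formula this gives diameter exactly $B(\mathcal{S}) + 2q$ if and only if $\OPThs(\mathcal{S}) = q$.

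\textbf{Main obstacle.} The reductions themselves are routine once the key quantitative claim is available; the real work is showing that the hardness construction for \MDIST computes the distance \emph{exactly} --- with no off-by-one slack --- rather than only up to the threshold needed for Theorem~\ref{thm:main}, and, for \EMDIAM, that this construction can additionally be made rigid enough that $(M_\ini, M_\tar)$ provably realizes the diameter. Both points are delicate precisely because $\D^\P$-hardness of an exact problem is destroyed by a single unit of uncertainty in the optimal value.
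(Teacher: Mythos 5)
Your membership argument is essentially the paper's: the $O(n^2)$ bound from~\cite{Ito:11} makes ``$\dist \le \ell$'' an \NP{} question, and \EMDIST{} is the intersection of that language with its complement at $\ell-1$. That part is fine.

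The hardness part has a genuine gap, and it is exactly the one you flag as ``the real work'' but then do not do. You propose to recycle the \setcover{} gadget from Section~\ref{sec:inapprox:construction} and to upgrade its analysis to an exact formula $\dist(M_1,M_2)=B(\mathcal{S})+2\,\OPThs(\mathcal{S})$ with $B$ computable in polynomial time. That gadget does not satisfy such a formula. In it, activating a set $S$ costs $2L$ moves (reconfiguring $Q_S$ there and back), and covering an item $u$ via $S$ costs roughly $3+2(|S|+f_u)$ further moves, so the true distance is a \emph{weighted} minimum over covers of the form $2L|C|+\sum_{u}(\text{a term depending on which }S\in C\text{ covers }u)$; the lower-order contribution depends on the structure of the optimal cover, not merely on its cardinality. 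Lemmas~\ref{lemma:inapprox:ub} and~\ref{lemma:inapprox:lb} deliberately leave an additive slack of $2|U|(2+f+d)$, which is far more than the single unit of precision that $\D^\P$-hardness of an exact problem tolerates, and no parity argument closes that gap. The paper sidesteps this entirely by building a \emph{different, rigid} gadget in Section~\ref{sec:diameter}: one $4$-cycle per edge of $H$, all vertex paths $P_v$ meeting in a single vertex $t$, so that each cycle costs exactly $3$ exchanges and each additional exposed $p_v^1$ costs exactly $2$, giving the clean identity $\dist(M,N)=3|E(H)|+2\tau(H)$ (Lemma~\ref{lemma:dist:cover}). The reduction is then from \textsc{Sat-Unsat} via \textsc{Exact Vertex Cover} (which is the route you also had in mind), using a disjoint union of two copies of the ``sat'' graph and one of the ``unsat'' graph so that the target value $\ell$ is attained only for the correct sat/unsat combination.

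For \EMDIAM{} your plan (make $(M_\ini,M_\tar)$ itself diametral via unspecified rigidifying gadgets) again rests on the missing exact formula and is also harder than necessary: the paper does not make $(M,N)$ diametral but instead appends the path $q_1,\dots,q_6$ to $t$ and proves $\diam(\matchgraph(G))=\dist(M,N)+6$ (Lemma~\ref{lemma:diam:ub}), i.e.\ it computes the exact offset between the diameter and the controlled distance rather than forcing the offset to be zero. If you want to salvage your write-up, replace the Section~\ref{sec:inapprox:construction} gadget by a construction in which every detour has a fixed, instance-independent cost; otherwise the exactness claim at the heart of your reduction is unsupported.
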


The class $\D^\P$ is a class of decision problems introduced by Papadimitriou and Yannakakis in~\cite{PY:84}. 
It is a natural class for \emph{exact} problems,
\emph{critical} problems, and for example for the question, whether a certain
inequality is a facet of a polytope~\cite{PY:84}. It was proved by Frieze and
Teng that the related problem of deciding the diameter of the graph of a
polyhedron is also $\D^\P$-hard~\cite{FT:94}. 

\paragraph*{Notation.}
We denote by $A \symdiff B$, the symmetric difference of two sets $A$ and $B$.
That is, $A \symdiff B := (A \setminus B) \cup (B \setminus A)$.
Unless stated otherwise, graphs are simple. For standard
definitions and notation related to graphs, we refer the reader to~\cite{Diestel2005}. 
For a graph $G$, we denote by $V(G)$ (resp., $E(G)$) the set of vertices
(resp., edges) of $G$.  We denote by $\overline G$ the complement graph of $G$
and by $\overline E$ the edge-set $E(\overline{G})$ of the complement graph.
Let $G = (V, E)$ be a graph and let $M \subseteq E$ be a matching.
A vertex of $G$ that is not incident to any edge in $M$ is called
\emph{$M$-exposed} or \emph{$M$-free}, otherwise it is \emph{matched} or \emph{covered}.  
It will sometimes be convenient to work with the \emph{reconfiguration graph}
$\matchgraph(G)$ of matchings of a graph $G$, which is defined as follows.
\begin{align*}
  V(\matchgraph(G)) &:= \{ M \subseteq E \mid \text{ $M$ is a matching in $G$, $|M| = k$} \}\\
  E(\matchgraph(G)) &:= \{ MN \mid M, N \in V(\matchgraph(G)),\, |M \symdiff N| = 2\}
\end{align*}
We denote by $\dist_{\matchgraph(G)}(M,N)$ the distance of two matchings in
$\matchgraph(G)$ and by $\diam(\matchgraph(G))$ the maximal distance of any two
vertices of $\matchgraph(G)$. 

The remainder of this paper is organized as follows. In
Section~\ref{sec:inapprox}, we prove Theorem~\ref{thm:main}. In
Section~\ref{sec:FPT}, we provide the FPT algorithm and the approximation
algorithm for finding a shortest reconfiguration sequence for two maximum
matchings of a bipartite graph. Section~\ref{sec:diameter} contains our
hardness results for deciding the exact distance of two matchings and the
exact diameter of the graph of matchings.

\section{Hardness of {\MDIST}}
\label{sec:inapprox}

The goal of this section is to prove Theorem~\ref{thm:main}. The positive part
of~\ref{thm:main} is a consequence of the following lemma. 

\begin{lemma}[$\ast$]
  \MDIST restricted to instances where at least one of the two matchings is
  not inclusion-wise maximal can be decided in polynomial time.
  \label{lemma:ptime:msr}
\end{lemma}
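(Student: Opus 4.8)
The plan is to show that when at least one of $M_\ini, M_\tar$ is not inclusion-wise maximal, a shortest reconfiguration sequence can avoid all ``detours'' and hence its length is essentially determined by $|M_\ini \symdiff M_\tar|$ together with the cost of the mismatch in matched/exposed vertices. First I would observe that since $\matchgraph(G)$ only contains matchings of cardinality $k := |M_\ini| = |M_\tar|$ (reconfiguration under TJ preserves cardinality), we may as well assume $|M_\ini| = |M_\tar|$; if the cardinalities differ the distance is $\infty$ and this is detectable in polynomial time. The key structural fact I would prove is the following: if $M$ is a matching that is \emph{not} inclusion-wise maximal, then there is an $M$-exposed pair of adjacent vertices, i.e.\ an edge $e$ with both endpoints $M$-free. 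This gives a lot of freedom — from such a matching one can ``park'' a token on $e$ and move another token almost anywhere.

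The main step is a lower bound matched by a construction. For the lower bound, note that any reconfiguration sequence must, token by token, turn the edges of $M_\ini \setminus M_\tar$ into edges of $M_\tar \setminus M_\ini$; since each TJ step changes the matching in exactly two edges (one removed, one added), at least $|M_\ini \setminus M_\tar| = |M_\ini \symdiff M_\tar|/2 =: d/2$ steps are needed, and one argues that if no single step can simultaneously remove an edge of $M_\ini \setminus M_\tar$ and add an edge of $M_\tar \setminus M_\ini$ that are ``compatible'', extra steps are forced; more precisely I would track a potential function counting edges not yet in their final position and show it decreases by at most one per step except under a precisely characterizable local condition. For the upper bound (the algorithmic core), I would give a greedy/constructive procedure: WLOG say $M_\ini$ is not maximal, so it has a free edge $e_0$; then repeatedly (i) if some edge of $M_\tar \setminus$ current can be added directly, add it by moving a token from $M_\ini \setminus M_\tar$ onto it; (ii) otherwise move a blocking token onto a currently-free edge (which exists because the current matching stays non-maximal as long as we have not finished, or we can maintain non-maximality as an invariant). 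One shows this terminates in the claimed optimal number of steps. The distance formula will be a simple closed expression — something like $d/2$ plus a small correction term depending on whether $M_\ini \setminus M_\tar$ and $M_\tar \setminus M_\ini$ can be ``threaded'' consistently through the available free edges — and it is clearly computable in polynomial time.

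The main obstacle I anticipate is handling the interaction between the edges in the symmetric difference and the edges common to both matchings: a token sitting on an edge of $M_\ini \cap M_\tar$ must never be permanently disturbed, yet it may temporarily block the only route for relocating a symmetric-difference token, forcing a detour of length $2$ instead of $1$ for that token. Characterizing exactly when such a forced detour occurs — and proving the greedy procedure never incurs more detours than necessary — is the delicate part. I would deal with it by an exchange argument: given any optimal sequence, show it can be rearranged so that detours are used only when the ``compatibility graph'' between $M_\ini \setminus M_\tar$ and $M_\tar \setminus M_\ini$ (an edge for each pair that can be swapped in a single TJ step without disturbing $M_\ini \cap M_\tar$) fails to admit a perfect matching, in which case König's theorem or a direct counting argument on this bipartite compatibility graph pins down the exact number of extra steps. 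Finally, the non-maximality hypothesis is what guarantees at least one universally-usable ``buffer'' edge throughout, which is precisely what makes the problem polynomial rather than $o(\log n)$-inapproximable as in the maximal case.
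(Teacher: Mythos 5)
Your overall plan — use the non-maximality to obtain a ``buffer'' edge with both endpoints free, prove a lower bound of $|M_\ini \symdiff M_\tar|/2$, and show the true distance is this quantity plus a small, efficiently computable correction — is the right shape, and it matches the paper's strategy in spirit. But there is a genuine gap in how you propose to pin down the correction term. The paper's proof rests on the decomposition of $M_\ini \symdiff M_\tar$ into alternating paths and even cycles, and the characterization is: the distance is exactly $d/2$ (with $d = |M_\ini \symdiff M_\tar|$) if and only if the symmetric difference contains no cycle or contains a path with an odd number of edges; otherwise it is exactly $d/2+1$. Your ``compatibility graph'' between $M_\ini \setminus M_\tar$ and $M_\tar \setminus M_\ini$ (edges for pairs swappable in one TJ step) together with K\"onig/deficiency counting would get this wrong. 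For a single alternating $4$-cycle, no pair is swappable in one step (the target edge is adjacent to \emph{both} source edges), so your compatibility graph is empty and a deficiency count suggests two extra steps — yet only one is needed. More generally, with many cycles in the symmetric difference the total overhead is still only $+1$: the paper parks one token on the free edge, turning one cycle into an odd path; the resulting ``surplus'' $M_\tar$-edge at the end of that path is then used to break the next cycle, and so on by induction, so all cycles are unrolled sequentially at no further cost. That single-buffer chaining argument is the key idea your proposal is missing, and no static pairwise-compatibility structure captures it.

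A second, smaller misdirection: you identify the delicate point as tokens on $M_\ini \cap M_\tar$ temporarily blocking routes. In this lemma those edges are never touched; the only obstruction to achieving $d/2$ is that once only cycles remain in the symmetric difference, no single move can both delete an $M_\ini$-edge and insert an $M_\tar$-edge. Conversely, for the matching lower bound of $d/2+1$ in the cyclic case you need exactly that argument (every step of a length-$d/2$ sequence must strictly decrease $|M_i \symdiff M_\tar|$, which eventually becomes impossible), not a potential function ``decreasing by at most one per step.'' With the path/cycle decomposition in hand, both the formula and the polynomial-time check are immediate.
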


To prove Lemma~\ref{lemma:ptime:msr}, we show that the distance of two
matchings $M_s$ and $M_t$, at least one of which is not inclusion-wise maximal,
is either $|M_s \symdiff M_t| / 2$ or $|M_s \symdiff M_t|/2 + 1$. Furthermore,
it can be checked in polynomial time, which case applies.
To prove the hardness part of Theorem~\ref{thm:main}, we show the following.

\begin{theorem}
   \MDIST admits no $o(\log n)$-factor approximation unless $\P=\NP$, even when
   restricted to instances on bipartite graphs of maximum degree three.
    \label{thm:inapprox:msr}
\end{theorem}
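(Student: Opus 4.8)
**The plan is to reduce from \setcover (or equivalently \HS{2}) to \MDIST, carrying over the well-known inapproximability of \setcover.** Since the hardness of \setcover holds even when the frequency of every element is bounded and the set sizes are bounded, I can hope to reach bipartite graphs of maximum degree three. The core idea: build a graph in which a matching of prescribed size encodes a ``selection of sets,'' the source matching $M_\ini$ encodes the empty selection, the target $M_\tar$ also encodes the empty selection but is ``locked'' in a way that forces a detour, and any reconfiguration sequence must at some intermediate step route tokens through edges corresponding to chosen sets in such a way that those chosen sets must cover all elements. The length of the sequence will then be roughly $|M_\ini \symdiff M_\tar|/2$ plus a term proportional to the number of sets used in the cover — so that an $o(\log n)$-approximation to the distance would yield an $o(\log n)$-approximation to \setcover, contradicting $\P = \NP$ by the result of Dinur--Steurer (or Raz--Safra / Feige).

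First I would set up gadgets. For each element $x$ of the universe, I introduce an ``element gadget'' — a small subgraph (for instance a short path or a cycle) on which a token must temporarily be displaced in order to free up the vertices representing $x$; since both $M_\ini$ and $M_\tar$ are maximal, freeing these vertices is exactly the detour that is forced. For each set $S$, I introduce a ``set gadget,'' again a constant-size subgraph, with the property that ``activating'' $S$ costs a fixed number of token moves and, once activated, permits the element gadgets for all $x \in S$ to be resolved. The set and element gadgets are wired together through degree-constrained connectors (paths of length two or three) so that the whole construction is bipartite and has maximum degree three; keeping the degree down is the main bookkeeping constraint, and it is why each element may appear in only boundedly many sets and each set may contain only boundedly many elements — precisely the restricted form of \setcover that remains hard to approximate within $o(\log n)$. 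I then fix $k$ to be the common cardinality of $M_\ini$ and $M_\tar$, chosen so that the ``all gadgets idle'' configuration has exactly $k$ tokens.

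Next I would prove the two directions of the reduction. For soundness of the ``if'' direction: given a set cover of size $c$, I exhibit a reconfiguration sequence of length $|M_\ini \symdiff M_\tar|/2 + O(c)$ — activate the $c$ chosen set gadgets, cycle every element gadget through its forced detour, transfer the ``free'' parts of $M_\ini$ toward $M_\tar$ using the now-available room, then deactivate the set gadgets. Counting moves here is routine once the gadgets are pinned down. For the ``only if'' direction: I argue that in any reconfiguration sequence, at the moment each element gadget for $x$ is resolved, at least one set gadget containing $x$ must be in its ``active'' state — otherwise the independence constraint is violated — and that activating and later deactivating a set gadget costs $\Omega(1)$ moves that cannot be charged to the symmetric difference. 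Hence the set gadgets that are ever activated form a cover, and the number of moves beyond $|M_\ini \symdiff M_\tar|/2$ is $\Omega(\OPThs)$, giving the needed correspondence between distance and optimal cover size up to constants.

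**The main obstacle I anticipate is the simultaneous enforcement of three competing requirements: (i) both $M_\ini$ and $M_\tar$ are inclusion-wise maximal (so that a detour is genuinely forced — otherwise Lemma~\ref{lemma:ptime:msr} applies and the problem is easy); (ii) the graph is bipartite with maximum degree three; and (iii) the ``extra'' length is tightly controlled by the cover size with multiplicative constants, so the logarithmic approximation gap survives.** Maximality tends to push toward high degree and rigid structure, while the degree-three constraint forces long thin connectors that can leak unintended token moves or unintended independent sets; reconciling these — likely by a careful choice of odd cycles as gadget cores (whose unique near-perfect matchings give the needed rigidity) together with parity arguments to rule out shortcuts — is where the real work lies. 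Once the gadgets are locked down, verifying bipartiteness, the degree bound, maximality of $M_\ini$ and $M_\tar$, and the length accounting should all be straightforward.
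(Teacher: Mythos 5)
Your overall strategy --- reduce from \setcover via element and set gadgets so that resolving an element gadget forces some set gadget containing that element to be ``active'' --- is indeed the paper's strategy, but two quantitative points in your plan are wrong, and each one on its own breaks the reduction. First, your claim that \setcover remains $o(\log n)$-inapproximable when both the set sizes and the element frequencies are bounded (which is how you propose to reach maximum degree three) is false: with sets of size at most $d$ the greedy algorithm achieves ratio $H_d = O(\log d)$, and with frequency at most $f$ LP rounding achieves ratio $f$, so the restricted form you invoke is constant-factor approximable. The paper keeps the degree bound \emph{without} restricting the instance: each item $u$ gets a path $P_u$ on $2f_u$ vertices and each set $S$ a path $P_S$ on $2|S|$ vertices, and each incidence $u \in S$ is wired to its own terminal vertex of these paths, so arbitrarily large set sizes and frequencies are absorbed by path length rather than by vertex degree.

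Second, and more fundamentally, your accounting gives $\dist(M_\ini,M_\tar) = |M_\ini \symdiff M_\tar|/2 + \Theta(\opt)$ with constant-size set gadgets. An additive correspondence of this kind does not transfer a multiplicative approximation guarantee: the term $|M_\ini \symdiff M_\tar|/2$ is $\Omega(|U|)$ while $\opt$ may be as small as a constant, so even a $1.01$-factor approximation of the distance carries an error of $0.01 \cdot |M_\ini \symdiff M_\tar|/2$, which already swamps $\opt$ and yields no approximation of the cover at all. The paper's fix is to attach to each set gadget a path $Q_S$ of odd length $L = |U|(2+f+d)$ whose far endpoint is the only exposed vertex, so that activating a set costs $2L$ moves --- a quantity chosen to dominate the combined cost of all element gadgets and connectors. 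The distance is then $2L \cdot \opt(I)$ up to lower-order terms, i.e.\ \emph{multiplicatively} proportional to $\opt$, and an $f(n')$-approximation of the distance yields a $2f(n')$-approximation of \setcover. Without some device of this kind (a per-activation cost that is polynomially large relative to everything else in the instance), the reduction is not approximation-preserving, and your proof does not go through.
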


To show approximation hardness we prove in
Section~\ref{sec:inapprox:hardness}, that a sublogarithmic-factor approximation
for \MDIST yields a sublogarithmic-factor approximation of \setcover, using the
construction from Section~\ref{sec:inapprox:construction}. However, the
\setcover problem is not approximable within a sublogarithmic factor, unless
$\P = \NP$~\cite{DS:14}.

\begin{remark}
    The hardness part of Theorem~\ref{thm:main} also holds for the
    Token Sliding operation~\cite{KaminskiMM12} since $M_1$ and $M_2$ are maximum.
  \label{rem:max}
\end{remark}

Let us briefly recall some definitions related to the \setcover problem. An
instance $I = (U, \sets)$ of \setcover is given by a set $U$ called
\emph{items} and a family $\sets$ of subsets of $U$ called \emph{hyperedges}.
The task is to find the minimal number of sets in \sets that are required to
cover $U$. We denote this number by $\opt(I)$ and let $n := |U|$ and $m :=
|\sets|$.  
Let $d :=
\max_{S \in \sets} \{ |S| \}$ be the maximum cardinality of a set in \sets and
for each $u \in U$ let $f_u = |\{S \in \sets \mid u \in S\}|$ be the
\emph{frequency} of $u$. Furthermore, let $f := \max_{u \in U} \{f_u\}$ be
frequency of $I$.

\subsection{The Construction}
\label{sec:inapprox:construction}

We construct from the \setcover instance $I = (U, \sets)$ an instance $I' = (G,
M_1, M_2)$ of \MDIST. An illustration of the construction shown in Figure~\ref{fig:inapprox:example}.

First, for each item $u \in U$ we create a 4-cycle $C_u$ on the vertices
$c_u^1, c_u^2, c_u^3, c_u^4$ and a path $P_u$ on the vertices $p_u^1,
p_u^2, \ldots, p_u^{2f_u}$. We connect $c_u^1$ and $p_u^1$ with an edge for each $u
\in U$.  Furthermore, for each $S \in \sets$, we create a path $P_S$ on the
vertices $p_S^1, p_S^2, \ldots, p_S^{2|S|}$ and a path $Q_S$ on the vertices
$q_S^1, q_S^2, \ldots, q_S^L$, where $L$ is an \emph{odd} number that will be
specified later. For each $S \in \sets$, we connect $P_S$ to $Q_S$ with the
edge $p_S^{2|S|}q_S^1$.  %Note that so far the graph is bipartite.
Let us now simulate the containment in the hyperedges as follows.  For each
item $u \in U$, the \emph{terminals} of $P_u$ are the vertices $p_u^i$ where
$i$ is even.  For each hyperedge $S \in \sets$, the \emph{terminals} of $P_S$
are the vertices $p_S^i$ where $i$ is odd.  Now, we add edges forming a matching on the
terminal vertices such that there is a (unique) edge in the matching between
the terminals of $P_u$ and the terminals of~$P_S$ if and only if $u \in S$.
Note that such a matching exists since there are $|S|$ terminals in~$P_S$ and
$f_u$ terminals in $P_u$. %(corresponding the number of hyperedges in which $u$ appears).
Note that $G$ is bipartite, since we create edges between vertices 
with an even index and vertices with an odd index. Also note that since we assume that
$m=\poly(n)$, the size of the instance is polynomial in $n + m$, whenever $L$ is.

It remains to construct two matchings $M_1$ and $M_2$ of $G$. Observe that
$P_u$ and $P_S$ (for $u \in U$ and $S \in \sets$) are paths with an even number
of vertices, they admit a unique perfect matching $M_u$ ($M_S$) for each
$u \in U$ ($S \in \sets$). Similarly, the path $Q_S$ admits a matching 
$N_S := \{q_S^1q_S^2,q_S^3q_S^4, \ldots q_S^{L-2}q_S^{L-1}\}$. Note that
$Q_S$ also admits another perfect matching if we push the matching to the right
(this will be of importance for the proof of Theorem~\ref{thm:inapprox:msr}).
Now let $M' = \bigcup_{u \in U} M_u \cup \bigcup_{S \in \sets} M_S \cup N_S$.
The matching $M_1$ is $M'$ plus edges $c_u^1c_u^2$
and~$c_u^3c_u^4$ for each~$u \in U$. And the matching $M_2$ is $M'$
plus the edges $E(C_u) \setminus M_1 = \{ c_u^2c_u^3,
c_u^1c_u^4\}$ for each~$u \in U$.  That is, $M_1$ and $M_2$ only differ
on the 4-cycles in $G$ and the vertices~$q_S^L$ are $M_1$-free and $M_2$-free for each $S
\in \sets$. This completes the construction of the instance~$I' = (G, M_1,
M_2)$. Note that the matchings $M_1$ and $M_2$ are maximum.

Figure~\ref{fig:inapprox:example} illustrates the above construction. The dashed edges are $M_1$-edges and the
dotted edges are $M_2$ edges. The terminal nodes that are used in order to
model the incidence relation of the \setcover instance are indicated by the
square shapes. Informally, we think of $L$ as a very large number, so in order
to find a short reconfiguration sequence from $M_1$ to~$M_2$, it is desirable to
minimize the number of times an alternating path from $q_S^1$ to $q_S^L$ is
reconfigured in order to switch the matching edges on each cycle gadget.

\begin{figure}[t]
  \centering
  \begin{tikzpicture}[vertex/.style={shape=circle,thick,draw,node distance=2.5em},terminal/.style={shape=rectangle,thick,draw,node distance=2.5em},medge/.style={very thick,dashed},nedge/.style={very thick,densely dotted}]
    \node[vertex]			(c11)	{};
    \node[vertex,below left of=c11]	(c12)	{};
    \node[vertex,above left of=c12,label=left:$C_1$]	(c13)	{};
    \node[vertex,above right of=c13]	(c14)	{};

    \node[vertex,below=5em of c11]	(c21)	{};
    \node[vertex,below left of=c21]	(c22)	{};
    \node[vertex,above left of=c22,label=left:$C_2$]	(c23)	{};
    \node[vertex,above right of=c23]	(c24)	{};

    \node[vertex,below=5em of c21]	(c31)	{};
    \node[vertex,below left of=c31]	(c32)	{};
    \node[vertex,above left of=c32,label=left:$C_3$]	(c33)	{};
    \node[vertex,above right of=c33]	(c34)	{};

    \node[vertex,right of=c11,label=below:{$p_1^1$}] (p11) {};
    \node[vertex,right of=c21,label=below:{$p_2^1$}] (p21) {};
    \node[vertex,right of=c31,label=below:{$p_3^1$}] (p31) {};
    \node[terminal,right of=p31,label=below:{$p_3^2$}] (p32) {};

    \draw[thick] (p31) -- (p32);
    \path[draw,thick] (c11) to [bend left=0] (p11);
    \path[draw,thick] (c21) to [bend left=0] (p21);
    \path[draw,thick] (c31) to [bend left=0] (p31);

    % draw cycle edges
    \foreach \s in {1,...,4} {
      \pgfmathparse{int(Mod(\s,4)+1)}
      \edef\nxt{\pgfmathresult}
      \draw[thick] (c1\s) -- (c1\nxt);
      \draw[thick] (c2\s) -- (c2\nxt);
      \draw[thick] (c3\s) -- (c3\nxt);
    }

    % draw P_u, u in U
    \foreach \s in {2,...,4} {
      \pgfmathparse{int(\s-1)}
      \edef\prv{\pgfmathresult}
      \pgfmathparse{int(Mod(\s-1,2))}
      \edef\party{\pgfmathresult}
      \ifthenelse{\party=0}
      {\node[vertex,right of=p1\prv,label=below:$p_1^{\s}$] (p1\s) {};}
      {\node[terminal,right of=p1\prv,label=below:$p_1^{\s}$] (p1\s) {};}
      
      \ifthenelse{\party=0}
      {\node[vertex,right of=p2\prv,label=below:$p_2^{\s}$] (p2\s) {};}
      {\node[terminal,right of=p2\prv,label=below:$p_2^{\s}$] (p2\s) {};}

      \draw[thick] (p1\prv) -- (p1\s);
      \draw[thick] (p2\prv) -- (p2\s);
    }

    % draw P_S
    \node[terminal,right of=p14,label=below:$p_{S_1}^1$] (P11) {};
    \node[vertex,right of=P11,label=below:$p_{S_1}^2$] (P12) {};
    \node[terminal,right of=p24,label=below:$p_{S_2}^1$] (P21) {};
    \node[terminal,label=below:$p_{S_3}^1$] (P31) at (P21.center|-p32.center)  {};

    \foreach \s in {2,...,4} {
      \pgfmathparse{int(\s-1)}
      \edef\prv{\pgfmathresult}
      \pgfmathparse{int(Mod(\s,2))}
      \edef\party{\pgfmathresult}
      \ifthenelse{\party=0}
      {
          \node[vertex,right of=P2\prv,label=below:$p_{S_2}^\s$] (P2\s) {};
          \node[vertex,right of=P3\prv,label=below:$p_{S_3}^\s$] (P3\s) {};
      }
      {
          \node[terminal,right of=P2\prv,label=below:$p_{S_2}^\s$] (P2\s) {};
          \node[terminal,right of=P3\prv,label=below:$p_{S_3}^\s$] (P3\s) {};
      }
      \draw[thick] (P2\prv) -- (P2\s);
      \draw[thick] (P3\prv) -- (P3\s);
    }
    
    % draw Q_S
    \node[vertex,right of=P24,label=below:$q_{S_2}^1$] (Q21) {};
    \node[vertex,right of=P34,label=below:$q_{S_3}^1$] (Q31) {};
    \node[vertex,label=below:$q_{S_1}^1$] (Q11) at (Q21.center|-P12.center) {};
    \foreach \s in {1,2,3} {
      \node[vertex,right of=Q\s1,label=below:$q_{S_\s}^2$] (Q\s2) {};
      \node[right of=Q\s2] (dots\s) {\ldots};
      \node[vertex,right of=dots\s,label=below:$q_{S_\s}^L$] (QL\s) {};
      \draw[thick] (Q\s1) -- (Q\s2);
      \draw[thick] (Q\s2) -- (dots\s);
      \draw[thick] (dots\s) -- (QL\s);
    }

    \path[draw,thick] (P12) to [bend left=0] (Q11);
    \path[draw,thick] (P24) to [bend left=0] (Q21);
    \path[draw,thick] (P34) to [bend left=0] (Q31);

    % set cover edges
    \draw[thick] (p14) -- (P11);
    \path[thick,draw] (p12) to [bend right=15] (P23);
    \draw[thick] (p24) -- (P21);
    \path[thick,draw] (p22) to [bend right=15] (P33);
    \path[thick,draw] (p32) to (P31);

    % add some braces that indicate which path we look at
    \draw[very thick, decorate,decoration={brace,amplitude=1em}] ([yshift=2.5em]p11.north west) -- ([yshift=2.5em]p14.north east) node [midway,yshift=2em] {$P_u$};
    \draw[very thick, decorate,decoration={brace,amplitude=1em}] ([yshift=2.5em]P11.north west) -- ([yshift=2.5em]P24.north east|-P12.north east) node [midway,yshift=2em] {$P_S$};
    \draw[very thick, decorate,decoration={brace,amplitude=1em}] ([yshift=2.5em]Q11.north west) -- ([yshift=2.5em]QL1.north east) node [midway,yshift=2em] {$Q_S$};
    \draw[very thick, decorate,decoration={brace,amplitude=1em}] ([yshift=2.5em]c13.north west) -- ([yshift=2.5em]c11.north east) node [midway,yshift=2em] {$C_u$};

    % mark the matching edges
    \foreach \s in {1,3} {
        \pgfmathparse{int(\s+1)}
        \edef\nxt{\pgfmathresult}
        \draw[medge] ([yshift=3pt]p1\s.east) -- ([yshift=3pt]p1\nxt.west);
        \draw[medge] ([yshift=3pt]p2\s.east) -- ([yshift=3pt]p2\nxt.west);
        \draw[medge] ([yshift=3pt]P2\s.east) -- ([yshift=3pt]P2\nxt.west);
        \draw[medge] ([yshift=3pt]P3\s.east) -- ([yshift=3pt]P3\nxt.west);

        \draw[nedge] ([yshift=-3pt]p1\s.east) -- ([yshift=-3pt]p1\nxt.west);
        \draw[nedge] ([yshift=-3pt]p2\s.east) -- ([yshift=-3pt]p2\nxt.west);
        \draw[nedge] ([yshift=-3pt]P2\s.east) -- ([yshift=-3pt]P2\nxt.west);
        \draw[nedge] ([yshift=-3pt]P3\s.east) -- ([yshift=-3pt]P3\nxt.west);
    }

    \draw[medge] ([yshift=3pt]P11.east) -- ([yshift=3pt]P12.west);
    \draw[nedge] ([yshift=-3pt]P11.east) -- ([yshift=-3pt]P12.west);
    \draw[medge] ([yshift=3pt]p31.east) -- ([yshift=3pt]p32.west);
    \draw[medge] ([yshift=3pt]Q11.east) -- ([yshift=3pt]Q12.west);
    \draw[medge] ([yshift=3pt]Q21.east) -- ([yshift=3pt]Q22.west);
    \draw[medge] ([yshift=3pt]Q31.east) -- ([yshift=3pt]Q32.west);

    \draw[nedge] ([yshift=-3pt]p31.east) -- ([yshift=-3pt]p32.west);
    \draw[nedge] ([yshift=-3pt]Q11.east) -- ([yshift=-3pt]Q12.west);
    \draw[nedge] ([yshift=-3pt]Q21.east) -- ([yshift=-3pt]Q22.west);
    \draw[nedge] ([yshift=-3pt]Q31.east) -- ([yshift=-3pt]Q32.west);

    % draw matching edges on C_u
    \foreach \s in {1, 2, 3} {
        \draw[medge] ([xshift=2pt,yshift=-2pt]c\s1.south west) -- ([xshift=2pt,yshift=-2pt]c\s2.north east);
        \draw[medge] ([xshift=-2pt,yshift=2pt]c\s3.north east) -- ([xshift=-2pt,yshift=2pt]c\s4.south west);
        \draw[nedge] ([xshift=2pt,yshift=2pt]c\s1.north west)  -- ([xshift=2pt,yshift=2pt]c\s4.south east);
        \draw[nedge] ([xshift=-2pt,yshift=-2pt]c\s2.north west) -- ([xshift=-2pt,yshift=-2pt]c\s3.south east);
    }
  \end{tikzpicture}
  \caption{An example of the construction of an instance $(G, M_1, M_2)$ of
    \MDIST from an instance $(U, \sets)$ of \setcover, where $U = \{1, 2, 3\}$
    and $\sets = \{ S_1, S_2, S_3 \}$, $S_1 = \{1\}$, $S_2 = \{1, 2\}$, $S_3 = \{2,3\}$. The terminals are indicated by square node shapes.\label{fig:inapprox:example}}
\end{figure}
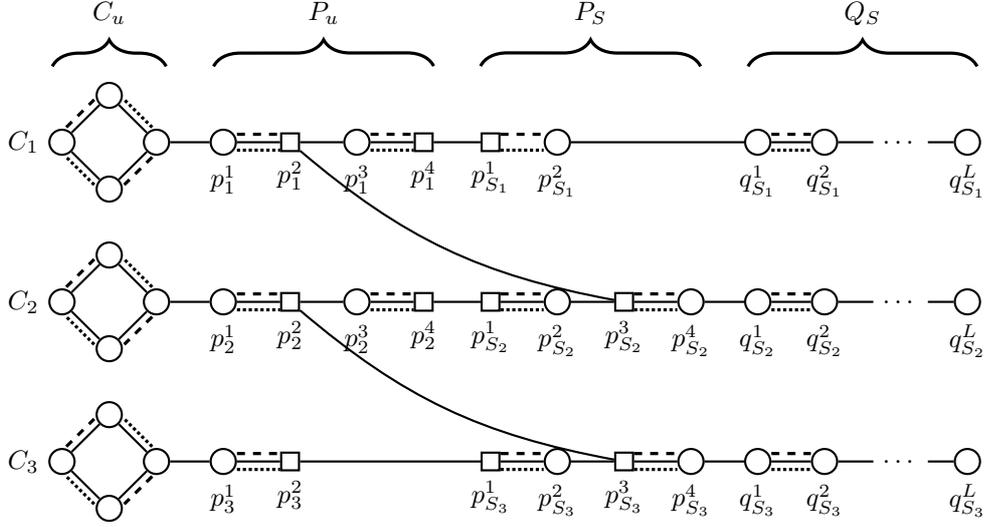

\subsection{Proof of Theorem~\ref{thm:inapprox:msr}}
\label{sec:inapprox:hardness}

Let $I = (U, \sets)$ be an instance of 
\setcover and let $I' = (G, M_1, M_2)$ is an instance of \MDIST, which is constructed
from $I$ as described in Section~\ref{sec:inapprox:construction}.
In order to prove the hardness-of-approximation result, we need to construct a
reconfiguration sequence from $M_1$ to $M_2$ from a cover and vice versa. 
We need two lemmas and let $L := |U|(2+f+d)$. Observe that since $L$ is polynomial
in $|I|$, the instance $I'$ can be constructed in polynomial time.

\begin{lemma}[$\ast$]
  Let $C \subseteq \sets$ be a cover of $U$. Then there is a reconfiguration
  sequence from $M_1$ to $M_2$ of length at most $2L|C| + 2|U| (2+f+d)$.
  \label{lemma:inapprox:ub}
\end{lemma}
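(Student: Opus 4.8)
The plan is to carry out the transformation in at most $|C|$ \emph{phases}, one for each set actually used by the cover. For every item $u\in U$ fix a set $\sigma(u)\in C$ with $u\in\sigma(u)$, and put $U_S:=\{u\in U:\sigma(u)=S\}$; then $\sum_{S\in C}|U_S|=|U|$ and $|U_S|\le|S|\le d$. In the phase for a set $S$ with $U_S\neq\emptyset$ we reconfigure, one at a time, each four-cycle $C_u$ with $u\in U_S$ from its $M_1$-state $\{c_u^1c_u^2,c_u^3c_u^4\}$ to its $M_2$-state $\{c_u^2c_u^3,c_u^1c_u^4\}$, using $Q_S$ as a scratch area, while restoring \emph{everything else} --- all paths $P_v$ and $P_{S'}$, all paths $Q_{S'}$, and all other four-cycles --- to exactly its state at the beginning of the phase. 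Since $M_1$ and $M_2$ coincide outside the four-cycles, concatenating the phases yields a reconfiguration sequence from $M_1$ to $M_2$, and it remains only to bound the number of token jumps.

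A phase for $S$ starts by \emph{charging up} $Q_S$: slide its matching $N_S$ to the right, i.e.\ to $\{q_S^2q_S^3,\dots,q_S^{L-1}q_S^L\}$, which takes $(L-1)/2$ token jumps and makes $q_S^1$ free. Then, for each $u\in U_S$ in turn, perform a \emph{round trip of the free vertex}. Let $p_S^{2b-1}p_u^{2a}$ be the unique incidence edge joining a terminal of $P_S$ to a terminal of $P_u$ (it exists because $u\in S$). In the current matching, the path
\[
 A:\quad q_S^1,\ p_S^{2|S|},\ p_S^{2|S|-1},\ \dots,\ p_S^{2b-1},\ p_u^{2a},\ p_u^{2a-1},\ \dots,\ p_u^1,\ c_u^1,\ c_u^2
\]
is alternating, starts at the free vertex $q_S^1$, ends with the matching edge $c_u^1c_u^2$, and uses no incidence edge other than $p_S^{2b-1}p_u^{2a}$ (at every other terminal it enters and leaves along path edges). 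Reconfiguring along $A$ --- that is, replacing, from the free end inwards, each matching edge of $A$ by the adjacent non-matching edge --- takes $|S|-b+a+2$ token jumps and frees $c_u^2$, after which $c_u^1$ is matched to $p_u^1$. Two more token jumps, $c_u^3c_u^4\mapsto c_u^2c_u^3$ and then $c_u^1p_u^1\mapsto c_u^1c_u^4$, bring $C_u$ into its $M_2$-state and free $p_u^1$. Finally, reconfiguring along the path
\[
 A':\quad p_u^1,\ p_u^2,\ \dots,\ p_u^{2a},\ p_S^{2b-1},\ p_S^{2b},\ \dots,\ p_S^{2|S|},\ q_S^1
\]
takes $a+|S|-b+1$ token jumps and makes $q_S^1$ free again. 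The crucial point is that the net effect of one round trip is to flip exactly the four edges of $C_u$ and to leave the matching unchanged everywhere else: the part common to $A$ and $A'$ --- the edge $q_S^1p_S^{2|S|}$, the edges of $P_S$ between $p_S^{2b-1}$ and $p_S^{2|S|}$, the incidence edge, and the edges of $P_u$ between $p_u^1$ and $p_u^{2a}$ --- is reconfigured twice and hence returns to its original state, so that only $A$'s last two edges and the two explicit jumps on $C_u$ survive, and together these flip $C_u$. Thus the round trips inside a phase, and therefore the phases themselves, are independent. When all of $U_S$ has been processed, \emph{discharge} $Q_S$ by sliding its matching back to $N_S$ in another $(L-1)/2$ jumps.

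For the length: a round trip for $u$ costs $(|S|-b+a+2)+2+(a+|S|-b+1)=2|S|+2a-2b+5\le 2d+2f+3\le 2(2+f+d)$ jumps, using $b\ge 1$, $a\le f_u\le f$, and $|S|\le d$; summed over all $u\in U$ this is at most $2|U|(2+f+d)$. Each of the at most $|C|$ nonempty phases additionally spends $2\cdot(L-1)/2=L-1\le 2L$ jumps charging and discharging $Q_S$. Hence there is a reconfiguration sequence from $M_1$ to $M_2$ of length at most $2L|C|+2|U|(2+f+d)$.

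The main obstacle is not the strategy but its verification: that $A$ and $A'$ are genuine alternating paths in the respective current matchings, that each individual token jump is legal (the only edges one really has to watch are the pendant edges $c_u^1p_u^1$ and the incidence edges at the terminals the paths pass through), and above all that a round trip changes nothing outside $C_u$ --- which is what makes the phases independent. All of this reduces to a careful edge-by-edge tracking of the matching along $A$, through the two cycle jumps, and along $A'$.
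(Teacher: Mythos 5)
Your proof is correct and follows essentially the same strategy as the paper's: partition the items among the cover sets, expose $q_S^1$ by sliding the matching on $Q_S$, perform for each item a round trip along the alternating path through $P_S$ and $P_u$ to flip the four-cycle and then undo the path, and finally restore $Q_S$. The only differences are minor bookkeeping (you extend the alternating path into $C_u$ and use two extra jumps where the paper stops at $p_u^1$ and uses three, and your step counts are slightly sharper), which does not change the bound.
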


\begin{lemma}[$\ast$]
  There is a polynomial-time algorithm $A'$ that constructs from a
  reconfiguration sequence $\tau$ from $M_1$ to $M_2$ of length $|\tau|$ a set
  cover $C \subseteq \sets$ of cardinality at most $|\tau| / 2L$.  
  \label{lemma:inapprox:lb}
\end{lemma}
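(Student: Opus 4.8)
The plan is to have the algorithm $A'$ output
\[
  C \ :=\ \bigl\{\, S\in\sets \ :\ p_S^{2|S|}q_S^1\in N_i\ \text{for some }i,\ \text{ or }\ N_i\cap E(Q_S)\neq N_S\ \text{for some }i\,\bigr\},
\]
where $\tau=\langle N_0,N_1,\dots,N_{|\tau|}\rangle$ with $N_0=M_1$ and $N_{|\tau|}=M_2$; that is, $C$ collects the hyperedges $S$ whose ``$q_S^1$-to-$q_S^L$ alternating path'' is disturbed at some point along $\tau$. This is clearly computable in polynomial time by one scan of $\tau$. It then remains to establish (i) that $C$ is a set cover of $U$, and (ii) that $|C|\le|\tau|/(2L)$; together these give the claimed bound on $|C|$.

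For (ii), the key point is that the path $Q_S$ meets the rest of $G$ only in the single edge $p_S^{2|S|}q_S^1$. Hence deleting from $\tau$ every move that does not touch a vertex of $Q_S$ yields a valid reconfiguration $\tau_S$ of matchings of the path $Q_S$ — where in addition $q_S^1$ may be matched to the outside vertex $p_S^{2|S|}$ — and $\tau_S$ starts and ends at $N_S$. If $S\in C$, then $\tau_S$ is nontrivial; in fact the matching of $Q_S$ must at some step leave $q_S^1$ unmatched inside $Q_S$, and reaching such a state from $N_S$ (and afterwards returning) forces the exposed end of the path to travel from $q_S^L$ all the way to $q_S^1$ and back. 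A routine count of the token moves needed to reconfigure a matching of a path, together with the moves pushing a token through the boundary edge $p_S^{2|S|}q_S^1$, then gives $|\tau_S|\ge 2L$; this is precisely the quantity that fixes the length $L$ of $Q_S$ in the construction (and getting this constant to come out as $2L$ rather than something smaller is the one place where the count has to be done carefully). Since the sets $V(Q_S)$, $S\in\sets$, are pairwise disjoint, the move sets underlying the $\tau_S$ for $S\in C$ are disjoint, so $|\tau|\ge\sum_{S\in C}|\tau_S|\ge 2L\cdot|C|$, which is~(ii).

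For (i), fix $u\in U$ and let $G_u$ be the subgraph of $G$ induced by $V(C_u)\cup V(P_u)$ — a $4$-cycle with a pendant even path — whose only perfect matchings are $M_1\cap E(G_u)$ and $M_2\cap E(G_u)$, differing exactly in the four edges of $C_u$. Because consecutive matchings of $\tau$ differ in only two edges, some $N_i$ restricted to $E(G_u)$ fails to be perfect, i.e. leaves a vertex of $G_u$ uncovered by an edge of $E(G_u)$; since $N_0=M_1$ has no such deficiency, let $i^\star$ be the first step that creates one. A short case analysis of the move $N_{i^\star-1}\to N_{i^\star}$ shows that it must either occupy one of the $F$-edges incident to a terminal of $P_u$, or push a token out of $V(G_u)$ entirely — and the only edges leaving $G_u$ are the $F$-edges $p_u^{2j}p_S^{2k-1}$ with $u\in S$, so this deficiency must be fed into $G_u$ across such an edge. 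Tracking the deficiency backwards through $\tau$ along alternating paths — it can ultimately be supplied only by a genuinely exposed vertex, and after deleting all branches $Q_S$ that are \emph{not} disturbed (which are inert: for $S\notin C$ the edge $p_S^{2|S|}q_S^1$ is never used and $q_S^1$ is never exposed inside $Q_S$, so no token can cross that interface) the exposed vertices originate only at the tails $q_{S'}^L$ with $S'\in C$ — one shows that the route delivering an exposed vertex into the $4$-cycle $C_u$ ends with an $F$-edge out of some $P_S$ with $u\in S$ whose companion path $Q_S$ is disturbed, i.e. $S\in C$.

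I expect part~(i) to be the main obstacle: turning the ``track the deficiency backwards along alternating paths'' step into a rigorous argument, in particular ruling out the possibility that a wandering exposed vertex reaches $C_u$ while all the disturbed $Q$-branches belong to hyperedges avoiding $u$. This is most naturally handled through the structure theory of matching reconfiguration (alternating paths and the Edmonds-Gallai decomposition used elsewhere in the paper), applied to the graph obtained from $G$ by deleting the inert branches. The only other delicate point is the elementary, constant-sensitive count establishing $|\tau_S|\ge 2L$ in part~(ii).
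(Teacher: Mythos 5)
There is a genuine gap, and part of it sits in the half you considered routine. Your $C$ contains every $S$ whose branch $Q_S$ is disturbed at all, but a valid sequence may perturb $Q_S$ by $O(1)$ moves near its free end --- e.g.\ jump the token from $q_S^{L-2}q_S^{L-1}$ to $q_S^{L-1}q_S^{L}$ and back --- which puts $S$ into your $C$ while contributing only two moves and never leaving $q_S^1$ unmatched inside $Q_S$. So your key claim in~(ii), that $S\in C$ forces the exposed end of $Q_S$ to travel from $q_S^L$ to $q_S^1$ and back, is false, and $|C|\le|\tau|/2L$ fails; this is not a matter of doing the constant-sensitive count carefully, but of the definition of $C$ being too inclusive. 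The paper instead declares $S$ active only when its tail end $q_S^L$ becomes covered, and what the counting actually charges for is a \emph{full shift} of the matching on $Q_S$ from $N_S$ to $\{q_S^2q_S^3,\dots,q_S^{L-1}q_S^L\}$ and back, which costs $\Theta(L)$ moves per such $S$ and which is exactly the configuration that the covering argument below exhibits. (A further minor leak: one jump can remove an edge of $Q_S$ and add an edge of $Q_{S'}$, so your ``disjoint move sets'' claim loses another constant factor.)

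For part~(i) you only have a plan, and the paper's argument is genuinely different from --- and much more local than --- your backward deficiency-tracking. Since every intermediate matching is maximum and the only edge leaving the $4$-cycle $C_u$ is $c_u^1p_u^1$, the vertex $p_u^1$ must be exposed at some step, say at the matching $M$. One then follows the $M$-alternating path starting at $p_u^1$ \emph{within this single matching}: it cannot terminate at an exposed vertex (that would be an $M$-augmenting path, contradicting maximality); it must leave $P_u$ through a cross edge $p_u^ip_S^j$, which exists only for $S\ni u$; and a degree/parity analysis shows it cannot exit $P_S\cup Q_S$ again, so it terminates at a covered $q_S^L$ with $Q_S$ fully shifted. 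This simultaneously certifies that some $S\ni u$ is active and that $Q_S$ has undergone the expensive full traversal that~(ii) charges for --- the two halves are proved by the same alternating path. Your proposal of tracking an exposed vertex backwards through time and pruning inert branches would additionally have to rule out the exposed vertex reaching $C_u$ from a disturbed branch $Q_{S'}$ with $u\notin S'$ by wandering through other item gadgets, which is precisely the difficulty the static single-matching argument avoids.
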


We are now ready to prove Theorem~\ref{thm:inapprox:msr}.

\begin{proof}[Proof of Theorem~\ref{thm:inapprox:msr} (sketch)]
  Suppose we have an $f(n')$-factor approximation  algorithm $A$ for \MDIST, where
  $n' = |I'|$.  Then
  \begin{align*}
    A'(I) & \stackrel{Lemma~\ref{lemma:inapprox:lb}}{\leq}  \frac{A(I')}{2L}  \stackrel{Lemma~\ref{lemma:inapprox:ub}}{\leq} \frac{f(n')(\opt(I) \cdot 2L + 2|U| (f+d+2))}{2L} \leq 2f(n') \opt(I)\enspace.
  \end{align*}
  That is, we can compute in polynomial time a $2f(n')$-approximate solution of
  $I$. Combining this with the result of Dinur and Steurer~\cite{DS:14} completes
  the proof.
\end{proof}

With a slight modification of the proof, we can show that
the problem \MDIST remains hard, even if
the matchings are not maximum.

\begin{corollary}[$\ast$]
    \MDIST on bipartite graphs of maximum degree three
    and matchings that are not maximum does not admit a polynomial-time
    $o(\log n)$-factor approximation algorithm, unless $\P = \NP$.
    \label{cor:inapprox:nonmaximum}
\end{corollary}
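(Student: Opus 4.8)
The plan is to reuse the construction of Section~\ref{sec:inapprox:construction} and Lemmas~\ref{lemma:inapprox:ub}--\ref{lemma:inapprox:lb} essentially verbatim, after a minimal modification that destroys \emph{maximum}-cardinality of the two matchings while keeping them inclusion-wise \emph{maximal} --- the latter being essential, since by Lemma~\ref{lemma:ptime:msr} non-maximal matchings fall into the polynomial case. Given a \setcover instance $I=(U,\sets)$, build $(G,M_1,M_2)$ as before, let $H$ be the disjoint union of $G$ and a new path $P$ on four vertices $a,b,c,d$ (edges $ab$, $bc$, $cd$), and put $M_1':=M_1\cup\{bc\}$ and $M_2':=M_2\cup\{bc\}$. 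Because the unique neighbour of $a$ and the unique neighbour of $d$ are covered by $bc$, the edge set $\{bc\}$ is inclusion-wise maximal in $P$; since $M_1,M_2$ are maximum (hence maximal) in $G$, both $M_1'$ and $M_2'$ are inclusion-wise maximal in $H$. Neither is maximum, since exchanging $bc$ for $\{ab,cd\}$ gives a larger matching. Finally $H$ is bipartite of maximum degree three, $|M_1'|=|M_2'|$, and $|H|$ is polynomial in $|I|$, so $(H,M_1',M_2')$ is a legal instance of the restricted problem of polynomial size.

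Now transfer the two lemmas, writing $k:=|M_1|$. For the upper bound, every reconfiguration sequence from $M_1$ to $M_2$ in $\matchgraph(G)$ lifts to one of the same length in $\matchgraph[k+1](H)$ by leaving the token on $bc$ in place; so Lemma~\ref{lemma:inapprox:ub} still produces, from a cover $C$ of $U$, a reconfiguration sequence from $M_1'$ to $M_2'$ of length at most $2L|C|+2|U|(2+f+d)$. For the lower bound, I would show that the algorithm $A'$ of Lemma~\ref{lemma:inapprox:lb} goes through: the cover it reads off depends only on how a sequence acts on the alternating paths $Q_S$ and the cycle gadgets $C_u$, which all lie in the $G$-part of $H$ and are disjoint from $P$. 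The clean way to make this formal is a \emph{normalisation} step --- turn a reconfiguration sequence $\tau$ in $\matchgraph[k+1](H)$ into one $\tau'$ in $\matchgraph(G)$ with $|\tau'|\le|\tau|$ by removing all moves that touch an edge of $P$ --- after which the original $A'$ applied to $\tau'$ outputs a cover of size at most $|\tau|/2L$.

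Given the two lemmas, the corollary follows exactly as Theorem~\ref{thm:inapprox:msr} did: an $f(n')$-factor approximation for \MDIST restricted to non-maximum matchings, with $n'=|H|=\poly(|I|)=\poly(n)$, plugs into the displayed chain of inequalities to give a polynomial-time $2f(n')$-approximation for \setcover; if $f(n')=o(\log n')$ this is an $o(\log n)$-approximation, contradicting~\cite{DS:14}. The step that genuinely needs care is the normalisation (equivalently, verifying that $A'$ is blind to the added component): since $P$ admits a matching of size two, a sequence may temporarily slide a token out of $G$ into $P$, opening a ``free slot'' in the $G$-part, and one must argue by an exchange argument that such excursions can be excised without lengthening the sequence --- or, at worst, that each excursion costs two extra moves and therefore only makes $|\tau|$ larger, which keeps the bound $|C|\le|\tau|/2L$ valid. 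This is the sole point where the ``slight modification'' is not entirely free.
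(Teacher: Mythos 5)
Your construction is broken, and the problem is exactly the ``excursion'' issue you flag at the end --- but it is not a technicality that an exchange argument can repair; it destroys the reduction. In your graph $H$, a single token jump $bc \to ab$ turns $cd$ into an edge with both endpoints exposed. From that point on, $cd$ is a universal parking spot: \emph{any} token of the $G$-part may jump onto it. In particular each $4$-cycle $C_u$ can then be flipped in three moves (park the token of $c_u^1c_u^2$ on $cd$, jump the token of $c_u^3c_u^4$ to $c_u^1c_u^4$, bring the parked token back to $c_u^2c_u^3$), without ever touching the $P_u$, $P_S$, or $Q_S$ gadgets. Hence $\dist(M_1',M_2') \le 3|U|+2$ regardless of the \setcover structure, the analogue of Lemma~\ref{lemma:inapprox:lb} is false (a sequence of length $3|U|+2$ would have to yield a cover of size at most $(3|U|+2)/2L < 1$), and no approximation hardness can be extracted. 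Your proposed normalisation cannot succeed: the short sequences \emph{genuinely} use $P$, and excising the $P$-moves does not leave a valid sequence in $G$ (the three-move flip of $C_u$ has no counterpart in $G$ alone, where exposing a neighbour of $C_u$ costs $\Theta(L)$ moves). The claim that an excursion ``only makes $|\tau|$ larger'' has the inequality pointing the wrong way.

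The lesson is that maximality of the two endpoint matchings is necessary (to escape Lemma~\ref{lemma:ptime:msr}) but not sufficient: one must also ensure that every \emph{intermediate} matching reachable cheaply is still maximal, i.e.\ that creating a free edge anywhere in the graph costs more than the entire intended transformation. The paper achieves this by attaching one long odd path $R$ on $L'$ vertices (with $L'$ larger than the upper bound of Lemma~\ref{lemma:inapprox:ub}) to a vertex $q_S^1$ of the existing gadget and matching $r_1r_2, r_3r_4,\dots$, so that $M_1,M_2$ acquire a single very long augmenting path and thus cease to be maximum while remaining maximal; exploiting the resulting exposed vertex requires shifting tokens along a path of length $\Omega(L')$, so no shortest sequence does it, and Lemmas~\ref{lemma:inapprox:ub} and~\ref{lemma:inapprox:lb} carry over verbatim. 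A disjoint component can never play this role, because its ``slack'' is available after $O(1)$ moves.
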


\section{Distance of Matchings in Bipartite Graphs is FPT}
\label{sec:FPT}

The main result of this section is  Theorem~\ref{thm:fpt}, which states that
\MDIST for matchings on bipartite graphs is FPT, where the parameter is the
size of the symmetric difference of the source and target matchings. In the
following, let $(G, M_s, M_t)$ be an instance of \MDIST, where the graph $G =
(V, E)$ is bipartite.  Due to Lemma~\ref{lemma:ptime:msr}, we may assume that
$M_s$ and $M_t$ are inclusion-wise maximal, since otherwise we may find a
shortest transformation in polynomial time. Our FPT algorithm distinguishes two
cases. First, it may happen that in a shortest transformation some intermediate
matching is not inclusion-wise maximal. 
We show that in this case a shortest transformation may be found in polynomial
time. 

\begin{lemma}[$\ast$]
    There is a polynomial-time algorithm that outputs a shortest
    transformation from $M_s$ to $M_t$ via a matching $M$ that is not
    inclusion-wise maximal, or indicates that no such transformation exists.
    \label{lemma:fpt:non-iwm}
\end{lemma}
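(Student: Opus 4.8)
The plan is to guess, in polynomial time, a "cheapest escape route'' from the set of inclusion-wise maximal matchings of cardinality $k = |M_s|$ and back. Concretely, I would reason as follows. Suppose a shortest transformation $M_s = N_0, N_1, \dots, N_\ell = M_t$ passes through some $N_i$ that is not inclusion-wise maximal, i.e.\ there is an edge $e \in E$ with $N_i \cup \{e\}$ still a matching. The key observation is that once we are allowed to reach such a non-maximal matching, reconfiguration becomes "easy'': from a non-maximal matching $M$ with a free edge $e$, we can relocate any token $g \in M$ to any other legal position without a detour, essentially because the free vertices give us room to maneuver. More precisely, I would first prove the helper fact that if $M$ is a matching of cardinality $k$ that is \emph{not} inclusion-wise maximal and $M'$ is any matching of cardinality $k$, then $\dist_{\matchgraph(G)}(M, M') \le |M \symdiff M'|/2$ when $M'$ is also non-maximal, and at most $|M\symdiff M'|/2 + O(1)$ in general; this is exactly the content already distilled in the discussion following Lemma~\ref{lemma:ptime:msr}, which I may invoke. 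Using this, a shortest transformation through a non-maximal matching decomposes as: an optimal walk from $M_s$ to \emph{some} non-maximal matching $M$, a cheap middle part, and an optimal walk from some non-maximal matching $M'$ to $M_t$.

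Next I would make the "escape'' step concrete and show it costs only a constant (in fact, at most one jump beyond the trivial lower bound). Define, for a maximal matching $M$ of cardinality $k$, the quantity $\rho(M)$ to be the minimum number of token jumps needed to transform $M$ into \emph{some} non-inclusion-wise-maximal matching of cardinality $k$. I claim $\rho(M) \le 1$ whenever such a matching exists, and that existence is equivalent to $G$ having a matching of size $k+1$ (equivalently $\nu(G) > k$ fails to be an obstruction) together with $k < \nu(G)$ being impossible — wait, more carefully: a $k$-matching that is not maximal exists iff there is a $k$-matching leaving some edge of $G$ with both endpoints free, which holds iff $k \le \nu(G) - 1$ where $\nu(G)$ is the matching number, or more precisely iff $G$ admits a matching of size $k$ whose exposed vertices contain an edge; this last condition can be checked in polynomial time via a maximum-matching computation (e.g.\ check whether $\nu(G) \ge k+1$, in which case a maximum matching restricted/augmented gives the witness, handling the boundary case separately). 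Granting $\rho(M)\le 1$, the whole detour-through-non-maximal transformation has length at most
\[
  \underbrace{|M_s \symdiff N|/2 + 1}_{\text{escape near }M_s} \;+\; \underbrace{|N \symdiff N'|/2 + 1}_{\text{middle, both non-maximal}} \;+\; \underbrace{|N' \symdiff M_t|/2 + 1}_{\text{return near }M_t},
\]
which, after choosing $N, N'$ greedily to minimize the total, collapses to something of the form $|M_s \symdiff M_t|/2 + c$ for a small constant $c$; the optimal choice of escape edge and return edge can be found by trying all $O(|E|)$ candidates on each side.

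The algorithm is then: enumerate all pairs $(e, e')$ of edges of $G$ (a polynomial number), and for each, compute the cheapest $k$-matching $N \supseteq ?$ … — rather, for each candidate escape edge $e$ compute $\min\{|M_s\symdiff N|/2 : N$ a $k$-matching with $e$ joining two $N$-free vertices$\}$ by a weighted/bounded matching computation, do the same on the $M_t$ side with $e'$, bound the middle by $|N\symdiff N'|/2$ (again both non-maximal, so no detour there), and return the best total; if no non-maximal $k$-matching exists at all, report that no such transformation exists. Correctness follows from the decomposition above together with the easy lower bound $\dist \ge |M_s \symdiff M_t|/2$ and a matching-by-matching accounting that any detour through a non-maximal matching is no shorter than what the algorithm outputs. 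The main obstacle I anticipate is the bookkeeping in the middle part: showing that the cheapest non-maximal $N$ reachable from $M_s$ and the cheapest non-maximal $N'$ reachable from $M_t$ can be routed through each other within $|N\symdiff N'|/2$ jumps \emph{while staying at cardinality $k$ and without needing a fresh detour}, and then verifying the resulting global bound is genuinely optimal rather than merely an upper bound — this requires carefully pairing up the symmetric-difference cycles/paths and arguing that the "slack'' created by the free edge is never consumed prematurely. I would isolate this as a sub-claim and prove it by induction on $|N \symdiff N'|$, peeling off one alternating component at a time and using the free edge as a parking spot, exactly as in the proof sketch already indicated for Lemma~\ref{lemma:ptime:msr}.
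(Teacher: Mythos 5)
There is a genuine gap, and it sits at the heart of your argument: the claim that $\rho(M)\le 1$, i.e.\ that from any inclusion-wise maximal $k$-matching one can reach a non-maximal $k$-matching in a single token jump (whenever one exists), is false. A single jump frees the two endpoints of the removed edge and covers two previously exposed vertices; to obtain a non-maximal matching you need two \emph{adjacent} exposed vertices, and creating such a pair can require shifting tokens along an entire augmenting path. Concretely, attach a long pendant path with a near-perfect matching to the construction of Section~\ref{sec:inapprox:construction} (this is exactly what the proof of Corollary~\ref{cor:inapprox:nonmaximum} does): both matchings are maximal but not maximum, a non-maximal $k$-matching exists, yet every non-maximal matching is at distance $\Theta(L')$ from $M_s$, where $L'$ can be chosen linear in $n$. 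Consequently your decomposition ``escape $+1$, cheap middle, return $+1$'' and the conclusion that the answer collapses to $|M_s\symdiff M_t|/2+c$ for a constant $c$ are both wrong; the true optimum has the form $|M_s\symdiff M_t|/2+c(P^*)+\delta_{P^*}$ where the escape cost $c(P^*)$ is unbounded.

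The paper handles precisely this difficulty by optimizing over the escape route itself. It builds an auxiliary digraph whose $XY$-paths are in bijection with $M_s$-augmenting paths, and assigns each arc a cost in $\{0,1,2\}$ recording whether the corresponding exchange decreases, preserves, or increases $|M\symdiff M_t|$; a minimum-cost such path (computable by Dijkstra) yields a transformation of length $c(P^*)+|M_s\symdiff M_t|/2+\delta_{P^*}$, and a matching lower bound shows every transformation through a non-maximal matching costs at least this much. Your fallback idea of minimizing $|M_s\symdiff N|/2+|N\symdiff M_t|/2$ over all non-maximal $N$ via ``a weighted matching computation'' is not substantiated either: the nontrivial structural step --- showing one may restrict to $N$ of the form $M_s\symdiff E(P_G)$ for an augmenting path $P_G$, and that the per-exchange progress toward $M_t$ can be encoded as additive arc costs --- is exactly what is missing from your proposal. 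The parts you invoke from Lemma~\ref{lemma:ptime:msr} (the middle and return segments cost $|{\cdot}\symdiff{\cdot}|/2$ plus at most one) are fine; the escape is the whole problem.
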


Note that Lemma~\ref{lemma:fpt:non-iwm} properly generalizes
Lemma~\ref{lemma:ptime:msr} and therefore gives a positive result beyond the
the complexity classification stated in Theorem~\ref{thm:main}. The proof idea is to find a
cheapest transformation into a matching that is not inclusion-wise maximal
respect to a certain cost measure that reflects the ``progress'' we make by
performing exchanges along a given augmenting path. The progress is essentially
the sum of the length of an augmenting path and the length of the remaining
transformation, which can be determined by Lemma~\ref{lemma:ptime:msr}. 

On the other hand, we need to check the length of a shortest transformation
that avoids matchings that are not inclusion-wise maximal. For this purpose,
we use a reduction to the problem \DST, which is defined as follows.

\begin{quote}
	\textsc{\DST}\\
	\textbf{Input:} Directed graph $D = (V, A)$, integral arc weights $c \in \mathbb{Z}_{\geq 0}^A$, root vertex $r \in V$, and terminals ~$T \subseteq V$.\\ 
	\textbf{Task:} Find a minimum-cost directed tree in $D$ that connects the root $r$ to each terminal.
\end{quote}

It is known that \DST parameterized by the number of terminals is
FPT~\cite{Bjorklund:07,DW:71}. For the remainder of this section let $d :=
|M_s \symdiff M_t|$. Our reduction gives at most $d/2$ terminals. As a
consequence, we may use the FPT algorithm from~\cite{Bjorklund:07} for \DST to
obtain the following result.

\begin{lemma}
    Let $M_s$ and $M_t$ be maximum. Then there is an algorithm that finds in
    time $2^{d/2} \cdot n^{O(1)}$ a shortest transformation from $M_s$ to
    $M_t$, or indicates that no such transformation exists.
    \label{lemma:fpt:max}
\end{lemma}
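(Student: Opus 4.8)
The plan is to reduce \MDIST restricted to instances where both matchings are *maximum* matchings of a bipartite graph to \DST with at most $d/2$ terminals, and to argue that the reduction is correct and can be inverted, so that an optimal Steiner tree yields a shortest reconfiguration sequence. Combined with the $2^{|T|}\cdot n^{O(1)}$ algorithm for \DST of Björklund et al.~\cite{Bjorklund:07} (or the classical Dreyfus--Wagner algorithm~\cite{DW:71}), this gives the claimed $2^{d/2}\cdot n^{O(1)}$ running time. Throughout I use that, for two maximum matchings $M_s$ and $M_t$, the symmetric difference $M_s \symdiff M_t$ decomposes into even alternating cycles and even alternating paths (no augmenting paths, since both are maximum); there are at most $d/2$ such components, and each component can be "flipped" on its own. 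The key intuition (already foreshadowed in the paper) is that the cycle components are essentially free to flip, while a path component needs a "free vertex" at one of its endpoints to be reconfigured, and such free vertices are a scarce, shared resource — this sharing structure is exactly what a Steiner tree captures.

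The construction I would carry out: first invoke the Edmonds--Gallai structure (available since $M_s, M_t$ are maximum) to understand where the exposed vertices of $G$ live and which "docking points" a path component of $M_s\symdiff M_t$ can use to borrow a free vertex. I would then build a directed graph $D$ whose vertex set consists of a root $r$, one terminal $t_P$ for each path component $P$ of $M_s \symdiff M_t$ that must be flipped, and additional "hub" vertices representing the reusable exposed-vertex resources (and possibly the cycle components, which can serve as conduits). Arc weights encode the length of the reconfiguration work needed: moving the matching along an alternating path of length $\ell$ costs $\ell$ token jumps, plus bookkeeping for detours through cycle components or through already-flipped paths. A directed tree rooted at $r$ reaching every terminal then corresponds to a scheduling of all the flips that shares the free-vertex resources along tree edges, and its cost equals the total number of token jumps. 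Conversely, from a shortest reconfiguration sequence $\tau$ one extracts, by tracking which alternating paths carry tokens at each step, a connected structure rooted at the exposed vertices and spanning all path components, whose cost is at most $|\tau|$; pruning it to a tree does not increase the cost. Since cycle components of $M_s\symdiff M_t$ can be flipped independently in $|C|$ steps each without needing any free vertex, their contribution is additive and can be added to the \DST objective as a constant offset.

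The main obstacle, and the part that needs the most care, is establishing the exact correspondence between reconfiguration sequences and Steiner trees, i.e. that the optimum of the \DST instance equals the optimum of the \MDIST instance (not merely that one bounds the other up to a constant factor — we need *exactness* for Lemma~\ref{lemma:fpt:max} as stated, and approximation-preservation for Corollary~\ref{cor:approximation}). For the forward direction (Steiner tree $\Rightarrow$ sequence) I would process the tree in a bottom-up/topological order, flipping the path component at each terminal by pushing tokens along the alternating path toward the free vertex supplied by its parent in the tree, being careful that at the moment of flipping the relevant endpoint is genuinely exposed — this is where the Edmonds--Gallai decomposition guarantees that the docking structure is consistent and that intermediate matchings stay maximum (so the TJ and TS moves coincide, cf. Remark~\ref{rem:max}). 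For the reverse direction (sequence $\Rightarrow$ tree) the subtlety is that a shortest sequence might interleave work on many components; I would argue by an exchange/normalization argument that a shortest sequence can be assumed "component-respecting" in a suitable sense, so that charging each token jump to a unique arc of a feasible Steiner structure is well-defined. Once this equality of optima is in hand, the running time follows immediately: the \DST instance has $|T| \le d/2$ terminals and size polynomial in $n$, so~\cite{Bjorklund:07} solves it in $2^{d/2}\cdot n^{O(1)}$ time, and Lemma~\ref{lemma:fpt:non-iwm} handles the complementary case of sequences passing through a non-maximal matching; taking the better of the two gives the shortest transformation overall within the stated bound.
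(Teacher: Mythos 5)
There is a genuine and fatal error in your plan: you have inverted the roles of the path and cycle components of $M_s \symdiff M_t$. You write that ``the cycle components are essentially free to flip, while a path component needs a `free vertex' at one of its endpoints,'' and accordingly you place one terminal per path component and treat the cycles as an additive constant offset. The truth is the opposite. A path component of the symmetric difference of two maximum matchings already has an $M_s$-exposed endpoint, so it can be flipped greedily at cost $|P \cap M_s|$ with no external resource. A cycle component, by contrast, has every vertex covered by both $M_s$ and $M_t$, so no token on it can move until some vertex of the cycle is exposed; this requires routing an exposed vertex to the cycle along an $M_s$-alternating path (cf.\ Lemma~\ref{lemma:cyclereconf}) and then undoing that detour afterwards. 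The cycles are precisely the scarce-resource consumers --- the entire inapproximability construction of Section~\ref{sec:inapprox} rests on 4-cycles $C_u$ whose flips force long detours through the paths $Q_S$. With your assignment, the \DST objective would simply not measure the length of a transformation, and treating cycle flips as a fixed offset of $|C|$ steps each would systematically undercount.

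The paper's construction differs accordingly: the terminal set is $T = U' \cap \bigcup_{Z \in \mathcal{C} \cup \mathcal{P}} V(Z)$, i.e.\ \emph{every} $U$-side vertex of \emph{every} component (paths and cycles alike), which is exactly $d/2$ terminals, not one terminal per path. Arcs correspond to length-two walks $u\text{--}v\text{--}w$ with $vw \in M_s$; ``special'' arcs inside a component cost $1$ (one exchange, performed once), detour arcs cost $2$ (one exchange going down the tree and one undoing it when backtracking), and root arcs to $M_s$-free vertices cost $0$. The forward direction is then realized by a DFS traversal of the optimal tree rather than a bottom-up schedule, and Proposition~\ref{prop:dst:assumptions} shows an optimal tree uses all special arcs of each path, all but one special arc of each cycle, and joins $r$ directly to each path's free endpoint. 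Your reverse direction (extracting a connected subgraph $D_\mathcal{S}$ from a sequence) and the final appeal to~\cite{Bjorklund:07} match the paper in spirit, but without correcting which components carry terminals and which moves cost $2$ rather than $1$, the claimed equality of optima does not hold.
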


In order to deal with matchings that are not maximum, we do the following. Let
$(U, W)$ be a bipartition of the vertex set $V$. For each cycle $C$ in the in
the symmetric difference of $M_s$ and $M_t$, we have to guess if $C$ is
reconfigured using an alternating path from an exposed vertex in $U$ or in $W$ in a shortest transformation.
For paths in the symmetric difference there is no choice.  Therefore, the number
of different choices for reconfiguring all paths and cycles in the symmetric difference is bounded by $2^{d/4}$, so we may check all possibilities and pick a
shortest transformation.  Theorem~\ref{thm:fpt} then follows from
lemmas~\ref{lemma:fpt:non-iwm} and~\ref{lemma:fpt:max}.  If $M_s$ is maximum,
then Lemma~\ref{lemma:fpt:max} only needs to be invoked once. By using the
approximation algorithm for \DST from~\cite{Charikar:99} instead of the exact
algorithm from~\cite{Bjorklund:07} in the proof of Lemma~\ref{lemma:fpt:max},
we obtain Corollary~\ref{cor:approximation}.  

Our techniques are not likely to generalize to matchings in non-bipartite
graphs.  We leave as an open problem whether finding a shortest transformation
between two matchings in non-bipartite graphs is FPT in the size of the
symmetric difference of source and target matchings.

In the following, let $\mathcal{C}$ (resp., $\mathcal{P}$) be the set of $(M_s,
M_t)$-alternating cycles (resp., $(M_s, M_t)$-alternating paths) in $(V, M_s
\symdiff M_t)$.

\subsection{Reduction to \DST}
\label{sec:fpt:construction}

Let $I = (G, M_s, M_t)$ be an instance of \MDIST, where the $G = (U \cup W, E)$
is bipartite and the matchings $M_s$ and $M_t$ are maximum.  We will reduce the
task of finding a shortest transformation from $M_s$ to $M_t$ to the \DST
problem. Note that if some edge is not contained in a maximum matching of $G$,
we cannot use it for reconfiguration. Therefore, we may assume that every edge
of $G$ is contained in some maximum matching. Let $X_s$ be the set of
$M_s$-free vertices of $G$. Since $M_s$ and $M_t$ are maximum, we may assume
the following.

\begin{fact}
	We may assume that $X_s \subseteq U$.
	\label{fact:fpt:oneside}
\end{fact}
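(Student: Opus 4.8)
\textbf{Proof plan for Fact~\ref{fact:fpt:oneside}.}
The plan is to reduce the statement to a structural property of bipartite graphs in which every edge lies in some maximum matching, and then to derive the conclusion by choosing the bipartition of $G$ suitably, one connected component at a time. First I would note that everything here is componentwise: if $H$ is a connected component of $G$, then $M_s\cap E(H)$ is a maximum matching of $H$ (else we could enlarge $M_s$ inside $H$ without affecting the rest of $G$), and every edge of $H$ lies in some maximum matching of $H$ (restrict to $H$ a maximum matching of $G$ through that edge; by the previous remark that restriction is maximum in $H$). Hence both standing assumptions --- $G$ bipartite and every edge in some maximum matching --- pass to each component, and a vertex is $M_s$-free in $G$ exactly when it is $(M_s\cap E(H))$-free in its component $H$, so $X_s$ is the disjoint union, over the components, of their sets of $M_s$-exposed vertices.

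The core is the claim: \emph{in a connected bipartite graph in which every edge lies in some maximum matching, all vertices exposed by some maximum matching lie in a single colour class.} Granting this, for each component $H$ I would fix the bipartition of $H$ so that the colour class containing all of $H$'s maximum-matching-exposable vertices is the part assigned to $U$; a component with a perfect matching has no exposable vertex, so its two sides may go to $U$ and $W$ arbitrarily, and a vertex left isolated once the edges that lie in no maximum matching are discarded may simply be put in $U$. Re-colouring the two sides of a component is only a relabelling: it leaves $G$, the matchings $M_s$ and $M_t$, and the set of reconfiguration sequences unchanged, so it is without loss of generality; and by construction every $M_s$-free vertex --- indeed every vertex ever exposed by a maximum matching, hence also every $M_t$-free vertex --- now lies in $U$. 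This is exactly why the statement is phrased as an \emph{assumption}: we are free to fix, once and for all, the bipartition that witnesses the claim.

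To prove the claim I would invoke the Dulmage--Mendelsohn decomposition of a bipartite graph (see, e.g., \cite{Schrijver:CO}), which splits each colour class according to which maximum matchings cover which vertices; the assumption that every edge lies in some maximum matching forces this decomposition to degenerate so that no connected component can contain exposable vertices of both colour classes. A more hands-on proof is also available --- take the set $S$ of vertices reachable by $M$-alternating paths from the $M$-exposed vertices of one colour class, observe in the usual way that $N(S)\subseteq S$ and that $M$ restricts to a perfect matching of $S$ minus the exposed vertices, so $S$ contains no exposed vertex of the other class, and argue that if an exposable vertex lay outside $S$ then connectivity would force an edge leaving $S$ through its matched side, which edge would then lie in no maximum matching --- but the decomposition gives the statement more directly.

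The main obstacle is precisely this structural claim: the componentwise reduction and the relabelling are routine, and the real content is seeing why ``every edge in some maximum matching'' rules out exposed vertices on both sides of a connected bipartite graph. The hypothesis is genuinely needed: the tree obtained from the star $K_{1,3}$ by attaching two pendant vertices to one leaf has maximum matchings exposing vertices on both sides, but there the edge joining the star's centre to that leaf lies in no maximum matching. I would also check the minor points already flagged --- that the restrictions of $M_s$ and $M_t$ to a component are maximum matchings of that component, and that vertices that become isolated when the useless edges are discarded are handled --- but these are straightforward.
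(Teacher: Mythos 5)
Your proposal is correct and follows essentially the same route as the paper: discard edges lying in no maximum matching, reduce to connected components, and read off from a canonical decomposition that all exposable vertices sit in one colour class, then relabel the bipartition. The paper phrases this via the Edmonds--Gallai decomposition (noting that its factor-critical components are singletons in a bipartite graph, and that $C$ is empty and $A$ independent under the standing assumptions), which is exactly the bipartite Dulmage--Mendelsohn structure you invoke, and your elementary alternating-path argument is a valid unpacking of the same fact.
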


The main feature of the reduction 
is that the arc-costs in the auxiliary directed graph will reflect the number 
of exchanges we need to perform in the
corresponding transformation. The terminals are placed each cycle in $M_s
\symdiff M_t$ to ensure a correspondence between Steiner trees and
transformations from $M_s$ to $M_t$.  Without loss of generality, let $X_s
\subseteq U$ be the set of $M_s$-free vertices of $G$.   We construct an instance $(D, c, r, T)$ of \DST as follows.  The digraph
$D = (U', A)$ is defined by 
\[
\begin{aligned}
    U'   &:= \{ v \in U \mid \exists \text{ an even-length $M_s$-alternating path from $X_s$ to $v$} \} \cup \{r\}\\
    A  &:= \{ uw \mid u, w \in U,\, \exists v \in W :\, uv \in E \setminus M_s,\,vw \in M_s \} \cup R \enspace,
\end{aligned}
\]
where $r$ is a new vertex and $R := \{ rv \mid v \in X_s\}$. For an arc $uw
\in A$, let the weight $c_{uw}$ be given by
\[
    c_{uw} :=
    \begin{cases}
        0 & \text{if $u = r$},\\
        1 & \text{if there are two edges $uv \in M_t$ and $vw$ in $M_s$.}\\
        2 &  \text{otherwise.}
    \end{cases}
\]
The set $T$ of terminals is given by $T := U' \cap \bigcup_{Z \in \mathcal{C}
	\cup \mathcal{P}} V(Z)$. Note that since $M_s$ and $M_t$ are matchings of $G$,
any two distinct items in $\mathcal{P} \cup \mathcal{C}$ are vertex-disjoint.
The root of the Steiner tree is the vertex $r$. This completes the construction
of the instance $(D, c, r, T)$.

\subsection{Proof of Lemma~\ref{lemma:fpt:max}}
\label{sec:fpt:max}

Let $(G, M_s, M_t)$ is an instance of \MDIST, where the $G = (U \cup W, E)$ is a
bipartite graph and the matchings $M_s$ and $M_t$ are maximum.  By
Fact~\ref{fact:fpt:oneside}, we may assume that each $M_s$-free vertex is in $U$.
Furthermore, let $I' = (D, c, r, T)$ be an instance of \DST, constructed
according to Section~\ref{sec:fpt:construction}, where $D = (U', A)$. 

Let us first introduce some notation that will be used throughout this section.
An arc of weight one is called \emph{special}. 
Note that, by definition of $A$, all the arcs entering in $w$ have to pass
through the vertex $v$ which is matched to $w$ by $M_s$.
In particular, for each pair of edges $uv \in M_t$ and $vw \in M_s$, there exists a
(unique) special arc $uw$ of $D$. 
Finally note that, since a vertex $u$
is incident to at most one edge of $M_s$ and one edge of $M_t$, the vertex $u$
has at most one incoming special arc and at most one outgoing special arc.
For each special arc
$uw \in A$, there is some $Z \in \mathcal{C} \cup \mathcal{P}$, such that $u, w
\in V(Z)$.  Let $Z \in \mathcal{P} \cup \mathcal{C}$. We denote by $A(Z) \in A$
the set of special arcs with both endpoints in $V(Z)$.
If $Z$ is a cycle of length $2k$ then $A(Z)$
is a directed cycle of length $k$ and if $Z$ is a path of length $2k$ then
$A(Z)$ is a directed path of length $k$.  

An arc of weight zero is called \emph{artificial}.  By definition, any
artificial arc is incident to the root $r$.
We first observe some properties of optimal Steiner trees for $I'$.

\begin{proposition}[$\ast$]
	Any optimal Steiner tree $F$ for $I'$ satisfies:
	\begin{enumerate}[label=(\roman*)]
		\item For each $P \in \mathcal{P}$, the tree $F$ contains all arcs in $A(P)$. \label{itm:dst:patharcs}
		\item For each $C \in \mathcal{C}$, the tree $F$ misses exactly one arc of $A(C)$. \label{itm:dst:cyclearcs}
		\item For each $P \in \mathcal{P}$, the root $r$ is joined to the
		$M_s$-free vertex of $P$. \label{itm:dst:pathroot}
	\end{enumerate}
	\label{prop:dst:assumptions}
\end{proposition}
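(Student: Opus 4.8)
The plan is to first record a few structural properties of the auxiliary digraph $D$, then deduce \ref{itm:dst:pathroot} directly, and finally prove \ref{itm:dst:patharcs} and \ref{itm:dst:cyclearcs} by local exchange arguments that contradict the optimality of $F$.

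\emph{Structure of $D$.} Every vertex $w \in U' \setminus \{r\}$ has all of its incoming arcs passing through the vertex $v_w \in W$ matched to $w$ by $M_s$; in particular $w$ has at most one incoming arc of weight $1$, and it exists exactly when $v_w w \in M_s \setminus M_t$, i.e.\ exactly when $w$ lies on a component $Z$ of $M_s \symdiff M_t$, in which case this special arc comes from the vertex preceding $w$ along $Z$. An $M_s$-free vertex has no incoming arc in $D$ other than the artificial arc from $r$. Orienting each path $P \in \mathcal{P}$ from its $M_s$-free end and each cycle $C \in \mathcal{C}$ consistently, $A(P)$ is the directed path $a_0^P \to a_1^P \to \cdots \to a_{k_P}^P$ through the $U$-vertices of $P$ (with $a_0^P$ the $M_s$-free one and $k_P := |P|/2$), and $A(C)$ is a directed cycle through the $U$-vertices of $C$; a routine check, using the hypothesis that every edge of $G$ lies in a maximum matching, shows that all of these vertices belong to $U'$ and are therefore terminals. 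Finally, every $v \in U'$ is reachable from $r$ in $D$, since an even-length $M_s$-alternating path from $X_s$ to $v$ translates arc by arc into a directed path from $r$ to $v$; hence a Steiner tree, and thus an optimal one $F$, exists.

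Property \ref{itm:dst:pathroot} now follows immediately, using only that $F$ is a Steiner tree: $a_0^P$ is a terminal, so $F$ contains an arc entering $a_0^P$, and its only incoming arc in $D$ is $r \to a_0^P$. For \ref{itm:dst:patharcs}, assume $F$ misses some arc of $A(P)$ and let $a_{j-1}^P \to a_j^P$ be the first missing arc along $P$. Then $a_0^P \to \cdots \to a_{j-1}^P$ is contained in $F$, so together with $r \to a_0^P$ the set of $F$-ancestors of $a_{j-1}^P$ is exactly $\{r, a_0^P, \ldots, a_{j-2}^P\}$, which does not contain $a_j^P$. By the structural facts the parent arc of $a_j^P$ in $F$ has weight $2$ (it is neither artificial nor the special arc $a_{j-1}^P \to a_j^P$), so replacing it by $a_{j-1}^P \to a_j^P$ yields a spanning arborescence rooted at $r$ on the same vertex set — no directed cycle appears, since $a_{j-1}^P$ is not an $F$-descendant of $a_j^P$ — of strictly smaller weight, contradicting optimality. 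Combined with \ref{itm:dst:patharcs}, the unique path of $F$ from $r$ to $a_i^P$ is forced to be $r \to a_0^P \to \cdots \to a_i^P$, which strengthens \ref{itm:dst:pathroot}.

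The main step is \ref{itm:dst:cyclearcs}. Since $A(C)$ is a directed cycle and $F$ is acyclic, $F$ misses at least one arc of $A(C)$; suppose for contradiction it misses $g \ge 2$. Then $A(C) \cap F$ decomposes into exactly $g$ maximal directed sub-paths $\rho_1, \ldots, \rho_g$ in cyclic order (isolated vertices allowed), with $\rho_i$ running from $s_i$ to $e_i$ and the missing arc just before $\rho_i$ being the special arc $e_{i-1} \to s_i$ (indices mod $g$). As in the previous paragraph, the parent arc of $s_i$ in $F$ has weight $2$, and replacing it by $e_{i-1} \to s_i$ decreases the weight by $1$ and keeps a spanning arborescence provided no directed cycle is created, i.e.\ provided $e_{i-1}$ is not an $F$-descendant of $s_i$; since $\rho_{i-1} \subseteq F$, this is equivalent to $s_i$ not being a proper $F$-ancestor of $s_{i-1}$. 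If $s_i$ were a proper $F$-ancestor of $s_{i-1}$ for every $i$, then by transitivity $s_1$ would be a proper ancestor of itself, which is impossible. Hence the replacement is available for some $i$, again contradicting optimality. The single genuinely delicate point throughout is exactly this requirement — that the weight-decreasing exchange not introduce a directed cycle in the arborescence — which is handled for paths by taking the first gap and for cycles by the acyclicity argument just sketched.
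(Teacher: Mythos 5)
Your proof is correct and follows essentially the same strategy as the paper's: exploit that every non-root vertex of the arborescence has in-degree one and perform a local exchange that replaces a weight-two in-arc by the (unique) special weight-one in-arc, contradicting optimality. The only differences are cosmetic refinements — you prove \ref{itm:dst:pathroot} directly from the fact that the $M_s$-free vertex has no non-artificial in-arc rather than deducing it from \ref{itm:dst:patharcs}, and you verify explicitly (via the "first gap" choice for paths and the cyclic-ancestor argument for cycles) that the exchange does not create a directed cycle, a point the paper leaves implicit.
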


The next lemma shows how to construct from an optimal Steiner tree $F$ of cost
$c(F)$ a reconfiguration sequence of length $c(F)$.

\begin{lemma}
        Let $F$ be an optimal Steiner tree for $I'$. Then there is a
        reconfiguration sequence of length $c(F)$ that transforms $M_s$ to
        $M_t$.
	\label{lemma:dst:sequence}
\end{lemma}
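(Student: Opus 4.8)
The plan is to turn the Steiner tree $F$ into a transformation by processing its arcs in a carefully chosen order, so that at each step we perform a single token exchange along an alternating path rooted at an $M_s$-free vertex. The key observation is that, by the definition of $A$, an arc $uw$ of $D$ encodes the pair of edges $uv \in E\setminus M_s$, $vw \in M_s$; following a directed path in $F$ from the root $r$ into some vertex therefore traces out an $M_s$-alternating path in $G$ starting from an $M_s$-free vertex, and ``applying'' this alternating path (deleting the $M_s$-edges it uses, adding the non-$M_s$-edges) is a legal sequence of token jumps whose number equals the number of arcs traversed. So the skeleton of the argument is: (1) root-to-leaf orient $F$; (2) for each $Z \in \mathcal P \cup \mathcal C$, use Proposition~\ref{prop:dst:assumptions} to identify which special arcs of $A(Z)$ appear in $F$ — all of $A(P)$ for paths, and all but one arc of $A(C)$ for cycles — and read off the corresponding alternating walk in $G$; (3) realize these walks one after another as blocks of token jumps, and check that the total number of jumps equals $c(F)$.

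First I would set up the correspondence precisely. Order the vertices of $F$ so that each vertex comes after its parent (a topological order of the out-tree). I would then argue that we can process $F$ one ``maximal monochromatic path'' at a time: a directed path in $F$ all of whose arcs are special (weight $1$) contributes, via the edges it encodes, an $M_s$-alternating path $P^\ast$ in $G$ from an $M_s$-free vertex; exchanging along $P^\ast$ costs exactly $|P^\ast \cap M_s|$ token jumps, which is the number of special arcs, i.e. the $c$-cost of that sub-path of $F$. An arc of weight $2$ (a ``detour'' arc, where $uv \notin M_t$) similarly corresponds to an alternating path whose exchange will be charged $2$: here we will need to temporarily place a token on an edge outside $M_s \cup M_t$ and remove it again, exactly as in the motivating picture; the point is that one weight-$2$ arc buys two jumps. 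Artificial arcs (weight $0$, incident to $r$) encode nothing in $G$ and cost nothing, consistent with Proposition~\ref{prop:dst:assumptions}\ref{itm:dst:pathroot}. I would then verify that after processing all of $F$, every path $P \in \mathcal P$ has had its $M_s$-edges replaced by its $M_t$-edges (since $F \supseteq A(P)$), and every cycle $C \in \mathcal C$ has been ``rotated'' all the way around: the single missing arc of $A(C)$ is handled because processing the other $k-1$ special arcs of $A(C)$ plus either the preceding non-special incoming arc or a reverse exchange completes the cycle — this is where the ``one missing arc'' in \ref{itm:dst:cyclearcs} is paid for by the weight-$2$ arc that enters $V(C)$ in $F$. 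The resulting matching agrees with $M_t$ on $M_s \symdiff M_t$ and with $M_s$ elsewhere, hence equals $M_t$.

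The main obstacle I anticipate is bookkeeping the order and legality of the token jumps so that every intermediate set really is a matching of the right cardinality, especially at the boundaries between blocks and around cycles. Concretely: when we exchange along an alternating path $P^\ast$ obtained from a monochromatic sub-path of $F$, the far endpoint of $P^\ast$ is an $M_s$-covered vertex, and naively flipping $P^\ast$ would leave that vertex exposed, dropping the cardinality — so the exchange must be organized as a genuine alternating \emph{walk} that ends back on an $M_s$-edge, or combined with the continuation of $F$ past that vertex; for cycles this is automatic once we go all the way around, but for the last arc before a terminal inside a cycle it requires care, and this is precisely the role of the weight-$2$ arc and the extra edge outside $M_s \cup M_t$. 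I would handle this by processing $F$ from the leaves inward within each $Z$: first fully rotate/flip the portion of $Z$ farthest from $r$, so that when we get to the arc nearest $r$ the configuration is already ``primed'' and a single jump (or a pair, for a weight-$2$ arc) finishes it while keeping the token count fixed. Once this ordering is pinned down, the count $\sum_{a \in F} c_a = c(F)$ follows by construction, completing the proof.
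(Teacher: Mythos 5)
Your high-level correspondence is right (non-artificial arcs of $F$ encode length-two alternating paths in $G$, special arcs cost one jump, weight-two arcs cost two, artificial arcs cost zero, and Proposition~\ref{prop:dst:assumptions} pins down which arcs of each $A(Z)$ are present), but the proposal is missing the one idea that makes the intermediate sets legal matchings, and the ordering you do propose would fail. An exchange associated with an arc $uw$ replaces $vw \in M_s$ by $uv$, and this is only a valid token jump if $u$ is exposed at that moment; initially the only exposed vertices are those in $X_s$, adjacent to $r$. Hence exchanges must propagate outward from the root: processing ``from the leaves inward within each $Z$'' is impossible as a first step, because no vertex near a leaf of $F$ is exposed until all root-ward arcs on its tree path have already been exchanged. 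Your plan alternates between a topological (root-first) order and a leaves-first order without reconciling them, and the reconciliation is exactly the content you have not supplied.

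The paper's mechanism is a DFS traversal of $F$ in which every arc is visited twice. Going \emph{down} an arc $uw$ performs the exchange $M - vw + uv$, which exposes $w$'s matched partner's side and thereby enables the exchanges in the subtree below $w$; going \emph{up} (backtracking) over a weight-two arc performs a second exchange that moves the token off the temporary edge $uv \notin M_t$, either back onto $vw$ (when $vw \in M_s \cap M_t$) or onto the designated edge $e_C$ that closes a cycle $C \in \mathcal{C}$. Special arcs need no second exchange because the edge $uv$ they place is already in $M_t$, which is precisely why they cost one rather than two; artificial arcs cost nothing. Two further details you would need and do not mention: the DFS must prefer heavier outgoing arcs first (the paper notes this is needed for correctness of the case analysis), and one must verify via an invariant (the paper's Claim~1) that at the moment a cycle $C$ is closed during backtracking, the endpoint of $e_C$ is indeed free. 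Without the down/up double traversal and these invariants, the claim that every intermediate set is a matching of the same cardinality --- which you correctly identify as the main obstacle --- remains unproved, so the proposal as written has a genuine gap rather than merely deferred bookkeeping.
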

\begin{proof}
	Let $A'$ be the arc-set of $F$.
	Let us consider a depth-first-search (DFS) traversal of the tree $F$ starting
	at $r$, where at each vertex $v$, we select a successor that is joined to~$v$
	by an arc of largest weight among all successors of $v$ in $F$ that have not
	been visited previously by the DFS. The traversal of $F$ yields a sequence 
	$a_1, a_2, \ldots, a_m$ of arcs of $A'$ of length $m := 2|A'|$. This
	is the case since each arc is visited twice: once when going ``down'' in the tree
	$F$ and a second time when going ``up'', that is, when we are backtracking.
	Note that the preference for arcs of weight two will be important to prove
	the correctness of the transformation.
	
        For $1 \leq i \leq m$, let $u_i, w_i \in U$, such that $a_i = u_iw_i$.
        If $a_i$ is not artificial, let $P_i = u_i, v_i, w_i$ be the unique
        path of length two in $G$, such that $v_iw_i \in M_s$. Furthermore, if
        $a_i$ is not artificial, let $\bar e_i :=
        u_iv_i$ and $e_i := v_iw_i$.  Note that for $1 \leq i \leq m$, if $a_i$
        not artificial, then $e_i \in M_s$ and $\bar e_i \in  E
        \setminus M_s$.  By Proposition~\ref{prop:dst:assumptions}, for each $C
        \in \mathcal{C}$, the tree $F$ misses exactly one arc $a_C$ of $A(C)$.
        By definition, the arc $a_C$ is a shortcut for a path $u, v, w$ of
        length two in $G$, such that $uv \in M_t$ and $vw \in M_s$.  We denote by
        $e_C$ the unique $M_t$-edge incident to the source of $a_C$.  We will use
        the edge $e_C$ when backtracking to complete the reconfiguration of the
        cycle~$C$.  
	
	We now specify the reconfiguration sequence $M_0, M_1, \ldots, M_m$, where
	$M_0 := M_s$. For each $1 \leq i \leq m$, let $M_{i+1}$ be given as follows. 
	
	\begin{equation}
	M_{i+1} :=
	\begin{cases}
	M_{i} 			& \text{if $a_{i+1}$ is artificial}\\
	M_{i} - e_{i+1} + \bar e_{i+1}& \text{otherwise, if we traverse $a_{i+1}$ downwards},\\
	M_{i} 			& \text{otherwise, if $a_{i+1}$ is special}, \\
	M_{i} - \bar e_{i+1} + e_C	& \text{otherwise, if $a_{i} \in A(C)$ for $C \in \mathcal{C}$},\\
	M_{i} - \bar e_{i+1} + e_{i+1}	& \text{otherwise}.
	\end{cases}
	\label{eq:Mi}
	\end{equation} 
	
	Note that if several cases apply, then we choose the first one in the given
	order.  
	The next two claims establish that $M_0, M_2, \ldots, M_m$ is a
	reconfiguration sequence from $M_s$ to $M_t$.  Claim~\ref{claim:indep}
	is omitted due to space limitations.

    \setcounter{claim}{0}
    \begin{claim}\label{claim:indep}
	    Let $1 \leq j \leq m$ and suppose we have processed arcs $a_1, a_2,
	    \ldots, a_{j}$, so $M_j$ is our current set of edges. Let $a_i \in A$,
	    such that $a_i$ is not artificial. Then the following holds: 
	    \begin{itemize}
		    \item If $a_i$ has been traversed downwards but not upwards, then $\bar e_i=u_iv_i$ is in $M_i \setminus M_i$.
		    \item If $a_i$ has not been traversed so far, then $e_i=v_iw_i$ is in $M_i \cap M_i$.
		    \item If an artificial arc $rv$ has not been visited so far, then $v$ is $M_j$-free.
		    \item Let $C$ be a cycle of $\mathcal{C}$ and let $a_C=uw$ be the arc of $C$ not in $F$ and $a=u'u$ the arc before $a_C$ in $C$ and $a'=ww'$ the arc after it in $C$. Then if at step $j$, $a$ has been visited upwards and $a'$ has not then $M_j$ is $u$-free.
	    \end{itemize}
    \end{claim}

	\begin{claim}
		$M_m = M_t$.
		\label{claim:dst:matchingtarget}
	\end{claim}

	From the definition of~\eqref{eq:Mi} it is straightforward to see that
	all the artificial arcs lead to no exchange, all the special arcs lead to
	one exchange and all the other arcs lead to two exchanges. Therefore, the
	total number of exchanges is $c(F)$, as claimed.
\end{proof}

We show how to obtain from a reconfiguration sequence $\mathcal{S}$ that
transforms $M_s$ to $M_t$ a subgraph of $D_\mathcal{S}$ of $D$, such that the
root $r$ is connected to each terminal in $D$. Let $\mathcal{S} := M_0, M_1,
\ldots, M_m$ be a reconfiguration sequence of matchings
of $G$. Let $A(\mathcal{S})$ be the subsets of the arcs of $D$ that correspond
to some exchange in $\mathcal{S}$. That is, $A(\mathcal{S})$ is given by
\[
    A(\mathcal{S}) := \{ uw \in A \mid \exists 1 \leq i \leq m, v \in U':\, \{uv, vw\} = M_{i-1} \symdiff M_i \}
\]
Furthermore, let $D_{\mathcal{S}} := (U', A(\mathcal{S}) \cup R) \subseteq D$
be the subgraph of $D$ given by the arcs $A(\mathcal{S})$ and the artificial
arcs $R$.

\begin{lemma}[$\ast$]
        Let $\mathcal{S} := M_0, M_1, \ldots, M_m$ be a reconfiguration
        sequence from $M_s$ to $M_t$. Then, for each terminal $w \in T$, there
        is a $rw$-path in $D_{\mathcal{S}}$.
	\label{lemma:dst:connectivity}
\end{lemma}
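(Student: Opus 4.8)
The plan is to prove, by induction on $i$, a statement about prefixes of $\mathcal{S}$ that immediately yields the lemma at $i=m$. Call a vertex $x\in U'$ \emph{displaced at time $i$} if $x$ is $M_s$-exposed or the unique $M_s$-edge at $x$ is absent from $M_i$, and write $D_{\leq i}$ for the subgraph of $D$ consisting of the artificial arcs $R$ together with the arcs of $D$ induced by the first $i$ exchanges of $\mathcal{S}$ (so $D_{\leq m}=D_{\mathcal{S}}$). The claim to prove is: every vertex of $U'$ that is displaced at time $i$ is reachable from $r$ in $D_{\leq i}$. Granting this, fix a terminal $w\in T$; it lies on some $Z\in\mathcal{C}\cup\mathcal{P}$. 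If $w$ is $M_s$-exposed then $rw\in R$ and we are done; otherwise the $M_s$-edge at $w$ lies in $M_s\symdiff M_t$, hence is absent from $M_t=M_m$, so $w$ is displaced at time $m$ and the case $i=m$ of the claim gives an $r$--$w$ path in $D_{\mathcal{S}}$. The base case $i=0$ is immediate, since the only $U'$-vertices displaced at time $0$ are the $M_s$-exposed ones, each reached directly through an artificial arc.

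Two preliminary observations, both relying on maximality and on Fact~\ref{fact:fpt:oneside}, drive the induction. First, since all $M_s$-exposed vertices lie in $U$, we have $|M_s|=|W|$, so every maximum matching — hence every matching in $\mathcal{S}$ — saturates $W$; consequently every exchange of $\mathcal{S}$ pivots at a vertex of $W$. Indeed the two exchanged edges must share a vertex (otherwise their union would be a matching larger than the current one, which is maximum), and if that pivot lay in $U$ the exchange would expose a vertex of $W$, contradicting that the next matching saturates $W$. Second, if a vertex $z\in U$ is exposed by some matching $M$ occurring in $\mathcal{S}$, then $z\in U'$: either $z$ is already $M_s$-exposed, or $z$ is the endpoint of a path $P$ in $(V,M_s\symdiff M)$ whose other endpoint is $M_s$-exposed — it cannot be $M$-exposed, as then $P$ would be $M$-augmenting — and tracing $P$ from that endpoint shows it is an even-length $M_s$-alternating path ending at $z$.

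For the inductive step from $i-1$ to $i$, let the $i$-th exchange remove $pq$ and add $pq'$; by the first observation $p\in W$ and $q,q'\in U$. The current matching changes only at $p,q,q'$, and one checks that the only vertex of $U'$ that can pass from ``not displaced'' to ``displaced'' is $q$, and only when $pq\in M_s$; every other vertex of $U'$ displaced at time $i$ was already displaced at time $i-1$ and hence is reachable in $D_{\leq i-1}\subseteq D_{\leq i}$ by the induction hypothesis. So suppose $q$ is newly displaced, i.e.\ $pq\in M_s$, and therefore $pq'\notin M_s$. Then $q'q$ is an arc of $D$ (with intermediate vertex $p$), and it is exactly the arc induced by the $i$-th exchange, so $q'q\in D_{\leq i}$. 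Moreover $q'$ is exposed by $M_{i-1}$, so by the second observation $q'\in U'$, and $q'$ is displaced at time $i-1$; the induction hypothesis gives an $r$--$q'$ path in $D_{\leq i-1}\subseteq D_{\leq i}$, which we extend along $q'q$ to an $r$--$q$ path. This proves the claim, and hence the lemma.

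The main obstacle is the bookkeeping in the inductive step: one must verify rigorously that a single exchange creates at most the one new displacement at $q$ — in particular handling the case where the $M_s$-edge of some $x\in U'$ is incident to the pivot $p$ — and that the arc $q'q$ one needs is genuinely an arc of $D$ and is the one induced by the exchange, which is precisely where the distinction ``$pq\in M_s$ versus $pq'\notin M_s$'' is used. The two preliminary observations ($W$ stays saturated throughout $\mathcal{S}$, and exposed $U$-vertices remain inside $U'$) are what make these checks go through, and they are exactly where maximality of $M_s,M_t$ and Fact~\ref{fact:fpt:oneside} enter.
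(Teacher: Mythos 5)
Your proof is correct, and it establishes the same underlying fact as the paper's --- namely that the chain of exchanges which successively expose vertices, ending with the exchange that removes the $M_s$-edge of the terminal $w$, traces out a directed $r$--$w$ path in $D_{\mathcal{S}}$ --- but you organize it in the opposite direction. The paper starts at $w$ and walks \emph{backwards}: it invokes the Edmonds--Gallai decomposition ($w\in D(G)$, its $M_s$-partner $v\in A(G)$, $A(G)$ perfectly matched into $D(G)$) to argue that the edge $vw$ can only ever be exchanged for another edge at the pivot $v$, locates the step where this happens to obtain an in-arc $zw$ of $D_{\mathcal{S}}$, then finds the earlier step at which $z$ first became exposed to obtain an in-arc of $z$, and repeats ``until we reach an $M_s$-free vertex.'' Your version replaces the Edmonds--Gallai input by the more elementary observation that, since all $M_s$-exposed vertices lie in $U$ (Fact~\ref{fact:fpt:oneside}) and every matching in $\mathcal{S}$ is maximum, every matching in the sequence saturates $W$ and hence every exchange pivots in $W$; and it replaces the backward walk by a forward induction on the prefix length with the invariant ``displaced $\Rightarrow$ reachable in $D_{\leq i}$.'' What this buys is a cleaner treatment of the two points the paper leaves slightly informal: termination of the backward trace is automatic in the induction, and your second preliminary observation (an exposed $U$-vertex of any intermediate matching lies in $U'$, via the $M_s\symdiff M_i$-path ending at an $M_s$-free vertex) makes explicit why every vertex encountered along the chain is actually a vertex of $D$. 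The one place to be careful, which you flag and which does go through, is the base-of-chain case analysis at the end: a covered terminal $w$ on a path of $\mathcal{P}$ could a priori have its $M_s$-edge outside $M_s\symdiff M_t$, but then $M_t$ would doubly cover $w$, so every covered terminal is indeed displaced at time $m$.
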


The main idea in the proof of Theorem~\ref{lemma:fpt:max} is as follows.  Given some
reconfiguration sequence that transforms $M_s$ into $M_t$, we obtain a subgraph
of $D$ that contains a Steiner tree $F$.  Let $F^*$ be an optimal solution to
$I'$. By Lemma~\ref{lemma:dst:sequence}, we obtain from $F^*$ a reconfiguration
sequence of length $m^* = c(F^*)$ that transforms $M_s$ to $M_t$. Hence, we
have $m^* = c(F^*) < c(F) \leq m$.  Therefore, the directed Steiner tree $F$ is
optimal if and only if $\mathcal{S}$ is a shortest reconfiguration sequence.
Using the exact \DST algorithm from~\cite{Bjorklund:07} to compute $F^*$
yields Theorem~\ref{thm:fpt}, while using the approximation algorithm from~\cite{Charikar:99}
yields Corollary~\ref{cor:approximation}.

\subsection{Proof of Theorem~\ref{thm:fpt}}
\label{sec:fpt:main}

By Lemma~\ref{lemma:fpt:non-iwm}, we may compute a shortest transformation from
$M_t$ to $M_t$ via a matching that is not inclusion-wise maximal in polynomial
time.
Hence it remains to deal with the case that such a transformation is expensive
(or does not exist). 
For this purpose, we reduce the problem to the case where both matchings are
maximum and repeatedly use the algorithm from Lemma~\ref{lemma:fpt:max}.  Let
us fix some bipartition $(U, W)$ of the vertex set $V$.  If $M_s$ is not
maximum, then there is an $M_s$-augmenting path, so there are $M_s$-free
vertices on both sides of the bipartition $(U, W)$. In particular, it may
happen that $M_s$-free vertices on both sides can be used in order to
reconfigure a cycle in $M_s \symdiff M_t$, as explained below.  Hence, for each
cycle in $|M_s \symdiff M_t|$, we have to check if in a shortest transformation
it is reconfigured using an exposed vertex from $U$ or from $W$. Since the number of
choices is bounded by a function of the size of the symmetric difference, we just enumerate
all possibilities. 

The reduction to the restriction of \MDIST where both matchings are maximum is as follows.
We recall the following lemma from~\cite{Ito:11}.

\begin{lemma}[{\cite[Lemma 1]{Ito:11}}]
   Suppose that $M_s$ and $M_t$ are maximum matchings of $G$. Then, there
   a transformation from $M_s$ to $M_t$ if and only if, for each
   cycle $C \in \mathcal{C}$, there is an $M_s$-alternating path in
   $G$ connecting an $M_s$-free vertex to $C$.
   \label{lemma:cyclereconf}
\end{lemma}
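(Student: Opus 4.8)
The plan is to prove both directions using the Gallai-Edmonds decomposition $V = D(G) \cup A(G) \cup C(G)$, together with the standard facts that every maximum matching $M$ of $G$ induces a perfect matching of $G[C(G)]$, that every $M$-exposed vertex lies in $D(G)$, and that a vertex of $G$ is reachable from an $M$-exposed vertex by an $M$-alternating path if and only if it lies in $D(G) \cup A(G)$. Since these properties hold for \emph{every} maximum matching and $C(G)$ does not depend on the matching, the hypothesis ``for each $C \in \mathcal{C}$ there is an $M_s$-alternating path from an $M_s$-free vertex to $C$'' is equivalent to the purely structural condition ``$V(C) \not\subseteq C(G)$ for every $C \in \mathcal{C}$''. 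Observe also that, since $M_s$ and $M_t$ are maximum, every matching occurring in a reconfiguration sequence between them has cardinality $|M_s|$ and is therefore itself maximum.

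For the ``only if'' direction I would argue by contraposition. Suppose some $C \in \mathcal{C}$ has $V(C) \subseteq C(G)$. Then $E(C) \subseteq E(G[C(G)])$, and the $M_s$-edges and the $M_t$-edges of $C$ form two \emph{distinct} perfect matchings of $G[V(C)]$, so $M_s \cap E(G[C(G)]) \neq M_t \cap E(G[C(G)])$. I would then show that a single TJ step between maximum matchings never changes the restriction to $E(G[C(G)])$: writing the step as $M' = M - ab + cd$, if either of $ab, cd$ had exactly one endpoint in $C(G)$, or had both endpoints in $C(G)$ while the other edge had none, a direct check shows that $M$ or $M'$ would fail to induce a perfect matching of $G[C(G)]$ (recall both are maximum), which is impossible; the only remaining case is $ab, cd \in E(G[C(G)])$, and replacing one edge by another inside a perfect matching cannot again be a perfect matching unless $\{a,b\} = \{c,d\}$. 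Hence $M \cap E(G[C(G)])$ is invariant along any reconfiguration sequence, contradicting $M_s \cap E(G[C(G)]) \neq M_t \cap E(G[C(G)])$; so no transformation exists.

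For the ``if'' direction, assume $V(C) \not\subseteq C(G)$ for every $C \in \mathcal{C}$, and reconfigure the components of $M_s \symdiff M_t$ one at a time, keeping the current matching maximum. A path component of $M_s \symdiff M_t$ has even length (an odd one would be an augmenting path for $M_s$ or $M_t$), with one $M_s$-exposed endpoint; I would reconfigure it directly by repeatedly ``pushing the hole'' from that endpoint along the path, a sequence of legal TJ steps turning its $M_s$-edges into its $M_t$-edges. For a cycle component $C$ with current matching $M$: since $V(C) \not\subseteq C(G)$ and $M$ is maximum, there is an $M$-alternating path $Q$ from an $M$-exposed vertex to some $v \in V(C)$, which I may take internally disjoint from $C$ and ending with a non-$M$ edge at $v$. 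Extending $Q$ by the unique $M$-edge of $C$ incident to $v$, I push the hole along this extended path — this deletes the $M$-edge of $C$ at $v$ and frees a vertex $v'$ of $C$ — then push the hole once around $C$ to install all $M_t$-edges of $C$, and finally reverse the push along $Q$ to restore it. The net effect of these legal TJ steps is exactly to replace the $M_s$-edges of $C$ by its $M_t$-edges, leaving every other edge and every vertex's exposure status unchanged; iterating over all cycles and then all paths ends at $M_t$.

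The main obstacle, I expect, is the careful design and verification of the cycle reconfiguration in the ``if'' direction: one must check that each token move is legal (the added edge is incident to the freshly created hole and to the endpoint of the edge just deleted), that the hole travels predictably along $Q$, around $C$, and back, and that the final restoration touches nothing outside $C$, so that the cycles are genuinely independent; the bookkeeping that makes the iteration go through is re-invoking the Gallai-Edmonds reachability fact for each successive maximum matching. By contrast, the ``only if'' direction is short once the invariant $M \cap E(G[C(G)])$ is isolated, so the technical weight of the proof lies in the ``if'' direction.
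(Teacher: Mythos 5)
Your reconstruction is correct, but note that the paper itself offers no proof of this statement: Lemma~\ref{lemma:cyclereconf} is quoted verbatim from \cite[Lemma 1]{Ito:11}, so there is nothing internal to compare against line by line. What you have done is reassemble the argument from ingredients the paper uses \emph{elsewhere}. Your ``only if'' direction --- reduce the path condition to $V(C)\not\subseteq C(G)$ via the Edmonds--Gallai decomposition and then show that $M\cap E(G[C(G)])$ is invariant under TJ steps between maximum matchings --- is essentially the same invariance argument the authors give in their proof of Theorem~\ref{thm:connected}, and your case analysis there is sound (edges with exactly one endpoint in $C(G)$ never occur in a maximum matching, and a perfect matching of $G[C(G)]$ cannot survive a single nontrivial edge swap). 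Your ``if'' direction is the standard hole-pushing construction that also underlies the paper's Lemma~\ref{lemma:inapprox:ub}: reach the cycle by an alternating path $Q$, rotate the cycle, and undo $Q$. Two small points you gloss over but which do go through: first, taking $v$ to be the \emph{first} vertex of $C$ hit by $Q$ automatically forces the entering edge to be a non-$M$ edge (the unique $M$-edge at $v$ lies on $C$), which is what makes the extension of $Q$ by the cycle's $M$-edge alternating; second, the final move must interleave the last cycle rotation with the first step of undoing $Q$ (the token leaving $q_{2j}v$ is placed on the last missing $M_t$-edge of $C$, not back on $C$), which is worth writing out since that is where an unwary version of the argument would get stuck. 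The iteration over components is justified exactly as you say, by the matching-independence of $C(G)$. Overall this is a faithful and complete proof of the cited lemma, arguably cleaner than a from-scratch reachability induction because the decomposition does the bookkeeping for you.
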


\noindent In the light of this lemma we say that a cycle $C \in
\mathcal{C}$ is \emph{reconfigurable from $U$} if there is an
$M_s$-alternating path that connects an $M_s$-free vertex in $U$ to $C$. 
If a cycle $C \in \mathcal{C}$ is neither reconfigurable from $U$ nor from $W$,
then there is no transformation from $M_s$ to~$M_t$.
Furthermore, we say that a path $P \in \mathcal{P}$ is reconfigurable from $U$,
if $P$ contains an $M_s$-free vertex in $U$.
Let us fix for each $F \in \mathcal{C} \cup \mathcal{P}$ a choice $S_F \in \{U,
W\}$, such that $F$ is reconfigurable from $S_F$. 
Let $S \in \{U, W\}^{\mathcal{C} \cup \mathcal{P}}$ be a tuple corresponding
to feasible choice of the side of the bipartition for each item in $\mathcal{C}
\cup \mathcal{P}$. Furthermore, let $X_s$ be the set of $M_s$-free vertices of
$G$.  Based on the choice $S$, we construct two instances $I_U(S)$ and $I_W(S)$ of
\MDIST on maximum matchings as follows.  Let
\[
  M_U(S) := \bigcup_{F \in \mathcal{C}\cup\mathcal{P} :\, S_F = W} (E(F) \cap M_s)  \quad\cup \bigcup_{F \in \mathcal{C}\cup\mathcal{P} :\, S_F = U} (E(F) \cap M_t)  \enspace,
\]
and let $I_U(S) := (G - (X_s \cap W), M_s, M_U(S))$ and $I_W(S) := (G - (X_s \cap U),
M_U(S), M_t)$. Note that $M_s$ and $M_U(S)$ are maximum in $G - (X_s \cap W)$
and $M_U(S)$ and $M_t$ are maximum in $G - (X_s \cap U)$.

Let us now describe the algorithm that outputs a shortest transformation from
$M_s$ to~$M_t$ if it exists.
For each feasible choice $S \subseteq \{U, W\}^{\mathcal{C} \cup \mathcal{P}}$
we invoke Lemma~\ref{lemma:fpt:max} on $I_U$ (resp., $I_W$)  to obtain a
transformation $\alpha_1$ (resp., $\alpha_2$) from $M_s$ to $M_U(S)$ (resp.,
$M_U(S)$ to $M_t$) if it exists. Clearly, by combining $\alpha_1$ and
$\alpha_2$ we obtain a transformation from $M_s$ to $M_t$. Let us denote this
transformation by $\alpha(S)$ and by $|\alpha(S)|$ its length.
Let $S^* \in \{U, W\}^{\mathcal{P}\cup\mathcal{C}}$ such that $|\alpha(S^*)|$
is minimal. If there is no transformation for any choice $S$, then
$|\alpha(S^*)| = \infty$.
According to Lemma~\ref{lemma:fpt:non-iwm} we determine in polynomial
time the length $|\beta^*|$ of a shortest transformation from $M_s$ to $M_t$
via a matching that is not inclusion-wise maximal. 
If $|\alpha(S^*)|$ and $|\beta^*|$ are both not finite, then $(G, M_s, M_t)$
must be a \no-instance. 
Otherwise, we output output either the transformation of length $|\beta^*|$ via
a matching that is not inclusion-wise maximal, or a transformation of length
$|\alpha(S^*)|$ via $M_U(S^*)$, depending on which is shorter.

The running time of the algorithm is dominated by the computation of $|\alpha(S^*)|$
and $S^*$. Both $|\alpha(S^*)|$ and $S^*$, as well as a corresponding
transformation if it exists can be computed in time $2^{d/2} \cdot
2^{d/2}\cdot n^{O(1)}$ by Lemma~\ref{lemma:fpt:max}. The leading factor of
$2^{d/2}$ bounds the number of feasible choices of $S \in \{U, W\}^{\mathcal{P}
\cup \mathcal{C}}$. The overall running time of the algorithm is therefore
bounded by $2^d \cdot n^{O(1)}$ as claimed.
The correctness of the algorithm is a consequence of the following lemma, which
ensures that in the case that no optimal transformation visits a matching that
is not inclusion-wise maximal, we have that an optimal choice of $S\subseteq
\{U, W\}^{\mathcal{C} \cup \mathcal{P}}$ also results in an optimal
transformation from $M_s$ to $M_t$.

\begin{lemma}[$\ast$]
  Suppose that no shortest transformation from $M_s$ to $M_t$ has an 
  intermediate matching that is not inclusion-wise maximal.
  Let $\tau := M_0, M_1, \ldots, M_m$ be a shortest transformation from $M_s$
  to $M_t$ of length $m$. Then $|\alpha(S^*)| \leq m$.
  \label{lemma:fpt:correctness}
\end{lemma}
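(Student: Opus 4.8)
The plan is to bootstrap from the fixed shortest transformation $\tau=M_0,\dots,M_m$, which by hypothesis visits only inclusion-wise maximal matchings. By Lemma~\ref{lemma:fpt:max}, for every feasible choice $S$ we have $|\alpha(S)|=\dist_{G-(X_s\cap W)}(M_s,M_U(S))+\dist_{G-(X_s\cap U)}(M_U(S),M_t)$, so it suffices to exhibit one feasible $S$ for which this sum is at most $m$. I would read $S$ off from $\tau$ itself and then show that $\tau$ can be split, after a length-preserving rearrangement, into a transformation from $M_s$ to $M_U(S)$ living in $G-(X_s\cap W)$ followed by a transformation from $M_U(S)$ to $M_t$ living in $G-(X_s\cap U)$.

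The first step is to analyse how $\tau$ treats the components of $M_s\symdiff M_t$. All matchings $M_i$ have the same cardinality $k$ and, by hypothesis, are maximal; since $G$ is bipartite this forces the number of $M_i$-exposed vertices on each side of the bipartition to equal the constants $|U|-k$ and $|W|-k$ throughout $\tau$. Reconfiguring a cycle $C\in\mathcal C$ is possible only by routing an exposed vertex to $C$ along an alternating path and back (the analysis underlying Lemma~\ref{lemma:cyclereconf}), and since every interior vertex of such a path is matched, the exposed vertex used lies on a single, well-defined side, which I take to be $S_C$; for a path $P\in\mathcal P$ the value $S_P$ is forced by which side carries its exposed vertices (if neither side works, the instance is a \no{}-instance). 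The technical heart of the proof is an exchange argument showing that $\tau$ can be rearranged, without increasing its length, into \emph{phased} form: the steps that touch each component form a contiguous block, distinct blocks do not interleave, every block with $S_F=U$ precedes every block with $S_F=W$, and the alternating path used to reconfigure a cycle avoids every other component of $M_s\symdiff M_t$ and every exposed vertex on the opposite side.

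Granting the phased form, set $S:=(S_F)_F$ and let $j$ be the last step of the final $U$-block. Each block changes only the edges of its own component (in particular, reconfiguring a cycle restores its alternating path, since the resulting cycle is fully matched by $M_t$-edges), so $M_j$ agrees with $M_t$ on every $F$ with $S_F=U$ and with $M_s$ on every $F$ with $S_F=W$; that is, $M_j=M_U(S)$. The prefix $M_0,\dots,M_j$ consists of $U$-blocks only and therefore never touches a vertex of $X_s\cap W$ (those vertices stay exposed, by the invariance above), so it is a valid transformation from $M_s$ to $M_U(S)$ in $G-(X_s\cap W)$ of length $j$; symmetrically, the suffix $M_j,\dots,M_m$ is a valid transformation from $M_U(S)$ to $M_t$ in $G-(X_s\cap U)$ of length $m-j$. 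Consequently $\dist_{G-(X_s\cap W)}(M_s,M_U(S))+\dist_{G-(X_s\cap U)}(M_U(S),M_t)\le j+(m-j)=m$, and hence $|\alpha(S^*)|\le|\alpha(S)|\le m$.

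The step I expect to be the main obstacle is establishing the phased normal form: de-interleaving the per-component reconfigurations of a shortest transformation, scheduling all $U$-side blocks before all $W$-side ones, and simplifying the alternating paths used for cycles, all the while not increasing the length. The natural plan is to select, among the shortest transformations (all of which are inclusion-wise maximal by hypothesis), one that is extremal for a suitable potential — for instance lexicographically minimizing the positions of the $W$-side steps and of the ``detour'' steps — and then to show by local exchanges that such a transformation is already phased; the arguments rely on distinct components of $M_s\symdiff M_t$ being vertex-disjoint and on the per-side exposed-vertex counts being invariant along every transformation. Given the normal form, the remaining steps (the identity $M_j=M_U(S)$, the fact that the prefix and suffix live inside the respective vertex-deleted subgraphs, and the final inequality) are routine.
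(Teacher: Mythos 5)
Your high-level plan --- split $\tau$ into a ``$U$-phase'' followed by a ``$W$-phase'', identify the intermediate matching with $M_U(S)$, and conclude $|\alpha(S^*)|\le|\alpha(S)|\le m$ --- matches the paper's, but the step you yourself flag as the main obstacle, the phased normal form, is where the argument breaks, and in the strength you state it is more than can be had. You require that the steps touching each component of $M_s\symdiff M_t$ form contiguous, non-interleaving blocks and that the alternating path used to reconfigure each cycle avoids every other component. However, shortest transformations between maximum matchings correspond to optimal directed Steiner trees (Lemma~\ref{lemma:fpt:max}): when several cycles are reached from the same exposed vertex along alternating paths sharing a long common prefix, an optimal transformation reconfigures that prefix once and branches, so forcing each cycle to be served by its own fully traversed and fully undone alternating path strictly increases the length. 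Hence the per-component de-interleaving cannot be achieved by length-preserving local exchanges, and no proof of it is offered beyond the suggestion of an extremal choice.

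The paper's proof sidesteps this entirely and, crucially, uses the hypothesis in a way your proposal does not. It writes $\tau$ as a sequence of exchanges $(e_i,f_i)$ with $e_i=v_iw_i$ removed and $f_i=u_iv_i$ added, and partitions the \emph{individual exchanges} (not component blocks) into $\tau_1$ (those with $u_i\in U$) and $\tau_2$ (those with $u_i\in W$), preserving the relative order within each class. It then shows that executing all of $\tau_1$ before all of $\tau_2$ is feasible and lands on some $M_U(S)$: every potential obstruction --- an exchange of $\tau_1$ waiting on one of $\tau_2$, or an edge placed by $\tau_1$ that $\tau_2$ would still have to move --- yields an intermediate matching $M'$ that is not inclusion-wise maximal together with a transformation from $M_s$ to $M_t$ via $M'$ of length at most $m$ (via Lemma~\ref{lemma:ptime:msr}), contradicting the assumption that no shortest transformation passes through a non-maximal matching. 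Your proposal invokes that assumption only to conclude that every $M_i$ is maximal, which is not where its force lies. To complete your argument you would either have to prove only the weaker reordering (all $U$-side exchanges first) by this contradiction mechanism --- which is exactly the paper's proof --- or supply a genuinely new, length-preserving de-interleaving argument, which I do not see how to do.
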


\section{Exact Distance and Diameter}
\label{sec:diameter}

We will consider the exact versions of the problems of
determining the distance of two matchings and the diameter of the
reconfiguration graph. 
Before presenting our hardness result for these problems, let us first prove
that we can determine in polynomial if the diameter of the reconfiguration
graph of matchings is finite.  In other words, for any $k \geq 0$, we can
determine in polynomial time if ${\matchgraph(G)}$ is connected.

Note that proof of~\cite[Proposition 2]{Ito:11} provides a polynomial-time
algorithm for deciding, whether the distance of two given matchings is finite.
We show that there is a polynomial-time algorithm that decides if the diameter
of $\matchgraph(G)$ is finite. To this end we need a condition that
characterizes \yes-instances of \MR, which was given in~\cite[Lemma 1]{Ito:11},
as well as the \emph{Edmonds-Gallai decomposition} of the vertex set of
$G$~\cite[Ch.~24.4b]{Schrijver:CO}.  
Using these two ingredients, we can check efficiently if the diameter of
$\matchgraph(G)$ is finite.

\begin{theorem}[$\ast$]
  There is a polynomial-time algorithm that, given a graph $G$ and a number $k
  \in \nat$, decides if $\matchgraph(G)$ is connected.
  \label{thm:connected}
\end{theorem}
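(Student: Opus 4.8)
The plan is to reduce the question to a small number of polynomially-checkable conditions, separated according to the relation between $k$ and the matching number $\nu(G):=\max\{|M| : M\text{ a matching of }G\}$, which we compute first by Edmonds' algorithm. If $k>\nu(G)$ there are no matchings of size $k$ and $\matchgraph(G)$ is the empty graph, a degenerate case we answer by convention. If $0\le k<\nu(G)$, I claim $\matchgraph(G)$ is always connected; this is where Lemma~\ref{lemma:ptime:msr} does the work. If $k=\nu(G)$ — the interesting case — every matching of size $k$ is maximum, so Lemma~\ref{lemma:cyclereconf} governs every pair, and I give an Edmonds--Gallai characterization of connectivity.

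For $0\le k<\nu(G)$: if $k=0$ then $\matchgraph(G)$ is a single vertex. If $0<k<\nu(G)$, first note that a non-inclusion-wise-maximal matching of size $k$ exists: delete $\nu(G)-k$ edges from a maximum matching; any deleted edge can still be added, so the result is not maximal. By Lemma~\ref{lemma:ptime:msr} any two non-maximal matchings of size $k$ are at finite distance, hence lie in one component of $\matchgraph(G)$. It therefore suffices to show that every matching of size $k$ reaches a non-maximal one: given a \emph{maximal} matching $M$ with $|M|=k<\nu(G)$, pick an $M$-augmenting path $v_0v_1\cdots v_{2j+1}$ (maximality forces $j\ge 1$), replace $M$ by $M-v_1v_2+v_0v_1$ in one exchange, and observe that $v_2v_3\cdots v_{2j+1}$ is now an augmenting path of length $2j-1$; iterating $j$ times yields a matching with an augmenting path of length one, i.e.\ a non-maximal matching. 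Hence $\matchgraph(G)$ is connected whenever $k<\nu(G)$.

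The core case is $k=\nu(G)$. Let $(D(G),A(G),C(G))$ be the Edmonds--Gallai decomposition, computable in polynomial time. I claim $\matchgraph(G)$ is connected if and only if $G[C(G)]$ has a \emph{unique} perfect matching, a property testable in polynomial time by taking one perfect matching of $G[C(G)]$ and checking for an alternating cycle. The argument rests on three standard Edmonds--Gallai facts: (a) every maximum matching of $G$ restricts to a perfect matching of $G[C(G)]$ and matches the vertices of $A(G)$ into $D(G)$; (b) consequently any alternating cycle of a pair of maximum matchings $M,M'$ (every edge of which is an $M$- or an $M'$-edge) either lies inside $G[C(G)]$ or avoids $C(G)$ altogether and then meets $D(G)$; (c) for any maximum matching $M$, the vertices reachable from an $M$-exposed vertex by an $M$-alternating path are exactly $D(G)\cup A(G)$, so none lies in $C(G)$. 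If $G[C(G)]$ has two perfect matchings, extend a common skeleton outside $C(G)$ to get maximum matchings $M,M'$ of $G$ with $M\symdiff M'\subseteq E(G[C(G)])$ a nonempty union of alternating cycles; by (c) no $M$-alternating path from an $M$-exposed vertex reaches these cycles, so by Lemma~\ref{lemma:cyclereconf} $M$ and $M'$ are not connected. Conversely, if $G[C(G)]$ has a unique perfect matching, then for any maximum matchings $M_s,M_t$ and any $(M_s,M_t)$-alternating cycle $C$, fact (b) rules out $C\subseteq C(G)$ (there is no alternating cycle there), so $C$ meets $D(G)$ at some vertex $v$, and (c) supplies the $M_s$-alternating path from an $M_s$-exposed vertex to $v$ required by Lemma~\ref{lemma:cyclereconf}; hence every pair of maximum matchings is connected.

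Putting the cases together, the algorithm outputs ``connected'' precisely when $k<\nu(G)$, or $k=\nu(G)$ and $G[C(G)]$ has a unique perfect matching; computing $\nu(G)$, the Edmonds--Gallai decomposition, and the uniqueness test are all polynomial. The main obstacle is making the $k=\nu(G)$ direction watertight, i.e.\ aligning the parity and alternation conditions in Lemma~\ref{lemma:cyclereconf} with fact (c) so that ``$C$ meets $D(G)$'' genuinely yields an alternating path ``connecting an $M_s$-free vertex to $C$'', and dually verifying that an alternating cycle trapped inside $G[C(G)]$ really cannot be reconfigured — precisely the places where the structure of the Edmonds--Gallai decomposition is essential.
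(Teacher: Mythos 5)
Your proof is correct and follows essentially the same route as the paper: reduce to the case $k=\nu(G)$ (the paper handles $k<\nu(G)$ by citing Proposition~1 of Ito et al., where you reprove it via Lemma~\ref{lemma:ptime:msr} and augmenting-path shortening) and then characterize connectivity by uniqueness of the perfect matching on $G[C(G)]$ using the Edmonds--Gallai decomposition and Lemma~\ref{lemma:cyclereconf}. The only cosmetic difference is that for the negative direction the paper argues directly about the first step that would move an edge inside $G[C(G)]$, whereas you invoke the ``only if'' direction of Lemma~\ref{lemma:cyclereconf}; both are sound.
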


For the remainder of this section we will consider maximum matchings. That is, we
restrict our attention to instances of \EMDIST and \EMDIAM, where the number $k$
is the equal to the size of a maximum matching of the input graph $G$. Using a similar
construction to the one from Section~\ref{sec:inapprox:construction}, we show
that the problem of \EMDIST is complete for the class $\D^\P$, which was
introduced in~\cite{PY:84} as the following class of languages
\[
  \D^\P := \{ L_1 \cap L_2 \mid L_1 \in \NP,\,L_2 \in \coNP \}
\]
From the $\D^\P$-completeness of \EMDIST, the $\D^\P$-hardness of \EMDIAM
will follow in a relatively straightforward manner.  To show that \EMDIST is
$\D^\P$-hard we use a reduction from \textsc{Exact Vertex Cover}, which is
given a follows.

\begin{quote}
  \textsc{Exact Vertex Cover} \\
  \textbf{Input:} Graph $G$ and number $\ell$.\\
  \textbf{Question:} Is $\ell$ the minimum size of a vertex cover? 
\end{quote}

\paragraph*{The Construction.} 
Let $H = (V, E)$ be a graph. Starting with the empty graph, we construct a graph $G$ and two matchings $M$ and $N$ of $G$ as follows. 
We first create a vertex $t$. For each $e \in E$,
we create a 4-cycle $C_e$ on the vertices $c_e^1, c_e^2, c_e^3, c_e^4$.  Then, for
each $v \in V$, we create two vertices~$p_v^1$ and $p_v^2$ and create a path $P_v$ $p_v^1 p_v^2 t$.  
Now, for $e \in E$ and $v \in V$, we add an edge $c_e^1p_v^1$ whenever $e$ is
incident to $v$. In a final step we create new vertices $q_1,\ldots,q_6$ and
add a path $t,q_1,q_2,\ldots,q_6$ to $G$. Note that we added this path in oder
to avoid a case analysis when we determine the maximal distance of any two
matchings. Observe that $G$ is bipartite and that a maximum matching
in $G$ has precisely one unmatched vertex. Let $M$ and $N$ be two maximum
matchings that both leave $t$ exposed and are disjoint on $C_e$ for each $e \in
E$. This completes the construction of $G$, $M$, and $N$.  An illustration of
the construction where $H$ is a cycle on three vertices is shown in
Figure~\ref{fig:diamexample}. The dashed edges are in $M$ and the dotted edges
are in $N$. 

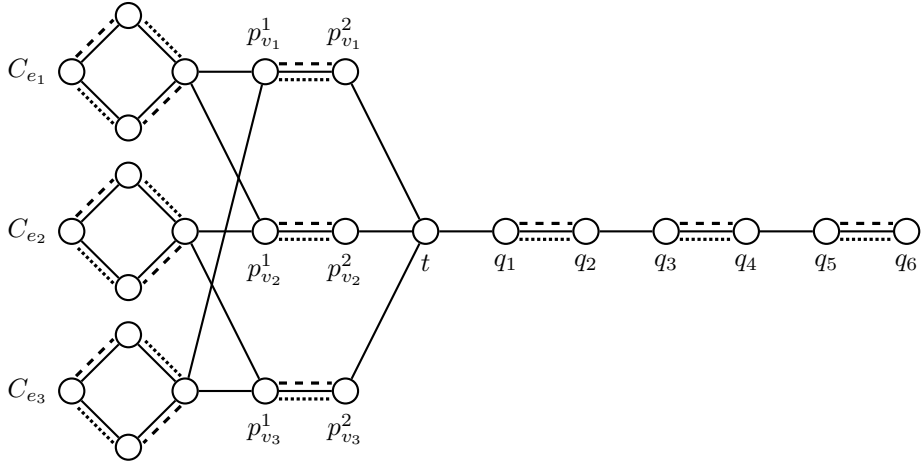
\begin{figure}[t]
  \centering
  \begin{tikzpicture}[vertex/.style={shape=circle,thick,draw,node
    distance=3em},terminal/.style={shape=rectangle,thick,draw,node
    distance=3em},medge/.style={very thick,dashed},nedge/.style={very
    thick,densely dotted}]
    \node[vertex]			(c11)	{};
    \node[vertex,below left of=c11]	(c12)	{};
    \node[vertex,above left of=c12,label=left:$C_{e_1}$]	(c13)	{};
    \node[vertex,above right of=c13]	(c14)	{};

    \node[vertex,below=5em of c11]	(c21)	{};
    \node[vertex,below left of=c21]	(c22)	{};
    \node[vertex,above left of=c22,label=left:$C_{e_2}$]	(c23)	{};
    \node[vertex,above right of=c23]	(c24)	{};

    \node[vertex,below=5em of c21]	(c31)	{};
    \node[vertex,below left of=c31]	(c32)	{};
    \node[vertex,above left of=c32,label=left:$C_{e_3}$]	(c33)	{};
    \node[vertex,above right of=c33]	(c34)	{};

    \node[vertex,right of=c11,label=above:{$p_{v_1}^1$}] (p11) {};
    \node[vertex,right of=c21,label=below:{$p_{v_2}^1$}] (p21) {};
    \node[vertex,right of=c31,label=below:{$p_{v_3}^1$}] (p31) {};
    \node[vertex,right of=p11,label=above:{$p_{v_1}^2$}] (p12) {};
    \node[vertex,right of=p21,label=below:{$p_{v_2}^2$}] (p22) {};
    \node[vertex,right of=p31,label=below:{$p_{v_3}^2$}] (p32) {};
    \node[vertex,right of=p22,label=below:{$t$}] (t) {};
    \node[vertex,right of=t,label=below:{$q_1$}] (q1) {};

    % draw Q
    \foreach \s in {2, ..., 6} {
      \pgfmathparse{int(\s-1)}
      \edef\prv{\pgfmathresult}
      \node[vertex,right of=q\prv,label=below:{$q_\s$}] (q\s) {};
      \draw[thick] (q\prv) -- (q\s);
    }

    % draw cycle edges
    \foreach \s in {1,...,4} {
      \pgfmathparse{int(Mod(\s,4)+1)}
      \edef\nxt{\pgfmathresult}
      \draw[thick] (c1\s) -- (c1\nxt);
      \draw[thick] (c2\s) -- (c2\nxt);
      \draw[thick] (c3\s) -- (c3\nxt);
    }

    % other edges
    \draw[thick] (c11) -- (p11) -- (c31);
    \draw[thick] (c11) -- (p21) -- (c21);
    \draw[thick] (c21) -- (p31) -- (c31);
    \draw[thick] (p11) -- (p12) -- (t);
    \draw[thick] (p21) -- (p22) -- (t);
    \draw[thick] (p31) -- (p32) -- (t);
    \draw[thick] (t) -- (q1);

    % draw matching edges on C_e
    \foreach \s in {1, 2, 3} {
        \draw[medge] ([xshift=2pt,yshift=-2pt]c\s1.south west) -- ([xshift=2pt,yshift=-2pt]c\s2.north east);
        \draw[medge] ([xshift=-2pt,yshift=2pt]c\s3.north east) -- ([xshift=-2pt,yshift=2pt]c\s4.south west);
        \draw[nedge] ([xshift=2pt,yshift=2pt]c\s1.north west)  -- ([xshift=2pt,yshift=2pt]c\s4.south east);
        \draw[nedge] ([xshift=-2pt,yshift=-2pt]c\s2.north west) -- ([xshift=-2pt,yshift=-2pt]c\s3.south east);
    }
    % matching edges
    \foreach \s in {1,2,3} {
        \draw[medge] ([yshift=3pt]p\s1.east) -- ([yshift=3pt]p\s2.west);
        \draw[nedge] ([yshift=-3pt]p\s1.east) -- ([yshift=-3pt]p\s2.west);
    }
    \foreach \s in {1,3,5} {
        \pgfmathparse{int(\s+1)}
        \edef\nxt{\pgfmathresult}
        \draw[medge] ([yshift=3pt]q\s.east) -- ([yshift=3pt]q\nxt.west);
        \draw[nedge] ([yshift=-3pt]q\s.east) -- ([yshift=-3pt]q\nxt.west);
    }
  \end{tikzpicture}
  \caption{Example of the construction of $G$, $M$, $N$ from the graph $H = C_3$.}
  \label{fig:diamexample}
\end{figure}

\paragraph*{Proof of Theorem~\ref{thm:dpc} (sketch)}
By~\cite[Theorem 5.4]{Cai:88}, the problem \textsc{Exact Vertex Cover} is complete
for the class $\D^\P$. Theorem~\ref{thm:dpc} now follows directly from this result
and the next two lemmas. The key insight is that if we know the size of a
smallest vertex cover, then we know the distance of the two matchings
and the diameter of the reconfiguration graph after performing the construction
above.

\begin{lemma}[$\ast$]
  \label{lemma:dist:cover}
  Let $H$ be a graph and let $G$, $M$, $N$ be the graph and the two matchings
  obtained according to the construction above. Then $\dist_{\matchgraph(G)}(M,
  N) = 3|E(H)| + 2\tau(H)$, where $\tau(H)$ is the size of a smallest vertex
  cover of $H$.
\end{lemma}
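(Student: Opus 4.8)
The plan is to establish the equality $\dist_{\matchgraph(G)}(M, N) = 3|E(H)| + 2\tau(H)$ by proving matching upper and lower bounds. First I would analyze the structure of the reconfiguration. Note that $M$ and $N$ agree outside the cycles $C_e$, and differ on exactly two edges on each $C_e$, so $|M \symdiff N| = 4|E(H)|$. Since the only $M$-free vertex is $t$ (and the only $N$-free vertex is $t$ as well), every exchange must propagate along an $M_i$-alternating path whose free end is ultimately $t$ (or, transiently, some vertex freed earlier in the sequence). Crucially, reconfiguring a single $4$-cycle $C_e$ "in place" is impossible because a $4$-cycle has no alternating path to a free vertex unless we first bring the free vertex (conceptually the exposure at $t$) to $c_e^1$ by sliding the matching along the path $P_v$ for some $v$ incident to $e$, at the cost of the two edges of $P_v$ plus the edge $c_e^1 p_v^1$. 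This is where the vertex cover appears: the set of vertices $v$ whose path $P_v$ is "activated" to reach $t$ must cover all edges $e \in E(H)$, since each $C_e$ needs at least one such $v$ incident to it.

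For the \textbf{upper bound}, given a minimum vertex cover $S$ of $H$ with $|S| = \tau(H)$, I would construct an explicit transformation. For each $v \in S$, slide the exposure from $t$ along $P_v$ to $c_e^1$, visiting in turn each cycle $C_e$ with $e$ incident to $v$ that has not yet been reconfigured; for each such $C_e$, perform the three exchanges that rotate the matching around the $4$-cycle from the $M$-state to the $N$-state (pushing the exposure around the cycle and back out); then slide the exposure back to $t$. A careful count gives: $3$ exchanges per cycle ($3|E(H)|$ total, since each edge of $H$ is handled exactly once by choosing, for each $e$, one endpoint in $S$), plus $2$ exchanges per activated path $P_v$ ($v \in S$) for going out and returning — but since we enter and leave via the two edges of $P_v$ and the connecting edge $c_e^1 p_v^1$, the bookkeeping needs care; the net extra cost is $2\tau(H)$. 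I would present this count as a short explicit computation, being careful that the $q_i$-path is never needed (it was added only for the diameter argument) and that visiting several cycles along one $P_v$ does not incur repeated path costs.

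For the \textbf{lower bound}, I would argue that any reconfiguration sequence $\tau$ from $M$ to $N$ induces a vertex cover of $H$ of size at most $(|\tau| - 3|E(H)|)/2$. The key invariant is that to change the matching on $C_e$, at some step an edge incident to $c_e^1$ on a path $P_v$ must be in the symmetric difference of consecutive matchings, which forces $p_v^1$ to have been exposed at that moment; tracing the exposure back, it must have originated at $t$, so all of $P_v$'s two edges were toggled. Let $S$ be the set of $v \in V(H)$ such that $P_v$ is used in this way; then $S$ is a vertex cover (every $C_e$ forces some incident $v$ into $S$), each $v \in S$ contributes at least $2$ exchanges on $P_v$, and each $C_e$ contributes at least $3$ exchanges on $C_e$ (a $4$-cycle cannot be flipped from one perfect matching to the other in fewer than $3$ token jumps, since $2$ would leave the two configurations at symmetric-difference $4$). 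Summing, $|\tau| \geq 3|E(H)| + 2|S| \geq 3|E(H)| + 2\tau(H)$.

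The \textbf{main obstacle} I anticipate is the lower-bound bookkeeping: rigorously showing that the "cost" of $3$ per cycle and $2$ per activated path cannot be shared or amortized across cycles and paths — i.e., that exchanges on $C_e$, exchanges on $P_v$, and exchanges needed to move exposures around are genuinely disjoint and each is counted once. I would handle this by a charging argument that assigns each exchange in $\tau$ to the unique edge of $G$ it removes (every exchange removes exactly one edge), partitioning $E(G)$ into the cycle-edges $\bigcup_e E(C_e)$, the path-edges $\bigcup_v E(P_v)$, the connector edges $c_e^1 p_v^1$, and the irrelevant $q_i$-edges; then lower-bounding the number of removed edges in each class using the exposure-tracing invariant. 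This cleanly separates the $3|E(H)|$ term (cycle edges) from the $2\tau(H)$ term (path edges of activated $P_v$'s, possibly with connector edges absorbed into the same count), and avoids any delicate interleaving analysis.
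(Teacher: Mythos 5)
Your proposal is correct and follows essentially the same approach as the paper: the upper bound comes from activating $P_v$ for each $v$ in a minimum vertex cover (two exchanges per activation) and rotating each $C_e$ with three exchanges, and the lower bound from the fact that reconfiguring $C_e$ forces some incident $p_v^1$ to be exposed, so the activated vertices form a vertex cover. The only difference is in the lower-bound bookkeeping --- the paper counts the exposure hand-offs temporally (one step to expose the first $p_v^1$, two per switch, one to restore $t$), whereas you charge each exchange to the edge it removes and partition $E(G)$ into cycle, connector, and path edges --- which, if anything, handles the double-counting concern you raise more cleanly than the paper's own (rather terse) argument.
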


\begin{lemma}[$\ast$]
  \label{lemma:diam:ub}
  Let $H$ be a graph and let $G$, $M$, $N$ be the graph and the two matchings
  obtained according to the construction above. Then
    $\diam_{\matchgraph(G)} = \dist_{\matchgraph(G)}(M,N)+6$.
\end{lemma}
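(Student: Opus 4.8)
The plan is to show two inequalities: $\diam(\matchgraph(G)) \ge \dist(M,N)+6$ and $\diam(\matchgraph(G)) \le \dist(M,N)+6$. For the lower bound I would exhibit two maximum matchings whose distance is exactly $\dist(M,N)+6$. The natural candidates are obtained by modifying $M$ and $N$ on the appended path $t,q_1,\ldots,q_6$: take $M' := M$ together with the ``pushed'' matching $q_2q_3, q_4q_5$ on the $q$-path leaving $q_1$ and $q_6$ exposed rather than $q_1q_2,q_3q_4,q_5q_6$ leaving $q_6$ exposed — wait, more carefully, since a maximum matching of $G$ has exactly one exposed vertex and on the tail $t$-$q_1$-$\cdots$-$q_6$ the exposed vertex can be shifted. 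Concretely I would pick $M'$ obtained from $M$ by relocating the exposed vertex along the whole tail (costing $3$ token jumps to ``flip'' the alternating path of length $6$, i.e.\ three exchanges each reusing... actually flipping an alternating path through $k$ matching edges costs $k$ TJ moves) and likewise for $N'$; arranging $M'$ and $N'$ so that the tail-reconfiguration contributes $6$ extra moves that cannot overlap with the $C_e$-reconfiguration. The role of the long-enough tail $q_1,\ldots,q_6$ (as the construction explicitly notes, it was added ``to avoid a case analysis'') is precisely that it forces the extremal pair to be of this clean shape, so I'd use it to argue no other pair can do better.

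For the upper bound — which I expect to be the main obstacle — I would take an arbitrary pair of maximum matchings $M_1, M_2$ of $G$ and bound $\dist(M_1,M_2)$. The key structural fact, inherited from the construction and from Lemma~\ref{lemma:cyclereconf}, is that $G$ is bipartite with exactly one exposed vertex in any maximum matching, and the only ``cycles'' available in any symmetric difference $M_1 \symdiff M_2$ are the $4$-cycles $C_e$ (the rest of $G$ is essentially a tree hanging off $t$ plus the tail). Each $C_e$ contributes $2$ to $|M_1 \symdiff M_2|$ when the two matchings disagree there, each alternating path contributes its length, and the exposed-vertex positions of $M_1$ and $M_2$ may differ. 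I would argue that the worst case for $\dist(M_1,M_2)$ is maximized exactly when $M_1$ disagrees with $M_2$ on every $C_e$ (this reproduces the $3|E(H)|$ term via the reconfiguration cost of each $4$-cycle, which needs a detour and costs $3$ per cycle rather than $2$, exactly as in Lemma~\ref{lemma:dist:cover}'s proof) together with the most expensive placement of exposed vertices and alternating paths, and that this last contribution is bounded by $6$ — because moving the exposed vertex from anywhere in $G$ to anywhere else, and reconfiguring all the short $P_v$-paths, costs at most $6$ more than the ``$M$ vs.\ $N$'' situation, with equality realized only by the tail gadget. Here is where the constant $6$ (the tail length) is calibrated: the tail is long enough that relocating the hole along it is the dominant extra cost, and short enough that the bound is tight. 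I would make this precise by noting $\dist(M_1,M_2) \le (\text{cost to reconfigure all }C_e) + (\text{cost to fix the }P_v\text{-region and move the hole})$, bounding the first summand by $3|E(H)| + 2\tau(H)$ using exactly the argument of Lemma~\ref{lemma:dist:cover} (the vertex-cover structure governs how cheaply the $4$-cycle detours can share alternating paths through the $p_v$ vertices), and the second by $6$.

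Assembling the two bounds gives $\diam(\matchgraph(G)) = \dist(M,N) + 6$. The delicate point in the upper bound is verifying that the $\tau(H)$-term is genuinely maximized at the configuration where all $C_e$ disagree and that introducing additional disagreement or moving the hole off the tail never increases the total past $\dist(M,N)+6$; I would handle this by a short exchange/interchange argument showing that any reconfiguration task decomposes into the $C_e$-part (monotone in the set of disagreeing cycles, and governed by vertex covers exactly as in Lemma~\ref{lemma:dist:cover}) plus an independent ``linear'' part on the tree-like remainder whose diameter is a fixed small constant equal to $6$ by inspection of the $t$--$q_i$ path and the $P_v$'s.
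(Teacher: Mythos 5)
Your proposal follows essentially the same route as the paper: the lower bound is obtained by shifting the exposed vertex from $t$ to $q_6$ in both $M$ and $N$ (each shift costing the $3$ forced tail moves, which cannot overlap with any $C_e$-reconfiguration), and the upper bound by observing that any extremal pair can be normalized to agree with $M$ and $N$ on the $4$-cycles and is then within distance $3$ of $M$ and $N$ respectively, giving $\diam \le \dist(M,N)+6$ by the triangle inequality. The paper phrases the upper bound as this triangle-inequality argument through $M$ and $N$ rather than your explicit cycle-part/tail-part decomposition, but the content is the same.
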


\bibliographystyle{abbrv}
\bibliography{bibliomatching}

\newpage
\appendix

\section{Edmonds-Gallai Decomposition}

A matching of a bipartite graph $(A \cup B, E)$ is called \emph{$A$-perfect},
if it matches each vertex of $A$.  
Let $G = (V, E)$ be a graph, let $M$ be a maximum matching of $G$ and let $X :=
\{ v \in V(G) \mid v \text{ is $M$-free}\}$.
Consider the following partition of the vertex set $V(G)$ into $D(G), A(G),
C(G) \subseteq V(G)$. 

\begin{itemize}
    \item $D(G) := \{ v \in V(G) \mid \text{there is an $M$-alternating path of even length from $X$ to $v$} \}$
    \item $A(G) := \{ v \in V(G) \setminus D(G) \mid \text{there is an $M$-alternating path from $X$ to $v$} \}$ 
    \item $C(G) := \{ v \in V(G) \mid \text{there is no $M$-alternating path from $X$ to $v$} \}$
\end{itemize}

The following classical theorem states that $D(G)$, $A(G)$, and $C(G)$ depend
only on $G$, but not on the choice of the maximum matching.

\begin{theorem}[{Edmonds-Gallai decomposition, see~\cite[Ch.~24.4b]{Schrijver:CO}}]
    Let $G$ be a graph and $D(G)$, $A(G)$, $C(G)$ be given as above.  Then, a maximum
    matching of $G$ can be partitioned into
    \begin{enumerate}
        \item a perfect matching of $G[C(G)]$,
        \item a matching that leaves precisely one vertex unmatched in each component of $G[D(G)]$,
        \item and an $A$-perfect matching from $A(G)$ to $D(G)$.
    \end{enumerate}
    Furthermore, the partition of $G$ into $D(G),A(G),C(G)$ can be found in polynomial time.
    \label{thm:egd}
\end{theorem}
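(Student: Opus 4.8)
The plan is to reduce all three membership conditions to formulations that do not mention the chosen maximum matching, and only then read off the decomposition. Fix an arbitrary maximum matching $M$ of $G$ and let $X$ be its set of $M$-free vertices. The pivotal step is to prove that $v \in D(G)$ (there is an even-length $M$-alternating path from $X$ to $v$) if and only if $v$ is $M'$-free for \emph{some} maximum matching $M'$. For the forward implication, if $P$ is such an even alternating path then $M \symdiff E(P)$ is a matching of the same cardinality as $M$ that leaves $v$ exposed. For the converse, given a maximum matching $M'$ missing $v$, let $Q$ be the connected component of $M \symdiff E(M')$ containing $v$; since $v$ has degree one in $M \symdiff E(M')$, the set $Q$ is an $M$-alternating path with $v$ as an endpoint, and it can be neither $M$-augmenting nor $M'$-augmenting (either would contradict maximality), which forces the other endpoint into $X$ and $|E(Q)|$ to be even, so $v \in D(G)$. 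Because the right-hand side does not refer to $M$, the set $D(G)$ depends only on $G$; since $A(G) = N(D(G)) \setminus D(G)$ and $C(G) = V(G) \setminus (D(G) \cup A(G))$, the same holds for them, which proves the ``depends only on $G$'' part of the statement.

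Next I would establish the easy halves of the decomposition with respect to $M$. First, no edge of $G$ joins $D(G)$ to $C(G)$: if $dc \in E$ with $d \in D(G)$ and $c \in C(G)$ then appending $dc$ to an even alternating path from $X$ to $d$ gives an alternating path from $X$ to $c$, contradicting $c \in C(G)$. Second, every vertex $c \in C(G)$ is $M$-covered (otherwise $c \in X \subseteq D(G)$) and its $M$-partner $w$ again lies in $C(G)$, since an alternating path from $X$ to $w$ would either already contain $c$ or extend through $cw$ to reach $c$; hence $M$ restricted to $G[C(G)]$ is a perfect matching, giving item~(1). Third, each $v \in A(G)$ is $M$-covered (as $v \notin X$) and is reached by an odd-length alternating path ending in a non-matching edge, so extending it by the matching edge at $v$ shows the $M$-partner of $v$ lies in $D(G)$; thus the $M$-edges between $A(G)$ and $D(G)$ form a matching saturating $A(G)$, which is item~(3). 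What remains of $M$ is then the matching $M_D := M \cap E(G[D(G)])$ inside $G[D(G)]$, and counting gives that the number of $M$-exposed vertices equals $|D(G)| - |A(G)| - 2|M_D|$.

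The heart of the proof — and the step I expect to be the main obstacle — is that each component of $G[D(G)]$ is factor-critical, which I would isolate as \emph{Gallai's lemma}: a connected graph in which every vertex is missed by some maximum matching is factor-critical. I would prove this by a minimal-counterexample argument: such a graph $H$ cannot be complete (a complete graph with this property has odd order and is factor-critical), so there are two vertices $u,v$ at distance exactly two with a common neighbour $w$; comparing a maximum matching of $H$ missing $u$ with one missing $v$, and using that $H-w$ is factor-critical by minimality, one derives either an augmenting path in $H$ or a vertex of $H$ lying in every maximum matching, each contradicting the hypothesis or the minimality of $H$. This case analysis is the genuinely delicate part. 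Granting the lemma, each component $K$ of $G[D(G)]$ is factor-critical: any maximum matching of $G$ missing a prescribed vertex of $K$ (one exists by the equivalence of the first paragraph) restricts to a maximum matching of $K$ missing that vertex. In particular every $K$ has odd order, so $|M \cap E(K)| \le (|K|-1)/2$; summing over the components of $G[D(G)]$ and comparing with the deficiency count above — equivalently, invoking the Berge--Tutte formula at the Tutte set $U = A(G)$, whose odd components are precisely the components of $G[D(G)]$, together with a Hall-type argument matching $A(G)$ into distinct components — forces equality in each inequality, i.e.\ $M \cap E(K)$ is a near-perfect matching of $K$, which is item~(2).

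Finally, for the ``furthermore'' part I would note that, since the three sets are matching-independent, it suffices to compute one maximum matching $M$ (Edmonds' blossom algorithm runs in polynomial time) and then determine which vertices are reachable from $X$ by even-, resp.\ odd-, length $M$-alternating paths. This is a standard search with blossom contraction running in polynomial time, and its partition of $V(G)$ into the vertices reached evenly, reached only oddly, and unreached is exactly $D(G)$, $A(G)$, $C(G)$.
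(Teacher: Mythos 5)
The paper does not prove this theorem at all --- it is quoted from Schrijver~\cite[Ch.~24.4b]{Schrijver:CO} and used as a black box --- so there is no in-paper argument to compare against; I am judging your attempt on its own. Your first two paragraphs are sound: the characterization of $D(G)$ as the set of vertices missed by some maximum matching (via symmetric differences of matchings), the matching-independence of the partition, the absence of $D$--$C$ edges, and items (1) and (3) all go through essentially as you describe, and they also yield the ``furthermore'' part once a maximum matching is computed by the blossom algorithm.

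The genuine gap is in your derivation of item (2), at the sentence ``any maximum matching of $G$ missing a prescribed vertex of $K$ restricts to a maximum matching of $K$ missing that vertex.'' A component $K$ of $G[D(G)]$ is not a component of $G$: it is attached to $A(G)$, and a maximum matching $M'$ of $G$ may a priori match two or more vertices of $K$ into $A(G)$. In that case $M' \cap E(K)$ misses several vertices of $K$ and need not be maximum in $K$; an $(M'\cap E(K))$-augmenting path in $K$ whose endpoints are matched into $A(G)$ is not $M'$-augmenting in $G$, so maximality of $M'$ gives no contradiction. That the restriction \emph{is} maximum in $K$ is essentially equivalent to the structure theorem itself, so invoking it here is circular. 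The standard repair is the \emph{stability lemma}: for $a \in A(G)$ one has $D(G-a)=D(G)$, $A(G-a)=A(G)\setminus\{a\}$, $C(G-a)=C(G)$; deleting all of $A(G)$ then makes each $K$ a full component of the remaining graph, to which Gallai's lemma legitimately applies. You need this lemma (or an equivalent device) and it is missing. Relatedly, your closing count only produces the lower bound $\mathrm{def}(G)\ge c(G[D(G)])-|A(G)|$ from both the Tutte--Berge side and the direct side; to force equality in \emph{every} component you must exhibit a matching attaining it, which requires matching $A(G)$ into \emph{distinct} components of $G[D(G)]$. You name this ``Hall-type argument'' but do not supply it, and verifying Hall's condition for the bipartite graph between $A(G)$ and the components is itself a nontrivial deficiency argument (it, too, is usually routed through the stability lemma). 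Your sketch of Gallai's lemma is also shaky as stated ($H-w$ need not be connected or satisfy the inductive hypothesis, so ``factor-critical by minimality'' does not apply to it), but that lemma is classical and fixable; the structural gap around reducing to components is the real obstruction.
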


\section{Proofs Omitted from Section~\ref{sec:inapprox}}

\begin{proof}[Proof of Lemma~\ref{lemma:ptime:msr}]
  Let $I = (G, M_1, M_2)$ be an instance of \MDIST where at least one of $M_1$
  and $M_2$ is not inclusion-wise maximal. Without loss of
  generality, we assume that $M_1$ is not inclusion-wise maximal. Since $M_1$
  and $M_2$ are matchings, the graph on the edges $M_1 \symdiff M_2$ is composed of
  disjoint paths and cycles. Let $\opt(I)$ be the shortest length of a
  reconfiguration sequence that transforms $M_1$ into $M_2$. We show that
  $\opt(I)$ is either $|M_1 \symdiff M_2|/2$ or $|M_1 \symdiff M_2|/2 + 1$.
  The exact value can be determined as follows. The shortest reconfiguration
  sequence is equal to $|M_1 \symdiff M_2|/2$ if and only if one of the
  following statements is true.
  \begin{itemize}
   \item All the components of $M_1 \symdiff M_2$ are paths;
   \item One of the paths in $M_1 \symdiff M_2$ has an odd number of edges.
  \end{itemize}
  Note that $|M_1 \symdiff M_2|/2$ is indeed a lower bound on the length of the
  reconfiguration sequence. Assume that at least one of the above statements is
  true. If the symmetric difference of $M_1$ and $M_2$ only contains paths,
  then we can move tokens from $M_1$ to $M_2$ in a greedy fashion until the
  target configuration $M_2$ is reached. 
  Assume now that $M_1 \symdiff M_2$ contains a path with an odd number of
  vertices. Let us prove that there is a reconfiguration sequence that
  transforms $M_1$ into $M_2$ by induction on the number of cycles in $M_1
  \symdiff M_2$.  Since $|M_1| = |M_2|$, there is a path in $M_1 \symdiff M_2$
  containing more edges of $M_2$ than edges of $M_1$. We slide tokens on this path
  path as in the previous case, until a path $P$ consisting of a single
  $M_2$-edge remains in the symmetric difference. 
  Let $C$ be a cycle in $M_1 \symdiff M_2$. We can transform $C$ into a path,
  by moving a token from $C$ to $P$. This decreases the number of cycles in the
  symmetric difference by one and leaves a path that contains more $M_2$-edges
  than $M_1$-edges. We invoke the induction hypothesis and the statement
  follows.
 
  Assume now that $M_1 \symdiff M_2$ contains a cycle and does not
  contain a path with an odd number of edges. If there is a transformation
  of length $|M_1 \symdiff M_2|/2$, then, at each step, we must move a token
  from $M_1$ to $M_2$. We can do so, again in a greedy fashion, by moving
  tokens on paths in $M_1 \symdiff M_2$. At some point however, only cycles
  remain the symmetric difference. But then, we cannot move any token, such
  that the number of tokens on $M_2$ is increased. Therefore, there is no
  transformation from $M_1$ to $M_2$ of length $|M_1 \symdiff M_2|/2$.
  
  Let us finally prove that a transformation of length $|M_1 \symdiff M_2|/2+1$
  always exists. Since $M_1$ is not inclusion-wise maximal, it is possible to
  move any token of $M_1$ to some edge $e$, which is not incident to any edge
  of $M_1$. Let $f$ be an edge of $M_1$ in a cycle of $M_1 \symdiff M_2$.
  We can move the token of $f$ to $e$. Let $M_1'$ the resulting matching. Note
  that the cycle on which $f$ appeared is now a path of odd length in the
  symmetric difference. Moreover, we have that $|M_1' \symdiff M_2| \leq |M_1
  \symdiff M_2|$. By our previous arguments,
  there is a transformation of length at most $|M_1 \symdiff M_2|/2$ between
  $M_1'$ and $M_2$. It follows that there is a reconfiguration sequence of
  length $|M_1 \symdiff M_2|/2+1$ that transforms $M_1$ into $M_2$. 
  
  All the steps can indeed by performed in polynomial time, which concludes the
  proof.
\end{proof}

\begin{proof}[Proof of Lemma~\ref{lemma:inapprox:ub}]
	As usual for the token sliding operation, we will frequently say that we
	\emph{reconfigure the alternating path $P$} by ``sliding'' the matching edges
	one-by-one, until the other end-vertex of the path $P$ is $M_1$-free. In this
	procedure, one exchange is performed per matching-edge.
	
	Let $C = \{S_1, S_2, \ldots, S_k\}$ be a cover of $U$ of size $k = \opt(I)$.
	We construct a transformation from $M_1$ to $M_2$ of the desired size. For $1
	\leq i \leq k$, let $T_i \subseteq U$ be the items that are covered by $S_i$
	but not by $S_j$, $1 \leq j < i$. Note that since $C$ is a set cover, $U = \cup_{i=1}^k T_i$.
	
	Let us present a way to reconfigure $C_u \cap M_1$ to $C_u \cap 
	M_2$ for each $u \in T_i$ as follows in such a way, after these operations the resulting matching
	still contains the edges of $M'$ (recall that $M'$ is the matching without the edges of the $C_4$'s).
	First, we reconfigure the $M_1$-alternating path from $q_{S_i}^{1}$ to the $M_1$-free vertex $q_{S_i}^{L}$,
	which takes $L$ steps and leaves vertex $q_{S_i}^{1}$ exposed.
	Now, for each $u \in T_i$, we have the following path $P'$ in $G$: the subpath of $P_u$ from $p_u^1$ until the terminal vertex
	$p_u^j$ that is connected to some vertex $p_{S_i}^{j'}$ and then the subpath of $P_S$ from $p_{S_i}^{j'}$ to $q_{S_i}^1$.
	Note that by definition of $M'$ this path is an alternating path.
	We reconfigure the alternating path $P'$, which leaves $p_u^1$ exposed and takes at
	most $2(|S_i|+f_u)$ steps. Since $p_u^1$ is exposed, we can reconfigure the cycle $C_u \cap M_1$ to $C_u
	\cap M_2$ with three exchanges such that $q_u^1$ is again exposed. We finally
	undo the changes on $P'$ and have again $p_{S_i}^1$ exposed. We repeat this operation for every $u \in T_i$.
	We finally reconfigure back the path $Q_{S_i}$. Note that after all these steps, the resulting matching still
	contains the edges of $M'$.
	
	Let us now count the number of steps we performed to reconfigure $C_u \cap M_1$ to $C_u \cap 
	M_2$ for each $u \in T_i$. The reconfiguration of $Q_{S_i}$ costs  $2L$ steps 
	($L$ steps at the beginning and $L$ steps at the end to put it back). 
	Moreover, for each $u \in T_i$ we perform at most $3 + 2(|S|+f_u)$ exchanges. 
	
	So if we repeat this operation for every $S_i$, we know that we will reconfigure
	the matching on every $C_4$ since $\bigcup S_i = U$. So the total number of
	exchanges we performed is at most
	\[
	2L|C| + \sum_{1 \leq i \leq k} |T_i| (3 + 2(f+|S_i|)) \leq  2L|C| + 2|U| (2+f+d))
	\]
	as claimed.
\end{proof}

\begin{proof}[Proof of Lemma~\ref{lemma:inapprox:lb}]
	Since $M_1$ and $M_2$ are maximum, any $u \in U$ a transformation from $M_1$ to $M_2$ must
	have $p_u^1$ exposed at some point. Indeed if a matching edge on the
	cycle $C_u$ is moved, it has to be moved an edge incident to it and then, by construction
	on an edge incident to $p_u^1$. 
	
	Let us consider a transformation from $M_1$ to $M_2$.  Given a matching $M$,
	we say that a set $S \in \sets$ is \emph{active} if $q_{S}^{L}$ is covered by $M$. 
	Let $C \subseteq \sets$ be the sets that are active with respect to some matching of the
	transformation $\tau$. Let us prove that the sets $C$ cover the items $U$.
	Let $u \in U$. As we already mentioned, for each $u \in U$, the vertex $p_u^1$ has
	to be exposed at some step $M$ of the transformation. Let us prove that a set $S \in
	\sets$ containing $u$ is active with respect to $M$. 
	We can then define an
	$M$-alternating path starting on $p_u^1$ that alternates between edges in $E
	\setminus M$ and edges in $M$. Since $P_u$ is of even length and $p_u^1p_u^2$ is not in $M$
	either we have an augmenting path included in $P_u$ (a contradiction to the maximality of $M$) or this alternating path must leave $P_u$. By
	the construction of $G$, such an edge is of the form $e:=p_u^ip_S^j$ where $i$ is even and $j$ is odd. Note
	moreover, that 
	Let us follow the $M$-alternating path along $P_S$ and $Q_S$ until it is no longer
	possible. Three cases may occur: (i) the alternating path ends at $q_S^L$.  In
	this case $S$ is active and the conclusion holds since by construction $u \in S$. (ii) the alternating path
	ends at some other vertex $w$ of $P_S$ or $Q_S$. Then $w$ is exposed and we
	have thus discovered an augmenting path, a contradiction with the maximality of $M$.
	(iii) the alternating path leaves $P_S$
	via some vertex $p_S^r$ (since vertices of $Q_S$ have degree $2$ (their precesessors and successors in $Q_S \cup P_S$). 
	Then there is some $v \in U$ such that $M$ contains an edge $p_v^sp_S^r$, where $s$
	is even. Since it is no longer possible to follow $P_S$ in $p_v^s$, it means that $p_v^s$ is incident to 
	an edge of $M$ (otherwise we would take $p_S^rp_S^{r+1}$ in the alternating path). Since $e:=p_u^ip_S^j$ where $j$ is odd is in $M$ then $r$ must be even. By construction of $G$, $p_S^r$ has degree $2$ 
	and then there is no edge $p_v^sp_S^r$, a contradiction. So only case (i) can occur, and then a set $S$ containing $u$ is active for $M$.
	
	Let $C \subseteq \sets$ be the subset of
	sets that are active in some step of the transformation $\tau$.  Clearly,
	there is a polynomial-time algorithm $A'$ that outputs $C$ given $\tau$. By
	the discussion above, the set $C$ is a set cover. Let us finally prove that
	the size of $C$ is at most $|\tau| / 2L$.
	For each set $S \in C$ we have to reconfigure an alternating path from
	$q_S^1$ to $q_S^L$ twice (once to expose $q_S^1$ and once to expose again
	$q_S^L$), the output $C$ of $A'$ has cardinality at most $|\tau|/2L$.
\end{proof}

\begin{proof}[Proof of Theorem~\ref{thm:inapprox:msr}]
	Suppose we have an $f(n')$-factor approximation algorithm $A$ for \MDIST, where
	$n' = |I'|$. That is, $A$ computes a reconfiguration sequence from $M_1$ to
	$M_2$ of length at most $f(n')\cdot\opt(I')$.
	Then we can use algorithm $A'$ from Lemma~\ref{lemma:inapprox:lb} to compute
	a $2f(n')$-approximate solution of $I$:
	\begin{align*}
	A'(I) & \leq  \frac{A(I')}{2L} \leq \frac{f(n')\cdot \opt(I')}{2L} \\
	& \leq   \frac{f(n')(\opt(I) \cdot 2L + 2|U| (f+d+2))}{2L} \\
	& \leq  f(n') \opt(I) + 1 \\
	& \leq 2f(n') \opt(I)\enspace,
	\end{align*}
	where $A'(I)$ is the size of the cover $A'$ computes. The first inequality
	follows from Lemma~\ref{lemma:inapprox:lb}, the third inequality from
	Lemma~\ref{lemma:inapprox:ub} and the fourth one from the definition of $L$.
	Suppose for a contradiction that there is a
	polynomial-time $o(\log n')$-approximation algorithm for \MDIST, so we
	let $f(n') = o(\log n')$. By the construction of $G$ and $m =
	\poly(n)$, we have $n' = \poly(n)$. Then, by the
	reasoning above, we have a $2f(n^k)$-approximation algorithm for \textsc{Set
		Cover} for some constant $k$. But
	\[
	\limsup_{n\to\infty} \frac{2f(n^k)}{\log n} = \limsup_{n\to\infty} \frac{2f(n)}{\log n^{1/k}} = \limsup_{n\to\infty} \frac{2k f(n)}{\log n} = \limsup_{n'\to\infty} \frac{2k f(n')}{\log n'} = 0
	\]
	Therefore, a sublogarithmic-factor approximation guarantee for \MDIST implies a
	sublogarithmic-factor approximation guarantee for \textsc{Set Cover}. But
	there is no polynomial-time $(1-\varepsilon)\log n$-factor approximation algorithm
	for \textsc{Set Cover} for any $\varepsilon > 0$ unless $\P =
	\NP$~\cite{DS:14}.
\end{proof}

\begin{proof}[Proof of Corollary~\ref{cor:inapprox:nonmaximum}]
	We modify the construction from Section~\ref{sec:inapprox:construction}
	slightly, by adding a path $R$ $r_1, r_2, \ldots, r_{L'}$
	to the graph $G$ and joining $q_S^1$ to $r_1$, where $L' \geq 3|U|+(L+d+f)|\sets|$
	be odd. Furthermore, we add the edges $r_1r_2, r_3r_4, \ldots,
	r_{L'-2}r_{L'-1}$ to the two matchings $M_1$ and $M_2$. Clearly, the
	distance from $M_1$ or $M_2$ to a matching that is not inclusion-wise
	maximal is larger than the distance between $M_1$ and $M_2$. Hence, it is
	easily verified that lemmas~\ref{lemma:inapprox:ub}
	and~\ref{lemma:inapprox:lb} as well as Theorem~\ref{thm:inapprox:msr} hold
	also for the modified construction.
\end{proof}

\section{Proofs Omitted from Section~\ref{sec:FPT}}

\subsection{Proof of Lemma~\ref{lemma:fpt:non-iwm}}
\label{sec:fpt:non-iwm}

We may check in polynomial time if a transformation from $M_s$ to $M_t$
exists~\cite[Proposion 1]{Ito:11}. So let us assume that such a transformation
exists. Then there is a transformation from $M_s$ to $M_t$ via a matching that
is not inclusion-wise maximal if and only if $M_s$ and $M_t$ are not maximum.
Again, we can check this condition in polynomial time and may assume in the
following that $M_s$ and $M_t$ are not maximum.

\setcounter{claim}{0}
Our first claim follows from the first part of the proof of Lemma~\ref{lemma:ptime:msr}.
\begin{claim}
    If $M_s \symdiff M_t$ contains a path of odd length, then there is a
    polynomial-time algorithm that outputs a shortest transformation from
    $M_s$ to $M_t$.
    \label{claim:fpt:oddpath}
\end{claim}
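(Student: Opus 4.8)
The plan is to observe that this is an immediate consequence of the argument already carried out in the proof of Lemma~\ref{lemma:ptime:msr}, the relevant part of which never uses inclusion-wise maximality. First I would record the elementary lower bound: a single token jump alters the symmetric difference with $M_t$ by at most two edges, so every transformation from $M_s$ to $M_t$ has length at least $|M_s \symdiff M_t|/2$ (an even integer, since $|M_s| = |M_t|$). Hence it suffices to produce, in polynomial time, a transformation of length exactly $|M_s \symdiff M_t|/2$; equivalently, one in which every single jump decreases $|M \symdiff M_t|$ by two.

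Next I would recall the parity count on the components of $(V, M_s \symdiff M_t)$, which are exactly the vertex-disjoint members of $\mathcal{P} \cup \mathcal{C}$. Every cycle and every even-length path contains as many $M_s$-edges as $M_t$-edges, whereas an odd-length path contains exactly one more edge of one of the two matchings. As the surpluses over all components sum to $|M_t| - |M_s| = 0$, the hypothesis that $M_s \symdiff M_t$ contains an odd-length path forces the existence of a path $P_0 \in \mathcal{P}$ with strictly more $M_t$-edges than $M_s$-edges.

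I would then replay the greedy reconfiguration from the proof of Lemma~\ref{lemma:ptime:msr}, inducting on $|\mathcal{C}|$. Reconfigure $P_0$ by jumping its matching edges one at a time toward one end, so that after $|M_s \cap E(P_0)|$ jumps every $M_s$-edge of $P_0$ has been removed and every $M_t$-edge of $P_0$ except a single pendant edge $e$ has been installed; each such jump decreases $|M \symdiff M_t|$ by two. If some cycle $C \in \mathcal{C}$ still lies in the symmetric difference, jump a token from an $M_s$-edge of $C$ onto $e$; this is a legal move (both ends of $e$ are currently exposed, and the two ends of the chosen $M_s$-edge become exposed), it again drops $|M \symdiff M_t|$ by two, it completes $P_0$, and it converts $C$ into an odd-length path once more carrying a surplus of $M_t$-edges. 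Applying the induction hypothesis to this instance, which has one fewer cycle, finishes the construction; the base case, in which only paths remain, is exactly the greedy case already treated at the start of the proof of Lemma~\ref{lemma:ptime:msr}. The resulting transformation has length $|M_s \symdiff M_t|/2$ and is therefore shortest, and every step is plainly polynomial-time, which is all the claim requires. I would add in passing that the matching obtained right after reducing $P_0$ to its pendant edge $e$ is not inclusion-wise maximal — both ends of $e$ are exposed — so this same transformation also serves as the output demanded by Lemma~\ref{lemma:fpt:non-iwm}, although the claim itself does not ask for that.

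Since no genuinely new difficulty arises, I do not expect a real obstacle here: the only thing to verify is that every jump above is a legal transition into a matching and that it lowers $|M \symdiff M_t|$ by two, and this routine bookkeeping is precisely the content of the first part of the proof of Lemma~\ref{lemma:ptime:msr}, which uses nothing about $M_s$ and $M_t$ beyond the presence of an odd-length component in their symmetric difference and therefore transfers here verbatim.
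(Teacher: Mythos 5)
Your proposal is correct and follows essentially the same route as the paper: the paper dispatches this claim with a one-line reference to the first part of the proof of Lemma~\ref{lemma:ptime:msr}, and you reproduce exactly that argument --- the lower bound $|M_s \symdiff M_t|/2$, the parity count yielding a path with an $M_t$-surplus, and the greedy reduction of that path to a pendant $M_t$-edge used to break cycles one by one, by induction on $|\mathcal{C}|$. The extra bookkeeping you supply (legality of each jump and the drop of the symmetric difference by two at every step) is exactly what the paper's sketch implicitly relies on.
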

Hence we may assume that $M_s \symdiff M_t$ contains only
even cycles and paths. 

Observe that by using any $M_s$-augmenting path, we can reconfigure $M_s$
into a matching that is not inclusion-wise maximal. Hence, our task is to
find a cheapest augmenting path with respect to a certain cost measure
that reflects the ``progress'' we make by performing exchanges along a
given augmenting path. The progress is essentially the sum of the length
of the remaining transformation and the length of the augmenting path. Let
us fix a bipartition $(U, W)$ of the vertex set of $G$ and let $X$ be the
set of $M_s$-free vertices in $U$. Consider the following auxiliary
digraph $D = (U', A)$ given by
\begin{align*}
    U' &:= \{ v \in U \mid \exists \text{ an even-length $M_s$-alternating path from $X$ to $v$} \}\\
    A  &:= \{ uw \mid u, w \in U, v \in W :\, uv \in E \setminus M_s,\,vw \in M_s \} \enspace.
\end{align*}
Let $Y := \{ v \in U \mid \text{ $v$ has a neighbor in $G$ that is $M_s$-free} \}$.
Directed $XY$-paths in $D$ are in 1-to-1 correspondence with the
$M_s$-augmenting paths in $G$.  For an $XY$-path $P$ in $D$, we denote by $P_G$
the corresponding 
$M_s$-alternating path in $G$.  Let us define arc-costs~$c \in
\mathbb{Z}^A$.  For an arc $uw \in A$ that corresponds to a path $u, v, w$ in
$G$, let 
\[
    c_{uw} := 
    \begin{cases}
        0  &   \text{if $uv \in M_t \setminus M_s$ and $vw \in M_s \setminus M_t$}, \\
        0  &   \text{otherwise, if $w \in V(F)$ for some $F \in \mathcal{P}$, such that $V(P) \cap Y \neq \emptyset$,}\\
        1  &   \text{otherwise, if $uv \in E \setminus (M_s \cup M_t)$ and $vw \in M_s \setminus M_t$,}\\
        2  &   \text{otherwise, if $uv \in E \setminus (M_s \cup M_t)$ and $vw \in M_s \cap M_t$}\enspace.
    \end{cases}
\]
For an $XY$-path $P$ in $D$, let
\[
    \delta_P =
    \begin{cases}
        1   &   \text{if $(M_s \symdiff E(P_G)) \symdiff M_t$ contains a cycle, and}\\
        0   &   \text{otherwise}\enspace. 
    \end{cases}
\]
Lemma~\ref{lemma:fpt:non-iwm} follows from the next two claims.

\begin{claim}
    Let $P$ be an $XY$-path of cost $c(P)$ in $D$. Then there is a
    transformation from $M_s$ to $M_t$ of length at most $c(P) + |M_s \symdiff
    M_t|/2 + \delta_P$. 
    \label{claim:fpt:ub}
\end{claim}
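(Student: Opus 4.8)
The plan is to build the required transformation in two stages and then bound its length by cancellation against $c(P)$. \emph{Stage one.} Let $P_G = v_0, v_1, \dots, v_{2k+1}$ be the $M_s$-augmenting path in $G$ corresponding to the $XY$-path $P$, where $v_0 \in X$, the vertex $v_{2k+1}$ is $M_s$-free, the $M_s$-edges of $P_G$ are $v_1v_2, v_3v_4, \dots, v_{2k-1}v_{2k}$, and $P$ is the directed path $v_0, v_2, \dots, v_{2k}$. For $i = 1, \dots, k$ perform the exchange that removes $v_{2i-1}v_{2i}$ and adds $v_{2i-2}v_{2i-1}$; every intermediate edge set is a matching of cardinality $|M_s|$, and after the last exchange one reaches $M_1 := M_s \symdiff (E(P_G) \setminus \{v_{2k}v_{2k+1}\})$, in which $v_{2k}$ and $v_{2k+1}$ are both $M_1$-free and adjacent, so $M_1$ is not inclusion-wise maximal. \emph{Stage two.} Apply Lemma~\ref{lemma:ptime:msr} to $(G, M_1, M_t)$: since $M_1$ is not inclusion-wise maximal, there is a transformation from $M_1$ to $M_t$ of length $|M_1 \symdiff M_t|/2$ if $M_1 \symdiff M_t$ consists only of paths or contains an odd-length path, and of length $|M_1 \symdiff M_t|/2 + 1$ otherwise. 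Concatenating the two stages gives a transformation from $M_s$ to $M_t$ of length $k + |M_1 \symdiff M_t|/2 + \varepsilon$ with $\varepsilon \in \{0,1\}$, and it remains to prove $k + |M_1 \symdiff M_t|/2 + \varepsilon \le c(P) + |M_s \symdiff M_t|/2 + \delta_P$.

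For the bookkeeping, set $S := M_s \symdiff M_t$ and $B := E(P_G) \setminus \{v_{2k}v_{2k+1}\}$, so $M_1 \symdiff M_t = S \symdiff B$ and $|M_1 \symdiff M_t| = |S| + 2k - 2|S \cap B|$, whence $k + \tfrac12|M_1 \symdiff M_t| = \tfrac12|S| + (2k - |S \cap B|)$. Since $B$ is the disjoint union over the arcs $a = uw$ of $P$ of the two edges $uv$, $vw$ that $a$ shortcuts, one has
\[
  2k - |S \cap B| \;=\; \sum_{a = uw \in P} \bigl( 2 - |\{uv, vw\} \cap S| \bigr) .
\]
Reading off the definition of $c_{uw}$: a weight-$2$ arc has $uv \notin M_s \cup M_t$ and $vw \in M_s \cap M_t$, so its summand is $2$; a weight-$1$ arc has $uv \notin M_s \cup M_t$ and $vw \in M_s \setminus M_t$, so its summand is $1$; a weight-$0$ arc of the first kind has $uv \in M_t \setminus M_s$ and $vw \in M_s \setminus M_t$, so its summand is $0$. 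Hence $k + \tfrac12|M_1 \symdiff M_t| = \tfrac12|S| + c(P) + \rho$, where $\rho \ge 0$ is the total overcount produced by the weight-$0$ arcs of the second kind — each such arc is in fact a weight-$1$ or weight-$2$ arc whose head $w$ lies on some path $F \in \mathcal{P}$, contributing $1$ or $2$ to $\rho$. The claim thus reduces to showing $\rho + \varepsilon \le \delta_P$.

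The inequality $\varepsilon \le \delta_P$ when $\rho = 0$ is immediate: if $\varepsilon = 1$ then $M_1 \symdiff M_t$ contains a cycle, and because $M_s \symdiff E(P_G) = M_1 + v_{2k}v_{2k+1}$ with $v_{2k}$ and $v_{2k+1}$ both $M_1$-free (hence of degree at most one in $M_1 \symdiff M_t$, so on no cycle of it), that cycle also appears in $(M_s \symdiff E(P_G)) \symdiff M_t$, giving $\delta_P = 1$. Disposing of $\rho$ is the real difficulty, and I expect it to be the main obstacle. The point is that Stage one should not slide blindly through a weight-$0$ arc of the second kind; instead, once the sliding reaches a vertex $w$ lying on a symmetric-difference path $F \in \mathcal{P}$, one reconfigures the tail of $P_G$ \emph{jointly} with $F$, using the $M_s$-free end carried by $F$ (this is precisely where the side condition involving $Y$ in the definition of $c_{uw}$ is used), so that the one or two exchanges that would otherwise be wasted on the arc coincide with exchanges already charged to $F$ in Stage two. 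Making this rerouting precise — verifying that it still yields a valid transformation reaching $M_t$, still passes through a non-inclusion-wise-maximal matching, and does not perturb the accounting on the remaining components of $S$ — removes $\rho$ from the combined count and completes the proof.
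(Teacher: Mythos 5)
Your two-stage skeleton (slide along $P_G$ to reach a non-maximal matching, then invoke Lemma~\ref{lemma:ptime:msr}) and your arc-by-arc accounting identity $k + \tfrac12|M_1 \symdiff M_t| = \tfrac12|M_s\symdiff M_t| + c(P) + \rho$ are both correct and match the opening of the paper's proof, as does your observation that $\varepsilon \le \delta_P$. But the proof is not complete: everything hinges on eliminating $\rho$, i.e.\ on handling the arcs that receive weight~$0$ only because their head lies on a path $F \in \mathcal{P}$ with a vertex in $Y$, and for that case you only state the intended rerouting and assert that ``making this precise \dots completes the proof.'' Since the claim must hold for arbitrary $XY$-paths (and Claim~\ref{claim:fpt:lb} applies it to a minimum-cost path, which will typically exploit exactly these cheap arcs), the case $\rho \ge 1$ cannot be deferred: with your literal Stage one the bound genuinely fails there, because you spend one or two exchanges per such arc that are not reflected in $c(P)$ and are not recovered in Stage two.

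The paper closes this gap with a concrete construction that you would need to reproduce. Let $F$ be the \emph{first} path of $\mathcal{P}$ with $V(F)\cap Y\neq\emptyset$ that $P$ meets, and $z$ the first vertex of $F$ on $P$. One first performs exchanges along $F$ from its $Y$-vertex $y'$ backwards to $z$: for each arc $a$ of the corresponding path in $D$, the head $w_a$ is $M_s$-matched to $v_a$ and $M_t$-matched to a currently free vertex $w'$ (maximality of $M_t$ guarantees this for the first step), and swapping $v_aw_a$ for $w_aw'$ decreases $|\cdot \symdiff M_t|$ by two, so these exchanges are cost-neutral in the final count. This leaves $z$ with a free neighbour, so Stage one is run only along the truncated prefix $P'$ of $P$ up to $z$ — the tail of $P$ beyond $z$, which is where all the second-kind weight-$0$ arcs live, is discarded rather than ``jointly reconfigured.'' One must then still verify $\ell(P') + |(M_s'\symdiff E(P'_G))\symdiff M_t|/2 = c(P') + |M_s'\symdiff M_t|/2$ and the inequality $c(P)\ge c(P')+1$ when $\delta_P=0$ but $\delta_{P'}=1$ (an arc entering a cycle of $\mathcal{C}$ costs at least one). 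None of these verifications appear in your write-up, so as it stands the argument establishes the claim only for paths $P$ avoiding the second kind of weight-$0$ arcs.
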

\begin{proof}[Proof of Claim~\ref{claim:fpt:ub}]
    Our goal is to transform $M_s$ into a matching that is not inclusion-wise
    maximal by using exchanges along $P$.  Let $\ell(P)$ be the number of arcs
    of $P$. 
    For an arc $a$ of $D$, let $u_a, v_a, w_a$ be the path in $G$ that
    corresponds to $a$. Starting from the matching $M_s$, for each arc $a$ on
    $P$, in forward direction, we swap $v_aw_a$ for $u_av_a$ to obtain the
    matching $M$. This amounts to $\ell(P)$ exchanges. Observe that $M$ is not
    inclusion-wise maximal. By using the algorithm from
    Lemma~\ref{lemma:ptime:msr}, a transformation from $M_s$ to $M_t$ via $M$
    has length $\ell(P) + |M \symdiff M_t| / 2 + \delta_{P}$.

    In order to obtain a bound in terms of $c(P)$, we first perform some
    preprocessing. Let $x$ (resp., $y$) be the starting vertex (resp., end
    vertex) of $P$. Suppose that $P$ meets an $(M_s, M_t)$-alternating path
    containing a vertex in $Y$. Let $F$ be the first such path and let $z \in
    U$ be the first vertex of $F$ that is visited by $P$. Then we can
    reconfigure $M_s$ into a matching $M_s'$, such that $z$ has a neighbor that
    is $M_s'$-free  and the symmetric difference between the current matching
    and $M_t$ decreases by two in each step. To this end, we perform exchanges
    along $F$ as follows. Let $y'$ be the vertex of $F$ in $Y$. Then
    there is a path $Q$ from $z$ to $y'$ in $D$ that visits only vertices of
    $F$. We process each arc $a$ of $Q$ in reverse order (from $y'$ to $z'$).
    Then $w_a$ is matched to the vertex $v_a$ by $M_s$.
    Furthermore, the vertex $w_a$ must be matched to a vertex $w'$ by $M_t$,
    such that $w'$ is free with respect to the current matching.  Note that if
    $a$ is the first arc of $Q$ we process, then $w_a$ must be covered by
    $M_t$, since otherwise $M_t$ is not maximal. We exchange $v_aw_a$ for
    $w_aw'$, which decreases the size of the symmetric difference of the
    current matching and $M_t$ by two. Let $M_s'$ be the resulting matching. 

    Now we have that  $|M_s' \symdiff M_t| = |M_s \symdiff M_t| -
    |M_s \symdiff M_s'|$. Let $|M_s \symdiff M_s'| =: 2k$. It now remains to consider the $xz$-subpath
    $P'$ of $P$ in order to reach a matching that is not inclusion-wise
    maximal.  In the case that $P$ does not meet an $(M_s, M_t)$-alternating
    path, we let $P' := P$, $M_s' = M_s$, $k = 0$, and $z := y$. Either way, we
    have that $z$ has an  $M_s'$-free neighbor and we have performed $k$
    exchanges to reach $M_s'$. Recall that an arc into an $(M_s,
    M_t)$-alternating cycle has cost one. It follows that if $\delta_P = 0$ and
    $\delta_{P'} = 1$, then $c(P) \geq c(P') + 1$.  Hence, we have that
    \[
        c(P) + |M_s \symdiff M_t|/2 + \delta_{P} \geq c(P') + |M_s' \symdiff M_t|/2 + \delta_{P'} + k\enspace.
    \]

    We show that $\ell(P') + |(M_s' \symdiff E(P'_G))  \symdiff M_t|/2 = c(P')
    + |M_s' \symdiff M_t|/2$.  We start with the matching $M_s'$ and process
    the arcs of $P'$ in forward direction. We exchange for each arc $a$ of $P'$
    the edges $v_aw_a$ and $u_av_a$.  To prove the equality, consider the
    relative changes to the size of the symmetric difference between the
    current matching and $M_t$. If $a$ has cost two, then it corresponds to an
    exchange that increases the size of the symmetric difference by two.  If
    $a$ has cost one, then it corresponds to an exchange that does not alter
    the size of the symmetric difference.  If $a$ has cost zero  we distinguish
    two cases.  First, suppose that $w_a$ is not on some $(M_s,
    M_t)$-alternating path that contains a vertex in $Y$. Then the
    corresponding exchange decreases the size of the symmetric difference by
    two. Otherwise, due to the preprocessing, we know that $a$ is the final arc
    of $P'$ and its target vertex $w_a$ is matched by $M_s'$ to $v_a$.
    Furthermore, since $M_t$ is maximal, we have that $w_a$ has an $M_s'$-free
    neighbor $w'$, such that $ww' \in M_t \setminus M_s$. So we may exchange
    $v_aw_a$ for $w_aw'$ to turn the current matching into a non-inclusion-wise
    maximal one. At the same time, we reduce the size of the symmetric
    difference with $M_t$ by two.  By counting the number of exchanges and
    keeping track of the size of the symmetric difference of the current
    matching and $M_t$, the claimed equation follows.

    Putting things together, there is a transformation from 
    $M_s'$ to $M_t$ of length 
    \begin{align*}
      \ell(P') + |(M_s' \symdiff E(P'_G))  \symdiff M_t|/2 + \delta_{P'} &= c(P') + |M_s' \symdiff M_t'|/2 + \delta_{P'}\\
      	& \leq c(P) + |M_s \symdiff M_t|/2 + \delta_P \enspace.
    \end{align*}
\end{proof}

\begin{claim}
    Let $P^*$ be a minimum-cost $XY$-path in $D$. Then a transformation
    from $M_s$ to $M_t$ via a matching $M$ that is not inclusion-wise maximal
    has length at least $c(P^*) + |M_s \symdiff M_t|/2 + \delta_{P^*}$.
    \label{claim:fpt:lb}
\end{claim}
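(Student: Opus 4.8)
The plan is to reverse the reasoning of Claim~\ref{claim:fpt:ub}. Fix a transformation $\tau = N_0, N_1, \ldots, N_\ell$ from $M_s$ to $M_t$ with some $N_j$ not inclusion-wise maximal, and let $j^*$ be the smallest such index. First I would analyse the single step $N_{j^*-1}\to N_{j^*}$: since $N_{j^*-1}$ is inclusion-wise maximal while $N_{j^*}$ is not, the edge $e$ deleted in this step frees an endpoint $a$ which, together with an $N_{j^*}$-free neighbour, witnesses an edge $f$ that can be added to $N_{j^*}$. Using this, together with the fact that $N_0 = M_s$ is itself inclusion-wise maximal and $|N_{j^*}| = |M_s|$, I would extract an $M_s$-augmenting path $P_G$ in $G$: the component of $M_s \symdiff (N_{j^*} + f)$ through $f$ is such a path. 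Orienting $P_G$ from its $M_s$-free endpoint in $U$ to its $M_s$-free endpoint in $W$ yields an $XY$-path $P$ in $D$ (the last $U$-vertex of $P_G$ lies in $Y$ because it is adjacent to an $M_s$-free vertex), and $P$ is the candidate for which I will establish $\ell \ge c(P) + |M_s\symdiff M_t|/2 + \delta_P$.

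For the length bound I would classify every step of $\tau$ by its effect on $|N_i \symdiff M_t|$, which changes by $-2$, $0$, or $+2$ per step; call these steps \emph{productive}, \emph{neutral}, and \emph{wasteful}. Since $|N_0 \symdiff M_t| = d := |M_s \symdiff M_t|$ and $N_\ell = M_t$, the number of productive steps exceeds the number of wasteful steps by exactly $d/2$, so $\ell = d/2 + 2w + n$, where $w$ and $n$ count wasteful and neutral steps. The core of the argument is then to charge the cost-$1$ arcs of $P$ to neutral steps of $\tau$ and the cost-$2$ arcs of $P$ to wasteful steps of $\tau$: a cost-$2$ arc $uw$ uses an edge $vw \in M_s \cap M_t$ that any transformation reaching the non-maximal matching through the corresponding augmenting structure must temporarily remove and then restore, and a cost-$1$ arc uses $vw \in M_s \setminus M_t$ replaced by a non-$M_t$ edge, yielding $2w + n \ge 2\cdot(\text{cost-}2\text{ arcs of }P) + (\text{cost-}1\text{ arcs of }P) = c(P)$. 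Finally, if $\mathcal{C} \neq \emptyset$, then (since $M_s$ is a matching, the extracted augmenting path is vertex-disjoint from every cycle of $\mathcal{C}$, and hence these cycles also occur in $M_s \symdiff E(P_G)\symdiff M_t$) at least one cycle still has to be reconfigured against the maximal matching $M_t$, which, exactly as in the lower bound of Lemma~\ref{lemma:ptime:msr}, forces one further neutral or wasteful step not charged to any arc of $P$; this accounts for $\delta_P$. Putting the pieces together, $\ell \ge c(P) + d/2 + \delta_P$, and since every $XY$-path avoids the cycles of $\mathcal{C}$ we have $\delta_P = \delta_{P^*}$, so minimality of $P^*$ gives $\ell \ge c(P^*) + d/2 + \delta_{P^*}$.

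The main obstacle is the charging step: making precise which steps of an arbitrary transformation are ``spent'' traversing the arcs of the extracted path $P$, and ruling out that $\tau$ amortises this cost against productive work. I expect this to require following the token deleted at step $j^*$ backwards through $\tau$ and maintaining an invariant, in the spirit of Claim~\ref{claim:indep}, describing the current matching along $P_G$; it must also handle the one place where the arc costs are not dictated by $M_s$ and $M_t$ alone, namely cost-$0$ arcs whose head lies on a path of $\mathcal{P}$ that $P_G$ has already reached, and treat the case where $P_G$ shares a prefix with such a path. A secondary point is to verify that no neutral or wasteful step is counted both for an arc of $P$ and for the cycle term $\delta_P$.
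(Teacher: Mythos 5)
Your opening move --- extracting an $M_s$-augmenting path from the first non-maximal matching $N_{j^*}$ via the component of $M_s \symdiff (N_{j^*}+f)$ through $f$ --- is essentially the paper's first step, and your identity $\ell = d/2 + 2w + n$ is correct. The gap is the charging argument, which you flag as the main obstacle and do not carry out, and which as sketched does not go through. The claim that a cost-$2$ arc of $P$ forces a wasteful step fails because token jumps are not local: the edge $vw \in M_s \cap M_t$ on $P_G$ must indeed be removed before step $j^*$ and restored by step $\ell$, but the removal may be a \emph{neutral} step (it can be exchanged for an $M_t$-edge $g$ anywhere else in the graph, leaving $|N_i \symdiff M_t|$ unchanged) and the restoration a \emph{productive} one, contributing only $1$ to $2w+n$ instead of the required $2$; similarly, an $M_s\setminus M_t$ edge on a cost-$1$ arc may be removed in a productive step already accounted for in $d/2$. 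Two auxiliary claims are also false: an $XY$-path in $D$ can pass through vertices of cycles of $\mathcal{C}$ (this is precisely why the cost function distinguishes arcs with $uv\in M_t$ and why $\delta$ can differ between paths), and the augmenting path in $M_s \symdiff (N_{j^*}+f)$ need not be vertex-disjoint from the cycles of $M_s \symdiff M_t$, so the step $\delta_P = \delta_{P^*}$ is unjustified.

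The paper avoids global charging entirely. It splits the transformation at the non-maximal matching $N^*$: the prefix has length at least $\ell(Q^*)$ because every $M_s$-edge on the extracted augmenting path is absent from $N^*$ and must be removed at least once; the suffix is bounded below via Lemma~\ref{lemma:ptime:msr} by $|(M_s \symdiff E(Q^*_G)) \symdiff M_t|/2 + \delta_{Q^*}$. It then proves the purely combinatorial identity $\ell(Q^*) + |(M_s \symdiff E(Q^*_G)) \symdiff M_t|/2 = c(Q^*) + |M_s \symdiff M_t|/2$ by the same per-arc bookkeeping as in Claim~\ref{claim:fpt:ub} (splitting $Q^*$ at the last vertex before it meets an $(M_s,M_t)$-alternating path containing a vertex of $Y$), and finally passes from $Q^*$ to $P^*$ using that entering a cycle of $\mathcal{C}$ costs at least one, which absorbs any discrepancy between $\delta_{Q^*}$ and $\delta_{P^*}$. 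To repair your argument you would need to replace the per-arc charging by this kind of prefix/suffix decomposition, since an arbitrary transformation is under no obligation to spend its neutral and wasteful steps on the edges of $P_G$.
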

\begin{proof}[Proof of Claim~\ref{claim:fpt:lb}]
    Let $N^*$ be a matching that is not inclusion-wise maximal, such that
    there is a shortest transformation of length $L$ from $M_s$ to
    $M_t$ via $N^*$. So there is an edge $e \in E(G)$, such that $N^* + e$
    is a matching. Since $N^* + e$ contains more edges than $M_s$, the
    symmetric difference of $N^*+e$ and $M_s$ contains an augmenting path,
    which corresponds to an $xy$-path $Q^{*}$ in $D$, where $x \in X$ and $y \in Y$. Then
    \begin{equation}
        L \geq \ell(Q^*) + |(M_s \symdiff E(Q^*_G)) \symdiff M_t| / 2 + \delta_{Q^*_G} = c(Q^*) + |M_s \symdiff M_t| / 2 + \delta_{Q^*_G}\enspace.
        \label{eq:fpt:lb}
    \end{equation}
    The first inequality follows, since the shortest transformation performs at
    least $\ell(Q^*)$ exchanges to reach the matching $N^*$. By
    Lemma~\ref{lemma:ptime:msr}, a shortest transformation from $M$ to $M_t$
    has length $|M \symdiff M_t|/2 + \delta_{Q^*} = |(M_s \symdiff E(Q^*_G))
    \symdiff M_t| + \delta_{Q^*}$. 
    
    Let us prove the final equality of~\eqref{eq:fpt:lb}. The argument is
    similar to the proof of Claim~\ref{claim:fpt:ub}. For an arc $a$ of $D$,
    let $u_a, v_a, w_a$ be the path in $G$ that corresponds to $a$. Note that
    $Q^*$ visits an
    $(M_s, M_t)$-alternating path that contains a vertex in $Y$ at most once;
    otherwise there is a shorter transformation. If $Q^*$ visits such a path
    $F$, we split $Q^*$ into two paths $R$ and $R'$ at a vertex $z$, such that
    $z$ is the last vertex of $Q^*$ not on $F$. If $Q^*$ visits no such path,
    then we let $R'$ be empty and $R := Q^*$.
    For each arc $a$ of $R'$ in reverse order (from
    $y$ to $z$), we do the following. The vertex $w_a$ is
    matched to the vertex $v_a$ by $M_s$. Furthermore, $w_a$ must be matched to a
    vertex $w'$ by $M_t$. Note that if $a$ is the first arc of $R'$ we process,
    then $w_a$ must be covered by $M_t$, otherwise $M_t$ is not maximal. In each step, we
    exchange $v_aw_a$ for $w_aw'$, which decreases the size of the symmetric
    difference of the current matching and $M_t$ by two. Let $M$ be the
    matching obtained be these exchanges. Then $|M \symdiff M_t| = |M_s
    \symdiff M_t| - 2\ell(R')$. Now, we perform an exchange for each arc of $R$
    in forward direction. Note that an arc of cost two corresponds to an
    exchange that increases the size of the symmetric difference by two.
    Similarly, an arc of cost one (zero) corresponds to an exchange that does
    not change the size of the symmetric difference (decreases the size of the
    symmetric difference by two). By keeping track of the size of the symmetric
    difference with respect to $M_t$, the final equality of~\eqref{eq:fpt:lb}
    follows.

    Recall that $P^*$ is a shortest $XY$-path in $D$. By Claim~\ref{claim:fpt:ub},
    there is a transformation from $M_s$ to $M_t$ of length at most $c(P^*) +
    |M_s \symdiff M_t|/2 + \delta(P^*_G)$. By the optimality of $P^*_G$ and
    observing that visiting a cycle $C \in \mathcal{C}$ incurs a cost of at
    least one for an $XY$-path, we obtain
    \begin{equation}
        c(Q^*) + |M_s \symdiff M_t| / 2 + \delta_{Q^*_G} \geq c(P^*) + |M_s \symdiff M_t| / 2 + \delta_{P^*_G}\enspace.
        \label{eq:fpt:lb2}
    \end{equation}
    We combine~\eqref{eq:fpt:lb} and~\eqref{eq:fpt:lb2} to prove the claim.
\end{proof}

Note that we can find in polynomial-time a minimum-cost $XY$-path $P^*$ in
$D$, for example by Dijkstra's algorithm. Hence, it suffices to transform
$M_s$ into the matching $M_s \symdiff E(P^*_G)$, which is not
inclusion-wise maximal, and then use the algorithm from
Lemma~\ref{lemma:ptime:msr} to transform the resulting matching into $M_t$.
By Claim~\ref{claim:fpt:lb}, the transformation has minimal length with
respect to all transformations from $M_s$ to $M_t$ via a matching that is
not inclusion-wise maximal.

\subsection{Remaining Proofs omitted from Section~\ref{sec:FPT}}

\begin{proof}[Proof of Fact~\ref{fact:fpt:oneside}]
     If an edge does not occur in some maximum matching of the input graph $G$
     then it is useless for reconfiguration. Therefore, we may assume that
     every edge of $G$ occurs in some maximum matching.  Also, we may assume
     that $G$ is connected, otherwise we consider each component separately.
     Now consider the Edmonds-Gallai decomposition $D$, $A$, $C$ of the graph
     $G$. The odd components of the decomposition are just single vertices
     since they are factor-critical, and no factor-critical graph is bipartite.
     By our assumptions above, $A$ is an independent set and $C$ is empty (no
     edge between $A$ and $C$ occurs in a maximum matching; $G$ is connected).
     Therefore, $(D, A)$ is a bipartition of $V(G)$ with all exposed vertices on
     the same side, $D$. 
\end{proof}

\begin{proof}[Proof of Proposition~\ref{prop:dst:assumptions}]
	Observe that since $F$ is a directed tree, the indegree of each vertex of $F$
	except $r$ is precisely one. Let $A'$ be the arc-set of $F$.\smallskip
	
	\noindent Proof of~\ref{itm:dst:patharcs}. Let $P \in \mathcal{P}$. Suppose
	that $F$ is missing at least one arc of $A(P)$. Since the indegree of each
	vertex of $F$ except $r$ is precisely one (they are terminals), there must be two nodes $x, y \in
	V(P)$, such that $xy \in A(P) \setminus A'$ and the predecessor $x'$ of $y$
	is not in $V(P)$. Then $F - x'y + xy$ connects $r$ to each terminal at cost
	$c(F - x'y + xy) < c(F)$ since we replace an arc of weight two by an arc of
	weigt one. Indeed, as we observed, only one arc entering in $y$ is special.
	This contradicts the optimality of $F$. \smallskip
	
	\noindent Proof of~\ref{itm:dst:cyclearcs}. Let $C \in \mathcal{C}$. If $A'$
	contains each arc of $A(C)$, then $F$ is not a tree.  So $F$ misses at least
	one arc. Suppose that $F$ is missing at least two arcs of $A(C)$. Since the
	indegree of each vertex of $F$ except $r$ is precisely one, there must be two
	vertices $x, y \in V(C)$ as follows. The arc $xy$ is in $A(C)$, but not in
	$A'$ and the predecessor of $y$ is not in $V(C)$. Let $a$ be the in-arc of
	$y$ in $F$. Then $F - a + xy$ connects $r$ to each terminal at cost $c(F - a
	+ xy) < c(F)$. This contradicts the optimality of $F$.  \smallskip
	
	\noindent Proof of~\ref{itm:dst:pathroot}. Let $P \in \mathcal{P}$. Let $v
	\in V(P)$ be the $M_s$-free vertex of $P$. By~\ref{itm:dst:patharcs}, the
	tree $F$ contains the arcs $A(P)$ and each vertex of $F$ except $r$ has
	indegree precisely one. Therefore, there is only a single in-arc $pv$, where
	$v \in V(P)$ and $p \notin V(P)$. If $p \neq r$, then we may replace $pv$ by
	$rv$ and obtain a directed tree $F - pv + rv$ that connects $r$ to each
	terminal at cost $c(F - pv + rv) < c(F)$. Again, this is a contradiction to
	the optimality of $F$.
\end{proof}

\begin{proof}[Proof of Claim~\ref{claim:indep} of Lemma~\ref{lemma:dst:sequence}]
  The three first statements follow directly from the definition of the
  transformation~\eqref{eq:Mi}.

  Let us prove the fourth point. First note that $a_C$ exists and
  all the arcs of $A(C)$ but $a_C$ are in $F$ by Proposition~\ref{prop:dst:assumptions}. 
  Since $a$ has been visited backwards, it implies that $a'$ has been visited
  forwards (since $a'$ is an arc of the unique path from $r$ to $a$).

  $a_j \in A'$ by Proposition~\ref{prop:dst:assumptions}.  We first
  show that the statement holds in step $j$.  If $a_{j-1}=a_j = u'u$
  then $u$ is $M_{j-1}$-free (by the second rule of~\eqref{eq:Mi})
  and $M_j = M_{j-1}$ (by the third rule of~\eqref{eq:Mi}).  So we
  can assume that $a_{j-1}$ is visited backwards. Since $a_C$ is not
  in $F$ and $a_C$ is the only out-arc of $u$ which is special, the
  arc $a_{j-1} = uv$ is neither special nor artificial. 
  By definition of rules four and five of~\eqref{eq:Mi}, we have that
  $\bar a_{j-1}=vu \in M_{j-2}$ is moved to an edge not incident to
  $u$ in $M_{j-1}$.  Since $M_{j}=M_{j-1}$, we have that $M_j$ is
  $u$-free.

  So we can assume that $a_j \ne a$. Let $j'$ be the step where $a$ is visited
  backwards.
  Since $F$ is a tree and we consider a DFS traversal, no arc incident
  to $w$ is considered between steps $j'$ and $j$ since $a'$ has not 
  been visited upwards. So if an edge $e$
  incident to $u$ is added in $M_j$, it
  can only be by the fourth rule of~\eqref{eq:Mi}. So $u$ is incident
  to an edge $e_{C'}$ for $C' \in \mathcal{C}$. Since $a'$ has not been visited
  backwards, neither is the arc entering in $C$. So $C \ne C'$, 
  a contradiction since it would mean that $u$
  belongs to both cycles $C$ and $C'$, that must be disjoint. 
\end{proof}

\setcounter{claim}{1}
\begin{claim}
  For $0 \leq i \leq m$, the set $M_i \subseteq E$ is a matching of $G$.
  \label{claim:dst:matchingsequence}
\end{claim}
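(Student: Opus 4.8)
The plan is to prove the claim by induction on $i$, carried out in tandem with the four invariants of Claim~\ref{claim:indep}, since the argument for those invariants already presupposes that each $M_j$ is a matching and, conversely, the matching property at step $i+1$ needs the invariants at step $i$. The base case $M_0 = M_s$ is immediate, and $M_i \subseteq E$ throughout because every edge we ever add belongs to $G$. For the inductive step I would assume $M_0,\ldots,M_i$ are matchings and look at which line of~\eqref{eq:Mi} produces $M_{i+1}$ from $M_i$ for the arc $a_{i+1} = u_{i+1}w_{i+1}$. If $a_{i+1}$ is artificial, or $a_{i+1}$ is special and traversed upwards, then $M_{i+1} = M_i$ and there is nothing to do. In every remaining case $M_{i+1} = M_i - g + h$, where the added edge $h$ is one of $\bar e_{i+1} = u_{i+1}v_{i+1} \in E \setminus M_s$, $e_{i+1} = v_{i+1}w_{i+1} \in M_s$, or the $M_t$-edge $e_C$; in particular $h \in E(G)$, so it remains only to show that $g \in M_i$ (so that $M_i - g$ is a well-defined matching) and that both endpoints of $h$ are exposed in $M_i - g$.

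The body of the write-up would then verify these two facts line by line, reading the needed information off Claim~\ref{claim:indep}. For the downward traversal (second line of~\eqref{eq:Mi}) I would use the invariant ``$e_i \in M_j$ for not-yet-traversed arcs'' to get $g = e_{i+1} \in M_i$; removing it exposes $v_{i+1}$ and $w_{i+1}$, and to see that $u_{i+1}$ is $M_i$-free as well I would inspect the last operation performed before step $i+1$: it is either the downward traversal of the parent arc of $u_{i+1}$ in $F$ --- which exposes $u_{i+1}$, using Claim~\ref{claim:indep} to know the removed edge was present, or using the invariant on unvisited artificial arcs if that parent is artificial --- or the backtracking of the child arc of $u_{i+1}$ that the DFS explored immediately before $a_{i+1}$, and here the DFS preference for heavier arcs forces that child to be non-special, so its backtracking is governed by the fifth line of~\eqref{eq:Mi} and exposes $u_{i+1}$. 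For the fifth line (upward traversal of a non-cycle arc) I would use ``$\bar e_i \in M_j$ for arcs traversed down but not up'' to get $g = \bar e_{i+1} \in M_i$, so removing it exposes $u_{i+1}$ and $v_{i+1}$, and that $w_{i+1}$ is $M_i$-free follows exactly as in the proof of Claim~\ref{claim:indep}, by tracking $w_{i+1}$ through the now-completed exploration of the subtree of $F$ hanging below $a_{i+1}$. For the fourth line (upward traversal of a cycle arc $a_{i+1} \in A(C)$) I would again get $g = \bar e_{i+1} \in M_i$ from the same invariant, while the fourth invariant of Claim~\ref{claim:indep} states precisely that the source $u$ of the missing arc $a_C$, one endpoint of $e_C$, is $M_i$-free at this moment, and its $M_t$-partner, the other endpoint of $e_C$, is exposed by the removal of $\bar e_{i+1}$.

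The step I expect to be the main obstacle is making the mutual induction between Claims~\ref{claim:indep} and~\ref{claim:dst:matchingsequence} airtight, and within it the cycle-backtracking case: certifying that at the precise step the $M_t$-edge $e_C$ is inserted, neither endpoint is already covered. This is exactly where the DFS visiting order (heavier arcs first, which forces the special arcs of a cycle to be traversed in an order that prevents a second edge from being parked on the source of $a_C$) and the fourth invariant of Claim~\ref{claim:indep} are indispensable; everything else reduces to the mechanical observation that each remove-one/add-one operation keeps the edge set pairwise non-incident.
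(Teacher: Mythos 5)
Your plan is correct and follows essentially the same route as the paper: induction on $i$ interleaved with the invariants of Claim~\ref{claim:indep}, a case split on the rules of~\eqref{eq:Mi} combined with an inspection of the previously processed arc, the heavier-arcs-first DFS order to determine which rule governed that preceding step, and the fourth invariant of Claim~\ref{claim:indep} to certify that both endpoints of $e_C$ are exposed when a cycle is closed. One small labeling point: rule four of~\eqref{eq:Mi} fires when backtracking the (non-special) arc $a_{i+1}$ entering $V(C)$ with the previously processed arc $a_i$ lying in $A(C)$ --- not when backtracking a special cycle arc $a_{i+1}\in A(C)$ itself, which falls under rule three --- but the exchange you analyze, $M_i-\bar e_{i+1}+e_C$, and your justification of its validity are exactly the paper's.
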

\begin{proof}[Proof of Claim~\ref{claim:dst:matchingsequence} of Lemma~\ref{lemma:dst:sequence}]
  The statement holds for $i=0$, so let us assume that for some fixed
  $i < m$ and $1 \leq j \leq i$, we have that $M_j$ is a
  matching of $G$. We show that so is
  $M_{i+1}$. Note that $|M_{i+1}| = |M_i|$ by the definition
  of~\eqref{eq:Mi}. Let us prove that $M_{i+1}$ is 
  a matching. If $M_{i+1} = M_i$ then the conclusion indeed
  holds, so we assume that $M_{i+1} \neq M_i$. We distinguish two
  cases depending on whether the arc $a_{i+1} = u_{i+1}w_{i+1}$ is
  visited for the first or second time.

  \setcounter{case}{0}
  \begin{case}
    The arc $a_{i+1}$ is visited for the first time (i.e., $a_{i+1}$ is traversed downwards).
  \end{case}
  We can assume that $a_{i+1}$ is not artificial, since otherwise $M_i =
  M_{i+1}$. Let $a_i=u_iw_i$ be the arc visited before $a_{i+1}$.  We
  distinguish several subcases depending on which rule of~\eqref{eq:Mi} was
  applied to obtain $M_i$ from $M_{i-1}$. 
  If the first rule of~\eqref{eq:Mi} was applied to obtain $M_i$, the
  arc $a_{i}$ is artificial. Then $w_i=u_{i+1}$ is $M_{i-1}$-free by
  Claim~\ref{claim:indep}. Since $M_{i-1}=M_i$ by the first rule of~\eqref{eq:Mi},
  $u_{i+1}$ also is $M_i$-free.
  So $M_{i+1}$ is a matching of $G$. 
  If the second rule of~\eqref{eq:Mi} was applied, then we have that
  $u_{i+1}=w_i$ and $M_i = M_{i-1} - e_i + \bar e_i$, so $w_i$ is
  $M_i$-free as required.
  So the third, fourth or fifth rule of~\eqref{eq:Mi} was applied, 
  and then $a_{i+1}$ is traversed downwards and $a_i$ is traversed upwards. 
  Since arcs of larger weight are traversed first and at most one outgoing arc of
  $u_i$ has weight one, we have that $a_i$ has weight two. Therefore,
  the arc $a_i$ cannot be special and rule three of~\eqref{eq:Mi}
  is not applicable,
  Finally, if the fourth or fifth rule of~\eqref{eq:Mi} were applied, then 
  $u_i=u_{i+1}$ and, by definition of $M_i$, in both cases 
  the vertex $u_i$ is $M_i$-free.
  Therefore, the set $M_{i+1} = M_i - e_i + \bar e_i$ is a matching
  of $G$.

  \begin{case}
    The arc $a_{i+1}$ is visited for the second time (i.e., $a_{i+1}$ is traversed upwards).
  \end{case}
  If $M_{i+1}$ is obtained via the first or the third rule, then
  $M_{i+1}=M_i$ and the conclusion holds. Furthermore, rule two is
  not applicable, since $a_{i+1}$ is traversed upwards by assumption.
  So $M_{i+1}$ is obtained by rules four or five of
  of~\eqref{eq:Mi}. As in Case 1, we distinguish several subcases,
  depending on which rule of~\eqref{eq:Mi} was applied to obtain
  $M_i$ from $M_{i-1}$ when processing $a_{i} = u_iw_i$.

  If the first rule of~\eqref{eq:Mi} was applied, then $u_i = w_{i+1}
  = r$, which contradicts our construction of the instance.  If rule
  two of~\eqref{eq:Mi} was applied, then $M_i = M_{i-1} - e_i + \bar
  e_i$ and $a_{i+1}=a_i$. So in particular, the fourth rule
  of~\eqref{eq:Mi} cannot be used to obtain $M_{i+1}$. Indeed $a_i=a_{i+1}$
  cannot be both, special ($a_i$ is because of rule four for $a_{i+1}$) 
  and not special ($a_{i+1}$ is not since rule three is not applied).
  So we have either $M_{i+1}
  = M_{i-1}$ or $M_{i+1} = M_i$ and the conclusion holds. We may
  therefore assume that $a_i$ is traversed for the second time and
  hence, we have $u_i = w_{i+1}$. So $M_{i+1}$ is the result of rules
  four or five of~\eqref{eq:Mi} and $M_{i}$ from rules three, four,
  or five of~\eqref{eq:Mi}.

  Assume first, that the fifth rule of~\eqref{eq:Mi} was applied to
  obtain $M_{i+1}$. We first prove by contradiction that $M_i$ 
  cannot be obtained using rule three of~\eqref{eq:Mi}. 
  Indeed otherwise, $a_i$ would be special and belong to $Z$ and
  $a_{i+1}$ would not be special. Since rule four is not applied to obtain
  $M_{i+1}$ and $a_{i+1}$ would be the only arc entering $V(Z)$. By
  Proposition~\ref{prop:dst:assumptions} $Z$ would be a path.  But then
  $u_i$ would incident to $r$, so $a_{i+1}=ru$ by
  Proposition~\ref{prop:dst:assumptions}, a contradiction.
  If rules four or five of~\eqref{eq:Mi} were applied to obtain $M_i$, then
  we have that $u_{i} = w_{i+1}$ is $M_i$-free, so $M_{i+1} = M_i - \bar
  e_{i+1} + e_{i+1}$ is a matching of $G$. 

  Finally assume that the fourth rule of~\eqref{eq:Mi} was applied
  to obtain $M_{i+1}$.  By definition of rule four, rule three
  of~\eqref{eq:Mi} was applied to $a_i$. So $M_{i+1} = M_{i} - \bar
  e_{i+1} + e_C$ for some $C \in \mathcal{C}$. Note that by
  definition of $a_C$, we have that $w_{i+1}=u_i$ is a vertex of $C$.
  Let us denote by $u$ the other endpoint of $a_C$. By
  Proposition~\ref{prop:dst:assumptions}, all the arcs of $C$ but
  $a_C$ are in $F$ and since $F$ is traversed in DFS order, they were
  already traversed backwards. 
  The last point of Claim~\ref{claim:indep} ensures that $u$ is
  $M_{i-1}$-free. Since $M_i = M_{i-1}$, it is also $M_i$-free. Thus
  the exchange of $\bar e_{i+1}$ for $e_C$ results in a matching.

  This concludes the proof of Claim~\ref{claim:dst:matchingsequence}.
\end{proof}

\begin{proof}[Proof of Claim~\ref{claim:dst:matchingtarget} of Lemma~\ref{lemma:dst:sequence}]
  By the definition of~\eqref{eq:Mi}, an edge $vw \in M_s$ is exchanged
  at most twice during the algorithm, once when the arc $a$ of $F$
  ending in $w$ is traversed downwards and once when it is traversed
  backwards.

  Let us first prove that at the end of the algorithm, if an edge $e =
  vw \in M_s$ is exchanged, then it is exchanged with an edge in $M_t$.
  Let $a=uw$ be the unique in-arc of $w$ in $F$. When $a$ is traversed
  forwards, the edge $vw$ is exchanged with $uv$ (rule two
  of~\eqref{eq:Mi}). If $a$ is special, then $uv$ is in $M_t$ (by the
  definition of special edge) and $uv$ is not exchanged when $a$ is
  visited backwards (rule three of~\eqref{eq:Mi}), so the conclusion
  holds in this case. Hence we can assume that $a$ is not special. If
  $vw \notin M_s \symdiff M_t$, then $w$ is not in a cycle $C$ of
  $\mathcal{C}$, and then it has no out-arc that is special. Thus, when
  $a$ is traversed backwards, rule five of~\eqref{eq:Mi} applies, and
  then $vw \in M_t$. 
  Let us finally assume that $a$ is not special but $vw \in M_t - M_s$.
  By Proposition~\ref{prop:dst:assumptions}, the arc $a=uw$ is the
  unique arc entering in some item $Z \in \mathcal{C} \cup
  \mathcal{P}$. Since $a$ is not artificial,
  Proposition~\ref{prop:dst:assumptions} ensures that $Z=C \in
  \mathcal{C}$.  By construction, when $a$ is traversed backwards, the
  edge $uv$ is exchanged for $e_C$ which is, by definition, in $M_t$.

  To conclude, we simply have to prove that each edge $M_s \setminus
  M_t$ has been exchanged at some point. This is indeed the case since,
  for each arc $vw \in M_s - M_t$, the vertex $w$ is a terminal. Since
  $F$ is a Steiner tree, it connects $r$ to each terminal.
\end{proof}

\begin{proof}[Proof of Lemma~\ref{lemma:dst:connectivity}]
	Let $w \in T$ be a terminal. Then, by the definition of the arc-set $A$, there is some edge
	$vw \in M_s$. By the definition of the Edmonds-Gallai decomposition in
	Section~\ref{sec:diameter}, there is an even-length alternating path from an
	$M_s$-free vertex to $w$.  Therefore, $w \in D(G)$.  Let $vw$ be the
	$M_s$-edge incident to $w$. Then $v \in A(G)$. Since, by
	Theorem~\ref{thm:egd}, any maximum matching of $G$ induces an $A(G)$-perfect
	matching into $D(G)$ and $C(G)$ is completely matched by any maximum
	matching, it is only possible to exchange $vw$ for some edge $vz$, where $z$
	is a neighbor of $v$. Since $vw \notin M_t$, there is some step $i$, such
	that $M_i = M_{i-1} - vw + vz$, for some neighbor $z$ of $v$.  Therefore, the
	indegree of $w$ in $D_{\mathcal{S}}$ is at least one.  If $z$ is
	$M_s$-free, then there is some $rw$-path in $D_{\mathcal{S}}$ and we are
	done. Suppose this is not the case, so $M_{i}$ does not cover $z$, but $M_s$
	does. Then, there is some index $i' < i$, such that $M_{i'}$ does not cover
	$z$, but $M_{i'-1}$ does. Among all such indices, let $i'$ be the minimal one.
	So in particular the edge of $M_{i'-1}$ incident to $z$ is the same as in $M_s$.
	That is, there is some neighbor $v'$ of $z$ and
	some neighbor $z'$ of $v'$, such that $M_{i'} = M_{i'-1} - zv' + v'z'$. Since
	$z \in D(G)$, we have that the indegree of $z$ in $D_{\mathcal{S}}$ is at
	least one. Again, if $z$ is $M_s$-free then we are done. We can repeat this
	argument until we reach either an $M_s$-free vertex or either $M_s$ or $M_t$
	are not maximum, or to $\mathcal{S}$ is not a reconfiguration sequence
	transforming $M_s$ to $M_t$.  We conclude that there is an $rw$-path in
	$D_\mathcal{S}$ for each $w \in T$.  
\end{proof} 

\begin{proof}[Proof of Lemma~\ref{lemma:fpt:correctness}]
    We show that we can decompose $\tau$ into two transformations $\tau_1$ and
    $\tau_2$, such that $\tau_1$ is a transformation from $M_s$ to $M_U(S)$ for
    some $S \in \{U, W\}^{\mathcal{P}\cup\mathcal{C}}$ and $\tau_2$ transforms
    $M_U(S)$ into $M_t$. Our claim then follows, since $\alpha(S^*)$ is a
    shortest such transformation.

    We consider $\tau$ as a sequence $((e_i, f_i))_{1 \leq i \leq m}$ of
    exchanges, where $e_i = v_iw_i$ and $f_i = u_iv_i$. For each $1 \leq i \leq
    m$, we have that $M_i = M_{i-1} - e_i + f_i$. Note that for each exchange
    $(e_i, f_i)$, the vertex $v_i$ is $M_i$-free. Consider the transformations
    $\tau_1 := ((e_i, f_i))_{1 \leq i \leq m :\, u_i \in U}$ and $\tau_2 :=
    ((e_i, f_i))_{1 \leq i \leq m :\, u_i \in W}$.  We show that we can apply
    $\tau_1$ to $M_s$ to obtain an intermediate matching $M$ and that we can
    apply $\tau_2$ to $M$ to obtain $M_t$. For $j \in \{1, 2\}$, let
    $|\tau_j|$ be the length of $\tau_j$ and re-index the exchanges from $1$ to
    $|\tau_j|$.

    Suppose for a contradiction that for some index $i$, the exchange $(v_iw_i,
    u_iv_i)$ of $\tau_1$ cannot be performed and let $i$ be the smallest such
    index. 
    This means, that $v_iw_i$ is not in the current matching because some
    exchange $(v'_jw'_j, u'_jv'_j)$ of $\tau_2$ needs to be performed before
    $(v_iw_i, u_iv_i)$. Again, we assume that $j$ is the smallest such index.
    We distinguish two cases. 
    If $(v_iw_i, u_iv_i)$ cannot be performed since $u_i$ is not exposed, then
    $w'_j = u_i$, so $u'_j \in U$, which contradicts the construction of
    $\tau_2$. 
    On the other hand, if the exchange $(v_iw_i, u_iv_i)$ cannot be performed
    since $v_iw_i$ is not in the current matching, then $u'_jv'_j = v_iw_i$.
    Then we can perform exchanges $1, 2, \ldots, j-1$ of $\tau_2$ and $1, 2,
    \ldots, i-1$ of $\tau_1$ to obtain a matching $M'$. But $u_iv_i$ is an edge
    of $G$ and $u_i$ and $v_i$ are both $M'$-free. Therefore, $M'$ is not
    inclusion-wise maximal. Since $\tau$ contains all the exchanges we
    performed so far to obtain $M'$, there is a transformation from $M_s$ to
    $M_t$ via $M'$ that has length at most $m$ by Lemma~\ref{lemma:ptime:msr},
    a contradiction.

    Let us now prove that $\tau_2$ is a transformation from $M'$ to $M_t$.
    The argument is similar to the one above.
    Suppose for a contradiction that for some index $i$, the exchange $(v_iw_i,
    u_iv_i)$ of $\tau_2$ cannot be performed and let $i$ be the smallest such
    index.
    If $(v_iw_i, u_iv_i)$ cannot be performed since $u_i$ is not exposed, then
    $w'_j = u_i$, so $u'_j \in V$, which contradicts the construction of
    $\tau_1$.
    On the other hand, if the exchange $(v_iw_i, u_iv_i)$ cannot be performed,
    because $v_iw_i$ is not in the current matching, then $v'_jw'_j = v_iw_i$.
    Then we can perform exchanges $1, 2, \ldots, j$ of $\tau_1$ and $1, 2,
    \ldots, i-1$ of $\tau_2$ to obtain a matching $M'$.
    But $u_iv_i$ is an edge of $G$ and $u_i$ and $v_i$ are both $M'$-free.
    Therefore $M'$ is not inclusion-wise maximal.
    Since $\tau$ contains all the exchanges we performed so far to obtain $M'$,
    there is a transformation from $M_s$ to $M_t$ via $M'$ that has length at
    most $m$ by Lemma~\ref{lemma:ptime:msr}, a contradiction.

    We show that if $(e_i, f_i)$ is the last exchange of $\tau_1$ that
    involves the edge $f_i$, then $f_i$ cannot be moved by $\tau_2$, so $f_i$ is in the
    target matching $M_t$. 
    Suppose for a contradiction, that there is some
    index $j$, such that
    the $(e_j, f_j)$ of  $\tau_2$ has  $f_i = e_j$. Let $j$ be the smallest
    such index. Then we can apply $\tau_1$ to $M_s$ and the perform all
    exchanges of $\tau_2$ up to index $j$. Let $M'$ be the resulting matching.
    Then $M'$ is not inclusion-wise maximal, since $M' + e_i$ is a matching of
    $G$. Therefore, by Lemma~\ref{lemma:ptime:msr}, there is a transformation
    from $M_s$ to $M_t$ via $M'$ of length at most $m$, a contradiction.

    By swapping $\tau_1$ and $\tau_2$ and using an analogous argument, we may
    conclude that if $(e_i, f_i)$ is the last exchange of $\tau_2$ that
    involves $f_i$, then $f_i$ is in $M_t$.
    Therefore, each exchange $(e, f)$ of $\tau$, such $f$ is involved for the
    last time occurs as a final exchange involving $f$ either in $\tau_1$ or
    $\tau_2$. Therefore, we obtain $M_t$ by applying $\tau_1$ followed by
    $\tau_2$ to $M_s$.

    Let $M$ be the matching obtained by  applying $\tau_1$ to $M_s$ and
    consider the set $\mathcal{P}$ of paths and the set $\mathcal{C}$ of cycles
    in $M_s \symdiff M_t$. We claim that for each $F \in \mathcal{C} \cup
    \mathcal{P}$, when reaching $M$, we have either completely reconfigured $F$
    or performed no change at all on $F$. That is, for each $F \in \mathcal{C}
    \cup \mathcal{P}$, we have that $M \cap E(F)$ is either $M_s \cap E(F)$ or
    $M_t \cap E(F)$. Suppose for a contradiction, that this is not the case for
    some $F \in \mathcal{C} \cup \mathcal{P}$. If some edge $f \notin M_s \cup
    M_t$ is incident to $F$ in $M$, then, by the argument above, the edge $f$
    cannot be involved in any exchange of $\tau_2$ and will remain in the final
    matching, a contradiction. 
    If the matching $M$ contains at least one edge in  $M_s \cap E(F)$ and at
    least one edge in $M_t \cap E(F)$, then there is an $M$-free vertex $u \in
    V(F)$. 
    Due to the construction of $\tau_1$, we
    have that $u \in U$. But each exchange in $\tau_2$ only swaps two edges
    incident to a common vertex in $U$, so $u$ is $M_t$-free. Therefore, $F$
    must be a path and $u$ is one of its end vertices. Since all edges of $M$
    incident to $F$ are in $M_s \cup M_t$ and $u$ is $M_t$-free, we have
    that $M \cap E(F) = M \cap E(M_t)$, a contradiction.

    Finally, we choose $S \in \{U, W\}^{\mathcal{C}\cup\mathcal{P}}$, such that
    for $F \in \mathcal{C} \cup \mathcal{P}$, we have that $S_F = U$ if $M_t
    \cap E(F)  = M \cap E(F)$ and $S_F = W$ otherwise. By the definition of
    $\tau_1$ and our arguments above, we have that $M = M_U(S)$. By the
    optimality of the choice of $S^*$, we have that $|\alpha(S^*)| \leq
    |\alpha(S)| = m$.
\end{proof}

\section{Proofs Omitted from Section~\ref{sec:diameter}}

\begin{proof}[Proof of Theorem~\ref{thm:connected}]
	Let $\nu(G)$ be the size of a maximum matching in $G$ and let $A := A(G)$, $D
	:= D(G)$ and $C := C(G)$ the partition of $V(G)$ according to the
	Edmonds-Gallai decomposition.
	If $k < \nu(G)$ then
	$\matchgraph(G)$ is connected according to the proof of~\cite[Proposition
	1]{Ito:11}, so let $k = \nu(G)$.  From Theorem~\ref{thm:egd}, we have that
	$G[C]$ is perfectly matched by any maximum matching of $G$. Let us prove that
	$\matchgraph(G)$ is connected if and only if $G[C]$ admits a unique perfect matching.
	
	First consider the case that $G[C]$ admits two distinct perfect matchings. Let $M_1$ and $M_2$
	be extensions of the two distinct perfect matchings on $G[C]$ to maximum
	matchings of $G$. Assume by contradiction that there is a transformation and let $M$ be
	the matching just before the first modification of an edge in $G[C]$. By definition of $C$, 
	there is no $M$-alternating path from an $M$-free vertex to $C$ for any maximum matching
	$M$ of $G$. In particular, there is no exchange $M-e+f$, where $e \in E(G[C])
	\cap M$ and $f \in E(G) \setminus M$ for any maximum matching $M$ of $G$
	(since otherwise a maximum matching would not be perfect on $G[C]$). And since $M$ is $C$-perfect
	one cannot replace an edge in $G[C]$ by another.
	Therefore, all the matchings in the connected component of $M_1$ in $\matchgraph(G)$
	agree on $G[C]$. Thus $M_1$ and $M_2$ cannot be connected in $\matchgraph(G)$, and then
	$\matchgraph(G)$ is not connected.  
	
	On the other hand, suppose that all the maximum matchings agree on $G[C]$.  
	Let $M_1$ and $M_2$ be two maximum   matchings of $G$ and suppose that $G[M_1 \symdiff M_2]$ 
	contains an even-length $(M_1, M_2)$-alternating cycle $C$.  Then $C$ must be contained
	in $G[D \cup A]$. By the definition of $D$ and $A$ in Theorem~\ref{thm:egd},
	there is an $M_1$-alternating path from an $M_1$-free vertex to $C$.
	Therefore, by Lemma~\ref{lemma:cyclereconf}, $(G, M_1, M_2)$ is a \yes
	instances of \MR.  It follows that if $G[C]$ has a unique perfect matching,
	then $\matchgraph(G)$ is connected, so $\diam(\matchgraph(G))$ is finite.
\end{proof}

\begin{proof}[Proof of Lemma~\ref{lemma:dist:cover}]
	Let $e \in E$. Reconfiguring $M \cap E(C_e)$ to $N \cap E(C_e)$ requires that
	$p_v^1$ is exposed for some $v \in V$ such that $e$ is incident to $v$. If
	this is the case, then the reconfiguration requires precisely three
	exchanges. So we need $3|E|$ exchanges to reconfigure the cycles on four vertices.
	
	To reconfigure a cycle, we need that one of its neighbors is exposed. 
	By definition of $\tau(H)$ and by construction at least $\tau(H)$ vertices have to 
	be exposed at some step of the algorithm. 
	Let us prove that two steps are needed to expose a vertex $p_u^1$ if $p_v^1$ 
	is exposed. Since all but one vertex are covered by the matching, we have perfect matchings on
	all the $C_4$. Thus to expose $p_u^1$ we need to push the edge $p_u^1p_u^2$ on $p_u^2t$. Since
	there is an edge incident to $t$ (otherwise two vertices are exposed), we need to push this edge.
	Thus at least two steps are needed to expose $p_u^1$. Finally we need one step to expose one vertex
	$p_u^1$ at the beginning and one step to expose $t$ at the end.
	Therefore, $\dist(M, N) \geq 3|E(H)| + 2\tau(H)$. And one can easily prove that a transformation of 
	this length exists.
\end{proof}

\begin{proof}[Proof of Lemma~\ref{lemma:diam:ub}]
	Let $M^*$ and $N^*$ be two maximum matchings of $G$ of maximal distance in
	$\matchgraph(G)$. Note that due to the maximality of $\dist(M, N)$, for each $e \in
	E$ such that $M^*$ and $N^*$ leave no vertex of $C_e$ exposed, we have that
	either $M^* \cap E(C_e) = \{c_e^1c_e^2, c_e^3c_e^4\}$ and $N^* \cap C_e =
	\{c_e^2c_e^3, c_e^4c_e^1\}$ or vice versa (indeed a transformation where they are different
	can immediately be adapted into a transformation where they are the same). 
	So we may assume that $M^*$ agrees with
	$M$ and $N^*$ agrees with $N$ on each $e \in E$ such that $C_e$ contains no
	exposed vertex of $M^*$ or $N^*$. Now, due to the construction of $G$ and
	since $M$ and $N$ leave $t$ exposed, $\dist(M^*, M) \leq 3$ and
	$\dist(N^*, N) \leq 3$. Indeed if some $p_u^1$ or $q_i$ is exposed, we can expose $t$ in at most $3$ steps.
	Similarly, if a vertex of a $C_4$ is exposed, then we can expose $t$ in two steps.
	
	Hence we have
	\begin{equation}
	\diam(\matchgraph(G)) = \dist(M^*, N^*) \leq \dist(M, N) + 6
	\label{eq:dist:ub}
	\end{equation}
	Now let $M'$ ($N'$) be a maximum matching of $G$ that leaves $q_6$ exposed
	and agrees with $M$ ($N$) on each $C_e$ for each $e \in E$. Then, any
	reconfiguration sequence from $M'$ to $N'$ must include $M$ and $N$ and
	since $\dist(M', M) = \dist(N', N) = 3$, we have that $\dist(M', N') =
	\dist(M, N) + 6$. By~\eqref{eq:dist:ub}, the matchings $M'$ and $N'$ have
	maximal distance in $\matchgraph(G)$ and the lemma follows.
\end{proof}

\begin{proof}[Proof of Theorem~\ref{thm:dpc}]
        Note that the length of a shortest transformation between two matchings
        of $G$ is bounded by $O(|V^2|)$~\cite{Ito:11}.  Due to the polynomial
        size of a certificate, the question ``is the distance of two matchings
        in $\matchgraph(G)$ at most $\ell$`` is an \NP-question. Therefore,
        \EMDIST is in $\D^\P$. In order to show that \EMDIST and \EMDIAM are
        $\D^\P$-hard, we give a polynomial-time reduction from
        \textsc{Sat-Unsat}, which is complete for $\D^\P$~\cite{PY:84} and
        defined as follows.
	
	\begin{quote}
	  \textsc{Sat-Unsat} \\
	  \textbf{Input:} \textsc{3-CNF} formulas $F$ and $F'$\\
	  \textbf{Output:} \yes if and only if $F$ is satisfiable and $F'$ is not.
	\end{quote}
	
	Note that we may as well reduce from \textsc{Exact Vertex Cover}, which
	is also known to be $\D^\P$-complete~\cite[Theorem 5.4]{Cai:88}.
	However, to make the proof more self-contained, we include the simple
	reduction step from \textsc{Sat-Unsat} to \textsc{Exact Vertex Cover}.
	Let $F$ be a 3-CNF formula with $m$ clauses and $n$ variables. 
	Recall that $H$ has a $\ell$-clique if and only if
	$\overline{H}$ a $\ell$-stable set if and only if $\overline{H}$ has a vertex
	cover of size $|V|-\ell$.  Using the standard reduction from \textsc{3-SAT}
	to \textsc{Clique} and adding a clique of size $m-1$ to the resulting graph
	and we have
	\begin{equation}
	\begin{aligned}
	F \in \textsc{3-SAT}& \Leftrightarrow \text{$H$ has a maximum clique of size $m$}  \\
	& \Leftrightarrow \text{$\overline H$ has a minimum vertex cover of size $|V|-m$}\\
	& \stackrel{Lemma~\ref{lemma:dist:cover}}{\Leftrightarrow} \dist_{\matchgraph(G)}(M, N) = 3|E(\overline{H})| + 2(|V(\overline{H})|-m) \\
	& \stackrel{Lemma~\ref{lemma:diam:ub}}{\Leftrightarrow} \diam_{\matchgraph(G)} = 3|\overline{H})| + 2(|V(\overline{H})| - m + 3)
	\end{aligned}
	\label{eq:app:satdist}
	\end{equation}
	On the other hand, using similar arguments we have
	\begin{equation}
	\begin{aligned}
	  F \notin \textsc{3-SAT} & \Leftrightarrow \dist_{\matchgraph(G)}(M, N) = 3|E(\overline{H})| + 2(|V(\overline{H}|-m + 1) \\
	    & \Leftrightarrow \diam_{\matchgraph(G)} = 3|E(\overline{H})| + 2(|V(\overline{H}| -m + 4)
	\end{aligned}
	\label{eq:app:unsatdist}
	\end{equation}
	
	Let $(F_1, F_2)$ be an instance of \textsc{Sat-Unsat}. Let $H_1$ $(H_2)$ be
	the graph constructed from $F_1$ ($F_2$) in the reduction from \textsc{Sat}
	to \textsc{Clique}. Furthermore, let $G_1$, $M_1$, and $N_1$ ($G_2$, $M_2$ and $N_2$) be the graph and the two matchings obtained from $\overline{H_1}$ 
	($\overline{H_2}$) according to the construction given above. We may obtain
	in polynomial time from $(F_1, F_2)$ an instance $(G, M, N, k, \ell)$ of \EMDIST
	as follows: Let $G$ consist of two copies of $G_1$ and one copy of $G_2$ and
	let $M$ ($N$) be the obvious matching in $G$ created from two copies of $M_1$
	($M_2$) and one copy of $N_1$ ($N_2$).  Finally, let $k$ be the size of a maximum matching of $G$ and $\ell := 6|E(G_1)| +
	4(|V(G_1)| - m_1) + 3|E(G_2)| + 2(|V(G_2)| -m_2) + 2$.  Clearly, the instance
	$I$ can be constructed from $(F_1, F_2)$ in polynomial time.
	Using~\eqref{eq:app:satdist} and~\eqref{eq:app:unsatdist} it is readily verified that 
	\begin{align*}
	(F_1, F_2) \in \textsc{Sat-Unsat} \Leftrightarrow \dist_{\matchgraph(G)}(M, N) = \ell 
	\end{align*}
	and that
	\begin{align*}
	(F_1, F_2) \in \textsc{Sat-Unsat} \Leftrightarrow \diam_{\matchgraph(G)} = \ell+6
	\end{align*}
\end{proof}

\end{document}